\documentclass[11 pt]{article}
\input{packages.sty}

\title{Error Correction in Lattice Quantum Electrodynamics  with Quantum Reference Frames\vspace{0.25cm}}
\author[1]{Elias Rothlin}

\author[1]{Carla Ferradini}

\author[1,2,3]{Lin-Qing Chen\vspace{0.15cm}}

\affil[1]{Institute for Theoretical Physics, ETH Zurich, Wolfgang-Pauli-Strasse 27,
8093 Zürich, Switzerland}
\affil[2]{Institute of Quantum Optics and Quantum Information Vienna (IQOQI-Vienna), Austrian Academy of Sciences, Boltzmanngasse 3, 1090 Vienna, Austria}
\affil[3]{Faculty of Physics, University of Vienna, Boltzmanngasse 5, 1090 Vienna, Austria}

\date{}

\begin{document}

\maketitle

\begin{abstract}

\let\thefootnote\relax\footnote{E-mail: \href{mailto:erothlin@student.ethz.ch}{erothlin@student.ethz.ch}, \href{mailto:cferradini@phys.ethz.ch}{cferradini@phys.ethz.ch}, \href{mailto:linqing.nehc@gmail.com}{linqing.nehc@gmail.com}.}

Is gauge symmetry merely a redundancy in our description, or does it carry a deeper information-theoretic significance? Quantum error-correcting codes (QECCs) show that redundancy can serve as a resource for protecting information against noise. In this work, we ask whether gauge theories can be understood in similar terms, and make this idea concrete in lattice quantum electrodynamics (QED), building on and extending earlier works that established a bridge between gauge systems, stabilizer codes, and quantum reference frames (QRFs). 
For Abelian gauge groups, we show that explicit recovery operations can be constructed using group-theoretical methods for error sets determined by both ideal and non-ideal QRFs. Applied to lattice QED, this yields two QECC structures: one in the pure-gauge sector and one including fermions. We construct a gauge-field QRF based on spanning trees of the lattice and a fermionic field QRF from the matter field, thereby making explicit how physical information is encoded. While  the syndromes of gauge-violating errors associated with constraint measurements are generically degenerate,
 QRFs resolve this degeneracy and single out families of correctable errors. This establishes lattice QED as a QECC beyond the stabilizer setting and shows concretely how gauge symmetry provides an  encoding structure that supports error correction.
\end{abstract}
\newpage
\tableofcontents

\section{Introduction} \label{sec:introduction}
Gauge theories provide the most successful framework for describing all fundamental interactions. Notably, they are not formulated directly in terms of  gauge-invariant quantities, but rather in terms of kinematical variables in a redundant description subject to constraints \cite{dirac1964, Hennex_Teitelboim_1992}.  Different gauge choices then yield different, but physically equivalent, descriptions of the same system.  This raises a basic conceptual question concerning the role of such redundancy:  
Is it merely a convenient way of describing physics, or  is it instead a necessary feature for consistency \cite{Rovelli:2013fga}, reflecting deeper structural principles of nature?

Gauge theories can be naturally understood in relational terms. As Henneaux and Teitelboim write in the opening paragraph of \textit{Quantization of Gauge Systems} \cite{Hennex_Teitelboim_1992}: ``A gauge theory may be thought of as one in which the dynamical variables are specified with respect to a  reference frame whose choice is arbitrary at every instant of time."
Gauge-invariant observables are those that do not depend on any particular choice of the reference frame. Inspired by Dirac quantization of gauge theories \cite{dirac1964}, the perspective-neutral (PN) approach to quantum reference frames (QRFs) \cite{Vanrietvelde2020changeof, delahamette_2021, Krumm_2021, H_hn_2022} provides a procedure for obtaining quantum states relative to a chosen QRF from a so-called perspective-neutral (PN) state. These PN states are essentially the physical states in a gauge theory and contain only relational information. They encompass all  internal perspectives, while descriptions relative to different frames are obtained by applying different reduction maps to the PN states. We note that other frameworks of QRFs \cite{Bartlett_2007, Gour_2008, Gour_2009, castroruiz2023, Miyadera_2016, Loveridge2018}
are suitable to different physical questions, and the relationships between these approaches remain an open research topic.

From an information-theoretic point of view, gauge symmetry not only restricts the admissible states and physical observables of a theory, but also governs how physical information is redundantly encoded within a larger space. This is very similar to quantum error correction (QEC).
In quantum computing, quantum error correction codes (QECCs) are essential to protect quantum information from noise by redundantly encoding logical data into a larger quantum system in a way that keeps the logical content recoverable even when environmental interactions perturb the quantum state. Different encodings amount to different arrangements of the same logical information into this larger space.
Thus, while QRFs make explicit the relational character of physical information in the presence of gauge-redundancy, QECCs turn redundant encoding structures into information robustness. 
Together they provide the natural tools for analyzing the information structure of gauge theory, suggesting that gauge freedom may be understood as a concrete mechanism for  encoding, and potentially protecting, physical information. 

The idea of using QEC descriptions in  settings well beyond quantum computing has already proven fruitful in a range of contexts: 
The renormalization-group flow can be reinterpreted as as an encoding map and preserves information across scales \cite{F1_2022, F2_2022}, QECCs as tensor-network toy models have been constructed as bulk-to-boundary encodings for holography \cite{Almheiri_2015, HaPPY_2015, PP_2017, Harlow_2017, Kibe2022}, and there have also been investigations of the error correction capacities of gauge systems by leveraging a superselection rule to protect logical data \cite{bao2023}.

The initial bridge between QECCs (particularly stabilizer codes), gauge systems and PN quantum reference frames has been made in very recent studies \cite{sem_proj,CCHM_2024}. In both works, the stabilizer group of a quantum code \cite{gottesman1997stabilizer} is interpreted as a gauge group, an algebraic symmetry whose constraint surface defines the code space, in close analogy with how gauge symmetry selects the physical Hilbert space of a constrained system.  Different types of QRFs are identified for these codes, and these are related to different sets of correctable errors.  In particular, \cite{CCHM_2024} shows that the  gauge-fixing operators associated with ideal QRFs form a correctable error set, dual to a set of Pauli errors.

In the present work, we investigate whether the information-encoding structure associated with gauge symmetry can be used to identify notions of protection and recovery against gauge-violating noise in gauge systems beyond stabilizer codes.
We address this question in lattice quantum electrodynamics (QED), whose continuum limit is the simplest Abelian gauge field theory. Working on the lattice also avoids the technical complication that, in the continuum, the group of spacetime-dependent gauge transformations is not locally compact.

Building on the insights from \cite{sem_proj,CCHM_2024} and extending them beyond stabilizer codes, we show that for Abelian gauge groups, explicit information recovery channels can be constructed by group-theoretical methods for both ideal and non-ideal QRFs. We then construct two types of QRFs for lattice QED: an ideal QRF from the gauge field and a non-ideal QRF from the fermionic field, and use them to cast lattice QED as a QECC. Since a choice of QRF amounts to a choice of gauge-fixable subsystem, these QRFs and their associated reduction maps give insights into the encoded physical information. This, in turn, allows us to identify sets of correctable gauge-violating errors in lattice QED. Although a violation of the gauge constraints is necessary for error detection, the full class of errors
associated with the same syndrome is generically highly degenerate, and choosing a QRF resolves this degeneracy.
This makes the syndrome sufficient for recovery, and we further construct the explicit recovery channels from charge-sector measurements based on measuring the local constraints. For a summary of the ideas and construction, see \cref{ssec:summary}.

The connection between QECCs and quantum gauge theories is intriguing at both practical and fundamental level. For quantum simulations of lattice gauge theories, fault-tolerant encodings based on Gauss' law provide a way to reduce the simulation overhead compared to naive schemes and have been the focus of recent study \cite{Stryker_2019, Rajput2023, spagnoli2024, CLLL_2024, GL_2023, yao_2025, pato_2026, spagnoli_2026}. More broadly, the promotion of gauge redundancy to an explicit error correcting structure can provide steps toward a deeper understanding of the role of gauge symmetry through the lens of quantum information.

\paragraph{Structure.}
An outline of the main ideas and the results of this work is given in \cref{ssec:summary}. Because this paper brings together several different research fields, \cref{sec:intro_to_QECQRFGT} provides introductions to the essential basics of gauge systems, the perspective-neutral framework of QRFs and quantum error correction.
Readers already familiar with these topics may wish to proceed directly to \cref{sec:QEC_in_Gauge}, where we discuss gauge systems with a perspective-neutral QRF structure as QECCs and their correctable errors and recovery channels.
In \cref{sec:QRFs_in_LQED}, we introduce lattice QED and construct QRFs for this theory. We use the tools from \cref{sec:QEC_in_Gauge} and \cref{sec:QRFs_in_LQED} to understand lattice QED as a QECC in \cref{sec:LQED_as_QECC}, where we analyze some of its sets of correctable errors and the associated recoveries. 
As a first step toward extending these QRFs and error-correction structures from lattice to continuum QED, \cref{sec:continuum} discusses their continuum counterparts.
We conclude in \cref{sec:conclusion} with a discussion and outlook.

\paragraph{Note.}
The present paper builds in large part on E.R.'s Master thesis (July 2025), available online \cite{mthesis}. Relative to the thesis version, this paper contains improved exposition and substantial new results.

\paragraph{Note added.}  After the completion of E.R.'s master thesis, we became aware of Javier Pagan Lacambra,  Aidan Chatwin-Davies,  Masazumi Honda and Philipp A. Höhn pursuing a closely related topic
\cite{PaganLacambra_forthcoming}, scheduled to appear simultaneously on the arXiv.

\subsection{Overview of the Paper} 
\label{ssec:summary}
The purpose of this section is to provide an overview of the main ideas of this paper and summarize the results, which are discussed in greater detail in the main body of the text.

\paragraph{Quantum error correction and gauge systems.}
\label{ssec:QEC_and_gauge}

The connection between gauge systems and quantum error correction codes (QECCs) that we adopt is based on identifying the gauge-invariant physical states in the physical Hilbert space $\HS_\phys$, embedded in the kinematical space $\HS_\kin$ of the gauge system, with the code subspace $\HS_\code$ embedded in a larger Hilbert space $\HS_\physical$\footnote{Since the term ``physical" appears in both contexts with different meanings, we will use distinct subscripts to separate ``physical" as gauge-invariant from the physical system  used to encode the logical information in QEC.} that stores the encoded information. In short, these identifications are:
\begin{center}
    \begin{tabular}{c c c}
        Gauge System & &  QECC \\
        $\HS_\phys$ & $\leftrightarrow$ & $\HS_\code$ \\
        $\HS_\kin$ & $\leftrightarrow$ & $\HS_\physical$
    \end{tabular}.
\end{center}

\paragraph{Generalized stabilizer codes.}
Stabilizer QECCs define the code space via a symmetry group of Pauli operators $\mathcal{S}$, called the stabilizer group \cite{gottesman1997stabilizer}. The code states are all the invariant states under $\mathcal{S}$, i.e., $S\ket{\psi}_\code = \ket{\psi}_\code$ for all $S\in\mathcal{S}$. In the context of gauge systems, if the gauge transformations form a group $G$, then the physical states are invariant under a (possibly projective) unitary representation $U(g)$ of $G$, i.e., $U(g)\ket{\psi}_\phys = \ket{\psi}_\phys$. Extending the identifications above, we thus interpret $G$ as a ``generalized stabilizer group'' of the gauge system when viewed as a QECC. We can then understand the error correction properties of the gauge system through its representation-theoretic structure (see \cref{fig:gauge_structure_QEC}). For stabilizer codes, this structure was elaborated in detail in \cite{CCHM_2024}. 

Due to the gauge symmetry, the kinematical Hilbert space $\HS_\kin$ decomposes into a direct sum of the charge sectors $\HS_\kin = \bigoplus_{\bm{q}}\HS_{\bm{q}}$ corresponding to isotypes of irreducible representations of $G$. We index the irreducible representations by the charges $\bm{q}$ corresponding to their highest weight, assuming for simplicity that $G$ is compact and its Lie algebra semi-simple.
In this decomposition, $U(g)$ acts on $\HS_{\bm{q}}$ as $U_{\bm{q}}(g)$.
The gauge-invariant states live in the zero-charge sector corresponding to the trivial representation, $\HS_\phys = \HS_{\bm{0}}$.  
A bounded operator $A\in\mathcal{B}(\HS_\kin)$ is gauge-invariant if it is invariant under conjugation, $U(g)AU(g)^\dagger = A$, and maps states in $\HS_\phys$ into $\HS_\phys$. By $A_{\bm{q}}\in\mathcal{B}(\HS_\kin)$ we denote operators which satisfy
\begin{equation}
    A_{\bm{q}}\ket{\psi}_\phys\in\HS_{\bm{q}}\quad\forall\ket{\psi}_\phys\in\HS_\phys.
\end{equation}
Then, $A_{\bm{q}}^\dagger$ maps states in $\HS_{\bm{q}}$ to $\HS_\phys$.

Now, consider a physical state $\ket{\psi}\in\HS_\phys$. The occurrence of an error described by an operator $E$, e.g., in a quantum simulation of the theory, will map $\ket{\psi}$ into an error state $E\ket{\psi}$, which may generally be supported over all of charge sectors $\HS_{\bm{q}}$. However, if we can determine through some measurement the charge $\bm{q}$, this projects the state onto one of the charge sectors, i.e., the post-measurement state is $\Pi_{\bm{q}}E\ket{\psi} \in\HS_{\bm{q}}$ (up to normalization), where $\Pi_{\bm{q}}$ is the orthogonal projector onto $\HS_{\bm{q}}$. This is analogous to the role of syndrome measurements in stabilizer codes (see \cref{ssec:QEC}).
We can complete an error correction protocol if we can choose an operator $A_{\bm{q}}$ for each charge sector $\HS_{\bm{q}}$, such that applying $A_{\bm{q}}^\dagger$ to the error state after measuring the charge $\bm{q}$ maps it back to $A_{\bm{q}}^\dagger\Pi_{\bm{q}}E\ket{\psi} \in \HS_\phys$. This corresponds to the recovery operation
\begin{equation}            \rho_{\text{err}}\mapsto\sum_{\bm{q}}A_{\bm{q}}^\dagger\Pi_{\bm{q}}\rho_{\text{err}}\Pi_{\bm{q}}A_{\bm{q}},
\end{equation}
which is a valid quantum channel if $\Pi_{\bm{q}}A_{\bm{q}}A_{\bm{q}}^\dagger\Pi_{\bm{q}} = \Pi_{\bm{q}}$. The error $E$ is appropriately corrected by this recovery if $A_{\bm{q}}^\dagger\Pi_{\bm{q}}E|_{\HS_\phys} \propto I|_{\HS_\phys}$, where $I|_{\HS_\phys}$ is the identity operator restricted to the subspace $\HS_\phys$. These conditions are satisfied for all errors $E$ which are linear combinations of the operators $A_{\bm{q}}$ if all $A_{\bm{q}}$ are unitary on $\HS_\kin$. Therefore, the choice of operators $A_{\bm{q}}$ in the channel above determines which errors $E$ will be correctable.
In the general case, the choice of such unitary operators $A_{\bm{q}}$ is not unique and their existence not guaranteed\footnote{For example, in finite dimensions, $\dim\HS_{\bm{q}}<\dim\HS_\phys$, then no unitary can map $\HS_\phys$ into $\HS_{\bm{q}}$.}.
Indeed, there are potentially none or many different correctable error sets for any given gauge system acting as a QECC, all based on this active recovery scheme.

We will exploit this idea to construct recovery operations and find correctable error sets in the case of lattice quantum electrodynamics, in which case $G$ is Abelian and consists of local $\operatorname{U}(1)$ gauge transformations.

\begin{figure}[h]
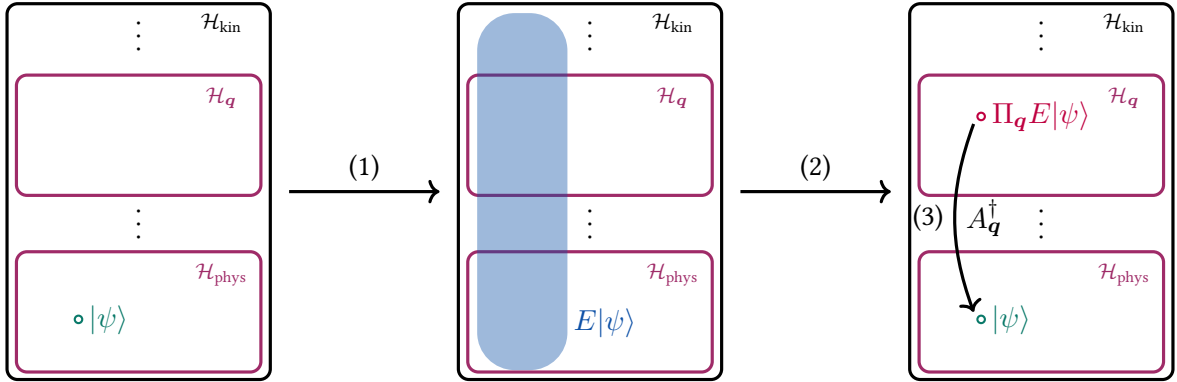

    \centering
    $\centertikz{
        \begin{scope}[shift={(-6,0)}]
            \node[box,minimum height=5cm,minimum width=3.5cm, draw= black] at (0,0) {}; 
            \node[box,minimum height=1.6cm,minimum width=3.25cm, draw= newpurple] at (0,-1.6) {};
            \node[box,minimum height=1.6cm,minimum width=3.25cm, draw= newpurple] at (0,0.75) {};
            \node[color=newpurple] at (1.1,1.25) {\scriptsize$\HS_{\bm{q}}$};
            \node[] at (1.1,2.25) {\scriptsize$\HS_\kin$};
            \node[color=newpurple] at (1.1,-1.1) {\scriptsize$\HS_\phys$};
            \node[rotate=90] at (0,-0.4) {$\dots$};
            \node[rotate=90] at (0,2.1) {$\dots$};
            \node[circ,color=newgreen] at (-0.8, -1.7) {};
            \node[color=newgreen] at (-0.4, -1.7) {$\ket{\psi}$};
        \end{scope}
        \draw[very thick, ->] (-4,0) -- node[above] {(1)} (-2,0);
        \begin{scope}[shift={(0,0)}]
            \node[box,minimum height=5cm,minimum width=3.5cm, draw= black] at (0,0) {}; 
            \node[box,minimum height=1.6cm,minimum width=3.25cm, draw= newpurple] at (0,-1.6) {};
            \node[box,minimum height=1.6cm,minimum width=3.25cm, draw= newpurple] at (0,0.75) {};
            \node[color=newpurple] at (1.1,1.25) {\scriptsize$\HS_{\bm{q}}$};
            \node[] at (1.1,2.25) {\scriptsize$\HS_\kin$};
            \node[color=newpurple] at (1.1,-1.1) {\scriptsize$\HS_\phys$};
            \node[rotate=90] at (0,-0.4) {$\dots$};
            \node[rotate=90] at (0,2.1) {$\dots$};
            \node[box,minimum height=4.75cm,minimum width=1.2cm,rounded corners=0.5cm, fill=myblue, opacity = 0.4] at (-0.9,0){};
            \node[color=myblue] at (0.2,-1.75){$E\ket{\psi}$};
        \end{scope}
        \draw[very thick, ->] (2,0) -- node[above] {(2)} (4,0);
        \begin{scope}[shift={(6,0)}]
            \node[box,minimum height=5cm,minimum width=3.5cm, draw= black] at (0,0) {}; 
            \node[box,minimum height=1.6cm,minimum width=3.25cm, draw= newpurple] at (0,-1.6) {};
            \node[box,minimum height=1.6cm,minimum width=3.25cm, draw= newpurple] at (0,0.75) {};
            \node[color=newpurple] at (1.1,1.25) {\scriptsize$\HS_{\bm{q}}$};
            \node[] at (1.1,2.25) {\scriptsize$\HS_\kin$};
            \node[color=newpurple] at (1.1,-1.1) {\scriptsize$\HS_\phys$};
            \node[rotate=90] at (0,-0.4) {$\dots$};
            \node[rotate=90] at (0,2.1) {$\dots$};
            \node[circ,color=newgreen] at (-0.8, -1.7) {};
            \node[color=newgreen] at (-0.4, -1.7) {$\ket{\psi}$};

            \node[circ,color=purple] (Pi) at (-0.8, 1.) {};
            \node[color=purple]  at (-0., 1.) {$\Pi_{\bm{q}}E\ket{\psi}$};
            \draw[very thick, ->] (-0.9,0.9) to [in=110, out=250] node[left] {(3)} node[right] {$A_{\bm{q}}^\dagger$} (-0.9, -1.6);
        \end{scope}

    }$
    \caption{Under the gauge symmetry, $\HS_\kin$ decomposes into a direct sum of charge sectors $\HS_{\bm{q}}$, where $\HS_\phys$ corresponds to the trivial representation. (1) An error $E$ maps a physical state $|\psi\rangle\in\HS_\phys$ to the error state $E|\psi\rangle$ which may have support spread across every charge sector. (2) A measurement collapses the state to one of the charge sectors, $\Pi_{\bm{q}}E|\psi\rangle\in\HS_{\bm{q}}$, and (3) applying an operator $A_{\bm{q}}^\dagger$ recovers the original state if $A_{\bm{q}}^\dagger\Pi_{\bm{q}}E|\psi\rangle\propto|\psi\rangle$. }
    \label{fig:gauge_structure_QEC}
\end{figure}

\paragraph{Perspective-neutral quantum reference frames and error correction.}
A useful tool to make the connection between gauge systems and QECCs explicit is the perspective-neutral construction of QRFs \cite{Vanrietvelde2020changeof, HSL_2021, delahamette_2021, Krumm_2021, H_hn_2022}. In gauge systems, perspective-neutral QRFs are subsystems endowed with a set of orientation states which transform covariantly under the action of the gauge group $G$.
Importantly for this work, such QRFs constitute gauge-fixable subsystems and identifying a QRF splits the kinematical space into physical and gauge-redundant parts.
The perspective-neutral construction thus provides a convenient setting in which we can analyze the relationship between redundant kinematical degrees of freedom and the error-correction properties of gauge systems.

Previous investigations on the connection between stabilizer codes and gauge systems also employed the perspective-neutral framework of QRFs \cite{sem_proj, CCHM_2024}. Notably, \cite[Theorem 4.13]{CCHM_2024} showed that there is a 1-to-1 relation between correctable error sets and ideal QRFs (of a specific type) for stabilizer codes. Moreover, it was shown that every ideal QRF supports a set of correctable errors consisting of gauge-fixing projectors \cite[Lemma 5.2]{CCHM_2024}.

This latter result can be extended from stabilizer codes to general perspective-neutral systems with non-ideal QRFs (see \cref{thm:correctable_gauge-fixing}). In particular, we consider a QRF $R$ of a gauge system $\HS_\kin = \HS_R\otimes \HS_S$ that transforms under the representation $U_R(g)\otimes U_S(g)$ of the compact gauge group $G$. The orientation states of $R$, $\{\ket{\phi(g)}_R\, | \, g\in G\}$ transform according to $U_R(h)\ket{\phi(g)}_R = \ket{\phi(hg)}_R$ and the gauge-fixing operators $\mathcal{P}^g\propto\ket{\phi(g)}\bra{\phi(g)}_R\otimes I_S$ fix the orientation of $R$ to $\ket{\phi(g)}_R$. We show that these gauge-fixing operators are correctable errors provided their orientation states are orthogonal. This generalization is crucial for determining the correctable error sets associated with QRFs that can be identified within gauge systems.

Furthermore, if $G$ is Abelian and the QRF $R$ ideal, \cref{prop:A_q_from_gf} connects the correctable gauge-fixing operators to the recovery protocol via charge measurements illustrated in \cref{fig:gauge_structure_QEC} by constructing a set of operators $\{A_{\bm{q}}\}_{\bm{q}}$ from gauge-fixing operators, inspired by the duality between Pauli errors and gauge-fixing errors worked out in \cite{CCHM_2024}. If $R$ is non-ideal, this result can still be applied if $G$ contains some subgroup $H$ for which $R$ becomes an ideal QRF. This leads to a recovery based on coarse-grained measurements of the charge sectors.

\paragraph{Lattice Quantum Electrodynamics.}
We study the connection between gauge systems and quantum error correction using lattice quantum electrodynamics (QED) as an example, working in the Kogut–Susskind Hamiltonian formulation \cite{KS1975} in temporal gauge. We denote the set of oriented {\em links} (or {\em edges}) of the lattice by $\mathcal{L}$ and the set of {\em vertices} (or {\em sites}) by $\mathcal{V}$. The links $l\in\mathcal{L}$ are assigned a Hilbert space $L^2\left(\operatorname{U}(1)\right)$ representing the gauge field degrees of freedom. Link states in $L^2\left(\operatorname{U}(1)\right)$ can be expressed in terms of the $\operatorname{U}(1)$-position basis $\{\ket{e^{i\theta}}_l\}_{\theta\in[0, 2\pi)}$ or the Fourier transformed electric flux basis $\{\ket{k}\}_{k\in\mathbb{Z}}$.
On each link, the electric flux operator $\epsilon_l\ket{k}_l = k\ket{k}_l$ is canonically conjugate to the $\operatorname{U}(1)$-position operator $U_l\ket{e^{i\theta}}_l = e^{i\theta}\ket{e^{i\theta}}_l$, i.e., $[\epsilon_l, U_l] = U_l$. 
In the pure gauge field sector, the net electric flux at any vertex $v\in\mathcal{V}$ is zero,
\begin{equation}
    \mathcal{C}_v = \sum_{l_{\text{out}}}\epsilon_{l_{\text{out}}} - \sum_{l_{\text{in}}}\epsilon_{l_{\text{in}}}\overset{\text{on }\HS_\phys}{=}0,
\end{equation}
where the sums run over all outgoing and incoming links at $v$, $l_{\rm out} = [v, *]$ and $l_{\rm in} = [*, v]$.
This constraint enforces Gauss' law and generates the group of gauge transformations $\mathcal{G}$.

Fermionic matter and its dynamics can be introduced in the model by adding field operators $\psi_v, \psi_v^\dagger$ on each site $v$ of the lattice. We employ the staggered fermionic field prescription \cite{KS1975}. The constraints of the theory are then modified to include the charge density $\rho_v$, defined in terms of the staggered fields, as $\mathcal{C}_v = \sum_{l_{\text{out}}}\epsilon_{l_{\text{out}}} - \sum_{l_{\text{in}}}\epsilon_{l_{\text{in}}} - \rho_v$, representing Gauss' law $\nabla\cdot E=\rho$ on the lattice.

\paragraph{QRFs for lattice QED.}
We construct two types of QRFs for lattice QED by applying the perspective-neutral framework.
For the pure gauge theory, an ideal QRF can be chosen as the space of links on a spanning tree (a loop-free subgraph connecting all vertices with a unique path) of the lattice (\cref{thm:LGT_QRF_pure}). Gauge-fixing these QRFs maps the physical degrees of freedom, which are encoded in closed loops on the lattice, to the links outside of the spanning tree. This is similar to the spanning tree QRFs introduced in \cite{AHS_25}, which were used to  classify relational entanglement entropies for lattice gauge theories. Spanning trees have commonly been used to gauge-fix lattice gauge theories without explicitly invoking QRFs, see for example \cite{GL_2010_textbook} for an introduction.

In the presence of staggered fermions, we can use the fermionic field as a QRF (\cref{thm:matter_qrf}). This frame can resolve and be used to gauge-fix the local phase of $\psi_v$. Due to the fermionic nature of the matter, there is only a qubit degree of freedom available to parametrize a $\operatorname{U}(1)$-action on each site of the lattice. Therefore, these constitute non-ideal QRFs with non-orthogonal orientation states.

\paragraph{Lattice QED as a generalized stabilizer code.}
According to the connection between gauge systems and QECCs outlined above, we can understand lattice QED as an error correction code. There are two types of such codes, the first coming from pure gauge lattice QED and the second from lattice QED with staggered fermions. These differ in the structure of the kinematical Hilbert space: While the former consists only of $L^2(\operatorname{U}(1))$ spaces, the latter also includes a qubit space $\mathbb{C}^2$ per site due to the fermions. Both types can correct subsets of gauge-violating errors.

For pure gauge lattice QED, applying \cref{thm:correctable_gauge-fixing} to the spanning tree QRFs described above yields correctable error sets which correspond to the gauge-fixing operators on a spanning tree $T$. Via \cref{prop:A_q_from_gf}, these are equivalent to the set of all the $U$-type operators supported on $T$ (\cref{prop:correctable_U_on_T_in_QED}),
\begin{equation}
    \left\{\bigotimes\nolimits_{l\in T}U_l^{m_l} \ \middle| \ \forall l\in T:m_l \in \mathbb{Z}\right\}.
\end{equation}
Physically, these operators act as shifts in electric flux on the spanning tree $T$.
We can also find other correctable error sets by following the logic outlined in \cref{ssec:QEC_and_gauge}. In the case of $\operatorname{U}(1)$ lattice QED, we can measure the charge of a state by measuring the Gauss constraints $\mathcal{C}_v$. With an appropriate choice of operators $A_{\bm{q}}$, we can find that the set of single electric flux shifts
\begin{equation}
    \{U_l^m \}_{m\in\mathbb{Z}, l\in\mathcal{L}}
\end{equation} 
can be corrected (\cref{prop:correctable_U_errors}). This is the $\operatorname{U}(1)$ version of a result from \cite{CLLL_2024}, who showed that for pure gauge systems with finite gauge groups, gauge-violating electric flux errors on any single link are correctable errors.

For staggered fermion lattice QED, the fermionic field QRF supports correctable error sets of orthogonal gauge-fixing operators. 
These errors can be combined to an equivalent set of local occupation number flips with a relative phase at arbitrary sites, which can be expressed in terms of the field operators $\psi_v = \ket{0}\bra{1}_v$ and $\psi_v^\dagger = \ket{1}\bra{0}_v$ as $A_v(\alpha_v) = e^{i\alpha_v}\psi_v + e^{-i\alpha_v}\psi_v^\dagger$. Equivalent to the gauge-fixing operators, we thus find the correctable error set (\cref{prop:correctable_matter_errors})
\begin{equation}
    \left\{\bigotimes\nolimits_{v \in \mathcal{V}} A_v(\alpha_v)^{r_v} \ \middle| \ \forall v:r_v \in \{0, 1\}\right\}.
\end{equation}
By a different recovery based on a coarse-grained measurement of the constraints, we can also correct the set
\begin{equation}
    \{U_l^m \}_{m\in\mathbb{Z}, l\in\mathcal{L}}\cup\{A_v(\alpha_v)\}_{v\in\mathcal{V}}.
\end{equation}
Thus, any single electric flux error on a link or flip in occupation number on a site is correctable.
This result (\cref{thm:correctable_errors_full_QED}) was previously shown in \cite{Rajput2023, spagnoli2024} for lattice QED with a truncated local $\mathbb{Z}_2$-gauge group. In particular, \cite[Theorem 1]{spagnoli2024} shows that using Gauss's law to correct errors can protect against any single flip in electric flux on the links or flip of fermion occupation number on the sites \cite{Rajput2023, spagnoli2024}. Compared to the $\mathbb{Z}_2$-version, the $\operatorname{U}(1)$-theory can correct any integer error in electric flux and the more general occupation number flips including relative phases.

\paragraph{The continuum counterparts.} Finally, we identify  the continuum analogues  of both types of QRFs and of the associated correctable errors in \cref{sec:continuum}. On the gauge-field side, spanning-tree QRFs can be related to contour-gauge constructions \cite{Anikin2025}, while the lattice $U$-type errors correspond to smeared Wilson line operators. On the matter side,  the fermionic non-ideal QRF admits a continuum analogue in the field-basis description of the Dirac field, with orientation states given formally by functional superpositions of eigenstates of the local mode-density operator weighted by a local
$U(1)$-phase parameter. The corresponding family of local fermionic error operators take the form of smeared phase-rotated Majorana type operators. We further discuss the challenges for
a full QECC construction of continuum QED, which will require new field-theoretic tools.

\section{Preliminaries}
\label{sec:intro_to_QECQRFGT}

In this section, we review the basics and relevant features for gauge theory, the perspective-neutral formalism of Quantum Reference Frames (QRFs), and Quantum Error Correction (QEC). The individual subsections are intended to provide a concise overview of each topic. For more detailed introductions, see \cref{app:QRFs} for the perspective-neutral formalism and \cref{app:QEC} for QEC.
Readers already familiar with these formalisms may wish to skip this section.

\subsection{Gauge Theory}

We first briefly review the Hamiltonian formalism of gauge systems and their quantization \cite{dirac1964, Hennex_Teitelboim_1992}.  A gauge theory can be understood as a constrained dynamical system whose variables do not represent independent physical degrees of freedom, but are restricted by constraints. In the canonical formalism, one starts from a classical phase space $P$ together with a Hamiltonian in the following form:
\begin{equation}
H_{\rm tot} = H  + \lambda^I C_I
\end{equation}
where $\lambda^I$ are Lagrange multipliers and $C_I$ are the constraints. The \emph{primary constraints} hold without imposing the equation of motion,  while requiring their preservation under time evolution $\{H, C_I\} = 0$ may generate further \emph{secondary constraints}. The primary and secondary constraints together identify the constraint surface $\Sigma \subset P$ within phase space. For an Abelian gauge theory, such as electrodynamics and linearized gravity, the constraints commute among themselves on the constraint surface, i.e., $\{C_I, C_J \} \approx 0$ \footnote{ We use $\approx$ to denote ``weak equality", i.e., equality on the constraint surface.}. For non-Abelian gauge theories, for instance $\operatorname{SU}(N)$ Yang-Mills theory, the constraints are closed under the Poisson bracket, $\{C_I, C_J\}\approx  f_{IJ}^K C_K$, with structure constants or functions $f_{IJ}$.
An (infinitesimal) gauge transformation of a phase-space function generated by a constraint $C_I$ is given by
\begin{equation}
\delta_\epsilon F = \epsilon^I \{F, C_I\}
\end{equation}
The points in phase space which are related by a gauge transformation are regarded as physically equivalent; they lie in the same gauge orbit generated by $C_I$ within $\Sigma$. Fixing a specific gauge then corresponds to choosing a representative along each such orbit. A \emph{physical observable} ${O}_{ \phys}$ is defined as a function on phase space that is invariant under gauge transformations
\begin{equation}\label{eq:classicalOph}
\{C_I, {O}_{ \phys}\} \approx 0. 
\end{equation}
Such functions are constant along the gauge orbits, meaning that their values are independent of which specific gauge one chooses.

For quantum gauge theory, another classification of constraints is particularly relevant.
Constraints whose Poisson brackets with all other constraints vanish on the constraint surface $\Sigma$ are called \emph{first class}, we denote them by $C_{\alpha}$. By Dirac's conjecture \cite{dirac1964}, they generate gauge transformations and encode redundancies of description rather than physical restrictions. In contrast, the \emph{second class constraints} have non-vanishing Poisson brackets on the constraint surface $\Sigma$. They arise when the constraints impose genuine restrictions on the phase space, for instance a gauge-fixing, and are treated by replacing the Poisson bracket with the so-called Dirac bracket prior to quantization. 

Upon quantization, the classical phase space  is replaced by a kinematical Hilbert space $\HS_{\kin}$, which carries the canonical operator algebra but still contains gauge redundancy. 
The physical Hilbert space $\HS_{\phys}$ consists of the states which are annihilated by all first-class constraints $\hat{C}_{\alpha}$,
\begin{equation}
\hat{C}_{\alpha} |\psi\rangle_{\phys} =0.
\end{equation}  
Parallel to \cref{eq:classicalOph}, the quantized physical observables commute with the constraints 
\begin{equation}
[\hat{C}_{\alpha}, \hat{O}_{\phys} ]=0.
\end{equation}
By exponentiation, the constraints $\hat{C}_\alpha$ generate the group of gauge transformations $G$, i.e., $e^{s\frac{i}{\hbar}\hat{C}_{\alpha}}\in G$.

For locally compact gauge groups $G$, gauge invariance can be imposed by group averaging through the Haar measure $\D g$. If $G$ is compact and we can normalize the Haar measure by the volume of $G$ to $\frac{\D g}
{|G|}$, then the projector onto the physical Hilbert space is 
\begin{equation}
\Pi_{\phys} = \frac{1}{|G|} \int_G \D g \ U(g),
\end{equation}
where we use $U(g)$ to denote the unitary representation of $G$ on $\HS_\kin$. A physical state satisfies $U(g)  |\psi\rangle_{\phys} =  |\psi\rangle_{\phys}$.
Particularly relevant for this work is the case of quantum electrodynamics, for which a discretization to a lattice allows the group of all gauge transformations to be locally compact.

\subsection{Quantum Reference Frames and the Perspective-Neutral Approach}
Quantum reference frames (QRFs) have been developed as a framework for describing quantum systems relationally and transformations between different perspectives on the same underlying scenario~\cite{Aharanov_Susskind_1, Aharanov_Susskind_2, Aharanov_1984, Bartlett_2007, Gour_2008, Gour_2009, Miyadera_2016, Loveridge2018, Carette_2025, Giacomini2019, de_la_Hamette_2020, Vanrietvelde2020changeof,HSL_2021, delahamette_2021, Krumm_2021, H_hn_2022}.
Early work on this idea dates back to Aharonov and Susskind in 1967 \cite{Aharanov_Susskind_1, Aharanov_Susskind_2}, which connected superselection rules with the presence or absence of a reference system, and investigated the idea of including a reference frame to make coherent superpositions across different charge superselection sectors physically possible. 
Later work analyzed quantum mechanical finite-mass objects as quantum reference frames, invoking for the first time the term “quantum reference frame” \cite{Aharanov_1984}.

Since then, a multitude of QRF frameworks have been developed.
 \textit{The quantum information approach} considers the accessible information in the absence of an external frame for states obtained by an incoherent average over the symmetry group \cite{Bartlett_2007, Gour_2008, Gour_2009}, and the related \emph{the extra particle approach} addresses the transformation between perspectives \cite{castroruiz2023, Garmier:2025soc}. 
 \textit{The operational approach} 
studies how the Positive Operator Valued Measure (POVMs) can be invariant under symmetries and constructs the corresponding notion of relational observables \cite{Miyadera_2016, Loveridge2018, Carette_2025}.
 \textit{The perspectival approach} studies how relational observables and the quantum states of other systems are described relative to a given quantum system, and how to transform between different frames \cite{Giacomini2019, de_la_Hamette_2020}.  \textit{The perspective-neutral approach} \cite{Vanrietvelde2020changeof,HSL_2021, delahamette_2021, Krumm_2021, H_hn_2022} is naturally connected to the Dirac quantization of gauge systems.   The physical states are defined through constraints and admit multiple equivalent representations related by gauge transformations, while the reduction map allows us to consistently reduce to each QRF perspective. This has been shown to be equivalent to the perspectival approach for ideal frames \cite{Vanrietvelde2020changeof}. In general, the different QRF frameworks are suited to different physical situations, and the precise relationships between them remain an open research topic \cite{DeVuyst:2025ezt, Castro-Ruiz:2025yvi,delahamettekabel_2026, doat_2025}.

The various approaches to QRFs have found applications ranging from quantum information theory \cite{Kitaev_2004, Bartlett_2006, Bartlett_2007, Palmer_2014}, uncertainty relations \cite{riera2024}, thermodynamics \cite{Hoehn:2023ehz} to gauge  theory \cite{araujoregado2025} and quantum spacetime \cite{Giacomini:2020ahk, delaHamette:2021iwx, Kabel_2023, kabel2024,  Freidel:2025ous}. It has been shown that correlations, entanglement \cite{Giacomini2019, cepollaro2024}, entropy \cite{devuyst2024, araujoregado2025, delaHamette:2026rtv}, the very
notion of the quantum subsystem \cite{Giacomini2019, Ali_Ahmad_2022, hoehn2023,Hoehn:2023ehz}, and localization of events \cite{Giacomini:2020ahk, Castro-Ruiz:2019nnl, kabel2024, Vilasini:2025qun} are all QRF-dependent concepts. Including a clock reference frame is essential in the recent understanding of the type reduction of von Neumann algebra in field theory \cite{Chandrasekaran:2022cip, Fewster_2024, DeVuyst:2024fxc}.   Non-ideal QRFs are particularly important, since quantum systems as QRFs do not come with infinite resources \cite{Garmier:2025soc}. 

This section is intended as an introduction to the perspective-neutral construction as presented in \cite{delahamette_2021}, for more details see \cref{app:QRFs}.

\paragraph{QRFs and orientation states.}
Let us consider a kinematical Hilbert space $\HS_{\text{kin}}$ with a physical subspace $\HS_\phys$ which is invariant under the action of a gauge group $G$, i.e., for all $\ket{\psi}_\phys \in \HS_\phys$ 
\begin{equation}
    U(g) \ket{\psi}_{\phys} = \ket{\psi}_{\phys}
\end{equation}
where $U(g)$ is a representation of $G$. The physical subspace $\HS_\phys$ is associated with an orthogonal projector which, for compact groups with normalized Haar measure $\frac{\D g}{|G|}$, equals
\begin{equation}
    \Pi_\phys \defeq \frac{1}{|G|}\int_G \D g \ U(g).
\end{equation}
Such an expression can be generalized for unimodular locally compact Lie groups as in \cite{delahamette_2021}, however, we will assume compactness to avoid issues with improper projectors and rigged Hilbert spaces.

Consider furthermore a factorization of the kinematic space $\HS_\kin$ into a reference frame $R$ and a remaining system $S$, $\HS_{\text{kin}}=\HS_R\otimes \HS_S$, such that $U(g)$ is of the form $U_{RS}(g) = U_R(g)\otimes U_S(g)$, where $U_R$ and $U_S$ are representation of the group\footnote{These representations may generally be projective. In this work, this will not be the case.}. 
The subsystem $R$ serves as a QRF if it parametrizes the action of $G$, thus, $R$ admits a set of {\it orientation states} $\{\ket{\phi(g)}_R  \, | \,  g\in G\}$ (called a coherent state system in \cite{delahamette_2021}) on which $G$ acts transitively, 
\begin{equation}
    U_R(g')\ket{\phi(g)}_R = \ket{\phi(g'g)}_R,
\end{equation} 
and which can resolve identity,
\begin{equation}
    \int_G \D g \ \ket{\phi(g)}\bra{\phi(g)}_R = cI_R
    \label{eq:resolution_of_I}
\end{equation}
for $c>0$. The orientation states thus span $\HS_R$ and can be obtained as the $G$-orbit of a seed state $\ket{\phi(e)}_R$ where $e\in G$ is unity.

If $G$ acts freely on the orientation states, i.e., $U_R(g')\ket{\phi(g)}_R= e^{i\theta}\ket{\phi(g)}_R$ if and only if $g' = e$, the orientation states are sufficient to completely parametrize the gauge group $G$, and the corresponding QRFs are called {\it complete.}
In addition, a QRF is {\it ideal} if $\braket{\phi(g)}{\phi(g')}_R = \delta(g, g')$ and {\it non-ideal} otherwise.

\paragraph{Reduction maps and gauge-fixing operators.}
For a given kinematical state, the physical state $\ket{\psi}_\phys =\Pi_\phys \ket{\psi}_\kin$ is a coherent superposition of the states in the $G$-orbit of $\ket{\psi}_\kin$, each corresponding to a specific orientation of $R$. Reduction maps allow to extract the state of $S$ as seen from the perspective of $R$ in a fixed orientation from $\ket{\psi}_\phys$. 

One way of defining such reduction maps is a generalization of the \textit{Page-Wootters} mechanism, developed in the context of quantum clocks \cite{PageWooters_1983}. These maps $\mathcal{R}_R^g:\HS_{\text{kin}}\rightarrow\HS_S$ are defined as
\begin{equation}
    \mathcal{R}_R^g = \sqrt{N}\left(\bra{\phi(g)}_R \otimes I_S\right) \Pi_\phys,
\end{equation}
where $N = \frac{|G|}{c}$, and $c$ arises from the resolution of the identity of orientation states \cref{eq:resolution_of_I}. The reduction map $\mathcal{R}_R^g$ simultaneously fixes the frame's orientation and removes the reference $R$ by taking the inner product with $\bra{\phi(g)}_R$.

Here we denoted the output space of the reduction map as $\HS_S$. Its image is $\mathcal{R}_R^g(\HS_\phys)=\HS_{S|R}^g$, the Hilbert space of the system $S$ from the perspective of $R$. While generally, this space is not equal to $\HS_S$, the two are isomorphic for ideal frames. In addition, we define the reduction map on all of $\HS_\kin$ by explicitly including the projector onto $\HS_\phys$.
With this in mind and for simplicity, we will often write $\mathcal{R}_R^g:\HS_\phys\to\HS_S$, commenting in case there is any ambiguity arising. 

Importantly, $\mathcal{R}_R^g$ is an isometry from $\HS_\phys$ into $\HS_S$, and the adjoint $(\mathcal{R}_R^g)^\dagger$ serves as its inverse \cite[Lemma 8]{delahamette_2021},
\begin{equation}
    (\mathcal{R}_R^g)^\dagger \mathcal{R}_R^g = \Pi_\phys.
\end{equation}
Hence, when restricted to $\HS_\phys$, $\mathcal{R}_R^g$ is unitary onto its image $\HS_{S|R}^g$ (see \cref{app:QRFs}, \cref{lemma:isometry}).

Alternatively, one can introduce the \textit{gauge-fixing operator}
\begin{equation}
    \mathcal{P}_R^g \defeq \sqrt{N}\left(\ket{\phi(g)}\bra{\phi(g)}_R\otimes I_S \right),
    \label{eq:gauge-fixing}
\end{equation}
which, up to normalization, is an orthogonal projector. Restricted to $\HS_\phys$, its image is 
\begin{equation}
    \mathcal{P}_R^g(\HS_\phys) = \operatorname{span}\{\ket{\phi(g)}_R\}\otimes\HS_{S|R}^{ g}\subset\HS_{\text{kin}},
\end{equation} 
the space of states with the orientation of $R$ gauge-fixed to $\ket{\phi(g)}_R$. As operators on $\HS_\phys$, $\mathcal{R}_R^g$ and $\mathcal{P}_R^g$ are unitarily equivalent via the  map $\ket{\psi}_{S|R}\mapsto \ket{\phi(g)}_R\otimes\ket{\psi}_{S|R}$. Therefore, gauge-fixing $\mathcal{P}_R^g:\HS_\phys\rightarrow\operatorname{span}\{\ket{\phi(g)}_R\}\otimes\HS_{S|R}^{g}$ is unitary.

If $R$ is a complete QRF, then $\mathcal{P}_R^g$ constitutes a complete gauge-fixing of physical states. Otherwise, this is akin to a partial gauge-fixing, leaving some redundancy which is due to the subgroup of gauge transformations which the frame can not resolve.

\subsection{Quantum Error Correction}
\label{ssec:QEC}
This section (and \cref{app:QEC}, in more detail) follows the standard literature. General introductions to quantum error correction can be found, for example, in \cite{Nielsen_Chuang_2010, preskill1999chapter7}.

A fundamental discovery for quantum information science was that computers based on quantum mechanics can solve certain problems drastically more efficiently than their classical counterparts. However, any realistic quantum system in a lab interacts with its environment, introducing errors into calculations and memory. To correct such errors, some form of redundancy needs to be introduced into the system. This can be achieved by encoding the quantum information into highly entangled states in a larger physical system.

Formally, a quantum error correction code (QECC) is defined by an isometric embedding of a \textit{logical} Hilbert space $\HS_{\log}$ into the \textit{physical }Hilbert space\footnote{Note that this is different from the physical Hilbert space in gauge theory.} $\HS_\physical$. The image of this embedding is the \textit{code subspace} $\HS_\code$, and we denote the associated orthogonal projector by $\Pi_\code$. 
In the common case where the quantum information is assumed to be carried by 
qubits, a QECC which encodes $k$ logical qubits into $n$ physical ones is referred to as an $[[n,k]]$-code.

A noise channel $\mathcal{N}$ with Kraus operators $\mathcal{E}=\{E_i\}_i$ can be corrected by a QECC if there exists a recovery operation $\mathcal{R}$ such that $\mathcal{R}\circ\mathcal{N}(\rho)=\rho$ for all $\rho\in S(\HS_\code)$. This requirement is equivalent to the {\it Knill-Laflamme conditions} \cite{KnillLaflamme97}
\begin{equation}
    \Pi_\code E_i^\dagger E_j \Pi_\code = c_{ij}\Pi_\code \quad \forall E_i, E_j\in\mathcal{E}.
    \label{eq:Knill_Laflamme}
\end{equation}
If these are satisfied, we call $\mathcal{E}$ a {\it correctable set of errors}. By linearity, any noise channel with Kraus operators that are linear combinations of the errors $E_i$ is also corrected by $\mathcal{R}$. In this sense, error sets which linearly span the same space are equivalent.

\paragraph{Stabilizer codes.}

The stabilizer formalism~\cite{gottesman1997stabilizer} describes a special class of QECCs, whose code space is defined as the joint $+1$-eigenspace of the {\it stabilizer group} $\mathcal{S}$, a subgroup of the $n$-qubit Pauli group $\mathcal{P}_n$. In other terms, 
\begin{equation}
    \ket{\psi}\in\HS_\code \; \iff \; S\ket{\psi} = \ket{\psi} \quad \forall S \in \mathcal{S}.
\end{equation}
Notice that the code space is non-empty only if $\mathcal{S}$ Abelian and $-I\not\in \mathcal{S}$. A stabilizer group $\mathcal{S}$ of a $[[n, k]]$-code has $n-k$ independent generators
\begin{equation}
    \mathcal{S} = \langle S_1, S_2, \dots, S_{n-k} \rangle,
\end{equation}
while the logical operators which map $\HS_\code$ to itself commute with any stabilizer $S \in \mathcal{S}$. Since the $n$-qubit Paulis $\{I, X, Y, Z\}^{\otimes n}$ span the space of all $n$-qubit matrices, any logical operator is a linear combination of the logical Pauli operators which are given by
\begin{equation}
    \Log(\HS_\code) = \{P\in\mathcal{P}_n \, | \, [P,S]=0 \ \forall S\in\mathcal{S}\}/\mathcal{S}.
\end{equation}
The quotient by the stabilizer group $\mathcal{S}$ ensures that all operators which act on $\HS_\code$ in the same way are identified in the same equivalence class.

As with the logical operators, any error operator can be decomposed into Pauli components. Therefore, if a code can correct the set of all Pauli operators up to weight $t$ (where the weight $w(P)=|\supp(P)$| measures the number of non-identity tensor factors of $P$), i.e.,
\begin{equation}
    \mathcal{E} = \{P\in\{I, X, Y, Z\}^{\otimes n} \, | \, w(P) \leq t\},
\end{equation}
it can correct {\it any} error acting on $t$ or fewer qubits. We then say that the code corrects $t$ errors.

The code distance is an important parameter of a QECC. For stabilizer codes, it is the minimum weight of any non-trivial logical operator,
\begin{equation}
    d = \min_{[I]\neq[P]\in\Log(\HS_\code)}w(P).
\end{equation}
Usually, we call an $[[n, k]]$-code with distance $d$ an $[[n, k, d]]$-code. A code with distance $d$ corrects $t$ errors, where $t\leq \frac{d-1}{2}$ \cite[Thm. 1]{KnillLaflamme00}.

Elements of the Pauli group either commute or anti-commute. Thus, for a Pauli error $E$ and a generator $S_i$ of $\mathcal{S}$, we have
\begin{equation}
    S_i E = s_i ES_i
\end{equation}
for $s_i=\pm 1$. For a fixed error $E$, the string $s = (s_1, s_2, \dots, s_{n-k})\subset\{\pm 1\}^{n-k}$, where each element corresponds to a generator of $\mathcal{S}$, is called \textit{syndrome}. Measuring all generators of $\mathcal{S}$, which necessarily commute, yields the string $s$ and is called a \textit{syndrome measurement}.

A Pauli error set $\mathcal{E}$ (including identity) is correctable if any $E_i, E_j \in\mathcal{E}$ either have different syndromes or $E_i^\dagger E_j \in\mathcal{S}$. This means that one can measure the syndrome on an error state and subsequently apply an operator $E_i^\dagger$ with matching syndrome to correct the error.

\section{Correctable Errors for Perspective-Neutral Systems}
\label{sec:QEC_in_Gauge}
In a previous work \cite{sem_proj}, we investigated a connection between gauge systems, QRFs and QECCs at the example of stabilizer codes \cite{sem_proj}, where a type of QRF was identified as certain subsystems whose erasure can be recovered by the code. Following a similar motivation, the authors in \cite{CCHM_2024} present different constructions for QRFs in stabilizer codes, showing that correctable error sets are in a 1-to-1 relation to ideal, non-local (with respect to the tensor product structure of physical qubits) QRFs \cite[Theorem 4.13]{CCHM_2024}. Notably, \cite[Lemma 5.2]{CCHM_2024} establishes that for any ideal QRF in a stabilizer code (and, in fact, in any system with a finite Abelian gauge group), the set of gauge-fixing projectors is a correctable error set.

Here, we consider the opposite direction to most of the discussion \cite{sem_proj, CCHM_2024}: Interpreting general gauge systems with a perspective-neutral structure as QECCs, can we also find correctable error sets tied to QRFs? Through a generalization of Lemma 5.2 from \cite{CCHM_2024} (\cref{thm:correctable_gauge-fixing}), this is indeed possible both for compact Abelian and non-Abelian gauge groups. In the Abelian case, this leads to a recovery scheme via charge-sector measurements.

\subsection{Reduction Maps as Encoding Isometries} 

Consider a choice of quantum reference frame (QRF) that induces a factorization of the kinematical Hilbert space as $\HS_\kin = \HS_R\otimes\HS_S$. We assume that the action of a compact gauge group $G$ on $\HS_\kin$ splits into a product $U_{RS}(g) = U_R(g)\otimes U_S(g)$, where $U_R(g)$ and $U_S(g)$ are unitary representations on $\HS_R$ and $\HS_S$ respectively. 
Let $\{\ket{\phi(g)}_R  \, | \,  g\in G\}$ be a set of orientation states of the QRF, and consider the associated Page-Wootters reduction maps
\begin{equation}
\mathcal{R}_R^{g}:\HS_\phys \rightarrow \HS_{S|R}^g.
\end{equation}

Recall that a QECC is defined by an encoding isometry $T:\HS_{\rm logical}\rightarrow \HS_\code\subset\HS_\physical$. In the QECC picture of the above perspective-neutral system, 
the role of such an encoding isometry is played by the inverse reduction maps $(\mathcal{R}_R^g)^\dagger$ as pointed out in the dictionary in \cite{CCHM_2024}.

Suppose now that $R$ is an ideal QRF, such that the orientation states $\{\ket{\phi(g)}_R\, |\, g\in G\}$ form an orthonormal basis of $\HS_R$. Then, the reduction maps $\mathcal{R}_R^g:\HS_\phys\rightarrow\HS_{S|R}^g$ are surjective, i.e., $\HS_{S|R}^g = \HS_S$, as follows immediately from the projector $\mathcal{R}_R^g(\mathcal{R}_R^g)^\dagger$ onto their image,
\begin{equation}
\begin{aligned}
    \mathcal{R}_R^g(\mathcal{R}_R^g)^\dagger &= {|G|}(\bra{\phi(g)}_R\otimes I_S)\Pi_\phys(\ket{\phi(g)}_R\otimes I_S)\\ &= \int \D h \ \braket{\phi(g)}{\phi(hg)}_R U_S(h) = I_S \, .
\end{aligned}
\end{equation}
Thus, for ideal QRFs, the encoded logical data in $\HS_\phys$ coming from $\HS_{{S|R}}^g$ takes up all of the residual system $\HS_S$; and the inverse reduction maps act as encoding isometries
\begin{equation}
\centertikz{
     \node at (0,0) {$(\mathcal{R}_R^g)^\dagger :\HS_{S|R}^g\cong\HS_S\;\to\;\HS_\phys$};
     \node at (0.,-0.5) {$\updownarrow$};
     \node at (1.4,-0.5) {$\updownarrow$};
      \node at (0.,-1.) {$\HS_{\rm logical}$};
     \node at (1.4,-1.) {$\HS_\code$};
}.
\end{equation}
The logical states in $\HS_S$ associated with differently oriented reduction maps $\mathcal{R}_R^g$ and $\mathcal{R}_R^h$ may be related by a rotation as captured by the unitary transformation $\mathcal{R}_R^h(\mathcal{R}_R^g)^\dagger = U_S(hg^{-1})$ on $\HS_S$. This essentially amounts to a relabeling of the code states.

For non-ideal QRFs $R$, the reduction maps are generally not surjective. Their images $\HS_{S|R}^{ g}\subset \HS_S$ may therefore be proper subspaces. In that case, akin to a partial gauge-fixing, the reduction maps $\mathcal{R}_R^g$ do not fully extract the logical information, but leave some redundancy. As a result, the intersection between the subspaces $\HS_{S|R}^{g}$ and $\HS_{S|R}^{h}$ in the non-ideal case may range from $\{0\}$ to complete coincidence of the two subspaces.
Even so, the adjoint $(\mathcal{R}_R^g)^\dagger:\HS_{S|R}^{g}\to \HS_\phys$ may still be interpreted as an encoding map for the information contained in $\HS_{S|R}^{g}$.

\subsection{Correctable Gauge-Fixing Operators}
For the QRF setup from before, consider the gauge-fixing operators from \cref{eq:gauge-fixing}
\begin{equation}
    \mathcal{P}_R^g = \sqrt{N}(\ket{\phi(g)}\bra{\phi(g)}_R\otimes I_S).
\end{equation}
Lemma 5.2 in \cite{CCHM_2024} states that the gauge-fixing operators associated with an ideal QRF for a stabilizer code constitute a correctable error set, a fact also asserted for finite Abelian gauge groups.
This statement can be generalized to any system with a compact, possibly non-Abelian gauge group.

\begin{restatable}[Correctable gauge-fixing operators]{theorem}{corrgaugefix}
\label{thm:correctable_gauge-fixing}
    Let $G$ be a compact gauge group, acting on $\HS_R\otimes\HS_S$ as a product of unitary representations of $G$, i.e., $U_R(g)\otimes U_S(g)$, and let $\{\ket{\phi(g)}_R \, | \,  g\in G\}$ be the orientation states of the QRF $R$. Let $\tilde{G}$ be a set of group elements with orthogonal orientation states,  i.e., $\braket{\phi(\tilde{g})}{\phi(\tilde{h})}_R = \delta(\tilde{g},\tilde{h})\ \forall \tilde{g}, \tilde{h} \in \tilde{G}$. Then, the set of gauge-fixing operators associated with $\tilde{G}$,
    \begin{equation}
        \mathcal{E}_{\mathcal{P},\tilde{G}} \defeq \{\mathcal{P}_R^{\tilde{g}}  \, | \,  \tilde{g}\in\tilde{G}\},
    \end{equation}
    satisfies the Knill-Laflamme conditions, i.e.,
    \begin{equation}
        \Pi_\phys (\mathcal{P}_R^{\tilde{g}})^\dagger\mathcal{P}_R^{\tilde{h}} \Pi_\phys = \delta(\tilde{g}, \tilde{h})\Pi_\phys \quad \forall \tilde{g}, \tilde{h} \in \tilde{G}.
    \end{equation}
\end{restatable}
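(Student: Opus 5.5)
The plan is a direct computation of $\Pi_\phys(\mathcal{P}_R^{\tilde g})^\dagger\mathcal{P}_R^{\tilde h}\Pi_\phys$ using the definition \cref{eq:gauge-fixing}. The key tools are the orthogonality assumption on $\tilde G$ and the isometry property of the reduction maps, $(\mathcal{R}_R^g)^\dagger\mathcal{R}_R^g=\Pi_\phys$.

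First, since $\mathcal{P}_R^g$ is self-adjoint, we have
\begin{equation}
(\mathcal{P}_R^{\tilde g})^\dagger\mathcal{P}_R^{\tilde h} = N\,\ket{\phi(\tilde g)}\braket{\phi(\tilde g)}{\phi(\tilde h)}\bra{\phi(\tilde h)}_R\otimes I_S.
\end{equation}
For $\tilde g\neq\tilde h$ the middle inner product vanishes by hypothesis, so the whole expression is zero, and it stays zero after sandwiching with $\Pi_\phys$.

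For $\tilde g=\tilde h$, I must be careful about normalization. The statement writes $\delta(\tilde g,\tilde h)$, so I read $\tilde G$ as a discrete set with normalized orientation states, $\braket{\phi(\tilde g)}{\phi(\tilde g)}=1$. (For continuous ideal frames this would be a Dirac delta, and the conditions would be understood distributionally.) Under this reading,
\begin{equation}
(\mathcal{P}_R^{\tilde g})^\dagger\mathcal{P}_R^{\tilde g}=N\,\ket{\phi(\tilde g)}\bra{\phi(\tilde g)}_R\otimes I_S .
\end{equation}
Sandwiching with $\Pi_\phys$ and factoring $\sqrt N(\bra{\phi(\tilde g)}_R\otimes I_S)\Pi_\phys=\mathcal{R}_R^{\tilde g}$ gives
\begin{equation}
\Pi_\phys (\mathcal{P}_R^{\tilde g})^\dagger\mathcal{P}_R^{\tilde g}\Pi_\phys = (\mathcal{R}_R^{\tilde g})^\dagger\mathcal{R}_R^{\tilde g}=\Pi_\phys,
\end{equation}
by \cite[Lemma 8]{delahamette_2021}, quoted earlier as the isometry property (see \cref{lemma:isometry}).

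The main obstacle is this isometry step, because I want to confirm it holds for non-ideal frames and non-Abelian $G$ without extra assumptions. If needed, I would verify it directly. First use $\Pi_\phys=\frac{1}{|G|}\int_G \D g\,U_R(g)\otimes U_S(g)$ and left-invariance of the Haar measure to write
\begin{equation}
\Pi_\phys(\ket{\phi(\tilde g)}\bra{\phi(\tilde g)}\otimes I)\Pi_\phys=\Pi_\phys\bigl(U_R(k)\ket{\phi(\tilde g)}\bra{\phi(\tilde g)}U_R(k)^\dagger\otimes I\bigr)\Pi_\phys
\end{equation}
for every $k$. This uses $\Pi_\phys U(k)=\Pi_\phys$, together with $U_S(k)U_S(k)^\dagger=I$ so the $S$-factor can be absorbed. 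Next average over $k$ with $\frac{\D k}{|G|}$ and use transitivity $U_R(k)\ket{\phi(\tilde g)}=\ket{\phi(k\tilde g)}$. By right-invariance of the Haar measure, the averaged frame operator becomes $\frac{1}{|G|}\int \D g\,\ket{\phi(g)}\bra{\phi(g)}=\frac{c}{|G|}I_R=N^{-1}I_R$ via \cref{eq:resolution_of_I}. The factor $N$ then cancels exactly, leaving $\Pi_\phys^2=\Pi_\phys$.

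This argument never uses commutativity or ideality, only compactness (normalized, bi-invariant Haar measure) and the coherent state system axioms. That gives the claimed generality, and the orthogonality hypothesis on $\tilde G$ enters only through the off-diagonal terms.
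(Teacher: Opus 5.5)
Your proposal is correct and follows the paper's proof. The orthogonality hypothesis kills the off-diagonal terms, and the diagonal term is identified as $(\mathcal{R}_R^{\tilde g})^\dagger\mathcal{R}_R^{\tilde g}=\Pi_\phys$ via the isometry lemma. Your optional Haar-averaging verification of that lemma is the same argument as the paper's proof of \cref{lemma:isometry}, and your remark on discrete versus Dirac-delta normalization matches the paper's comment after the theorem.
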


\begin{proof}
    Let $\tilde{g}, \tilde{h}\in\tilde{G}$. Then, the Knill-Laflamme conditions follow directly from the the fact that the reduction map is an isometry on $\HS_\phys$. Indeed,
    \begin{equation}
    \begin{aligned}
        \Pi_\phys (\mathcal{P}_R^{\tilde{g}})^\dagger\mathcal{P}_R^{\tilde{h}} \Pi_\phys &= N\delta(\tilde{g}, \tilde{h})\Pi_\phys(\ket{\phi(\tilde{g})}\bra{\phi(\tilde{g})}_R\otimes I_S) \Pi_\phys \\
        &=\delta(\tilde{g}, \tilde{h}) (\mathcal{R}_R^{\tilde{g}})^\dagger\mathcal{R}_R^{\tilde{g}} 
        = \delta(\tilde{g}, \tilde{h})\Pi_\phys.
    \end{aligned}
    \qedhere
    \end{equation}
\end{proof} 
If $G$ is continuous, the Knill-Laflamme conditions likewise become continuous in the error index, and the prefactor $\delta(\tilde{g}, \tilde{h})$ is a Dirac delta. Since the Knill-Laflamme conditions also hold for continuous error indices and infinite-dimensional spaces \cite{BKK_2007}, the result implies that there exists a recovery operation for the error set $\mathcal{E}_{\mathcal{P},\tilde{G}}$ on all of $\HS_\phys$.
If the orientation states render $R$ an ideal QRF, then the full set of gauge-fixing operators $\mathcal{E}_\mathcal{P} \defeq \mathcal{E}_{\mathcal{P}, G}$ is correctable.

\paragraph{Understanding gauge-fixing errors.}
The result in \cref{thm:correctable_gauge-fixing} is not surprising as choosing a QRF amounts to selecting redundant degrees of freedom. This provides a subsystem on which certain errors may act without disturbing the logical data. 
Importantly, these errors need not be formulated explicitly in terms of gauge-fixing operators. Since linear combinations of correctable errors remain correctable, gauge-fixing operators can be used to generate a broader class of equivalent error sets and only their linear span is relevant. From this perspective, gauge-fixing errors are simply one representative within a family of physically equivalent correctable errors.
This leads to the following general observation.

\begin{remark}
\label{rmk:diagonal_operators_in_orientation_states}
    If the set of correctable gauge-fixing projectors $\mathcal{E}_{\mathcal{P}, \tilde{G}}$ is associated with an orthonormal basis of orientation states $\ket{\phi(\tilde{g})}_R$, then any operator which is diagonal in this basis is correctable.
\end{remark}

There are also scenarios in which gauge-fixing errors themselves are physically meaningful. Consider again the case where $\{\ket{\phi(\tilde{g})}_R  \, | \,  \tilde{g}\in\tilde{G}\}$ is an orthonormal basis. Then, a measurement in this basis acts on a state $\rho$ on $\HS_\phys$ as
\begin{equation}
    \rho \mapsto \int_{\tilde{G}} \D\tilde{g} \ \Pi_{\tilde{g}}\rho\Pi_{\tilde{g}},
\end{equation}
where the projectors $\Pi_{\tilde{g}} = \ket{\phi(\tilde{g})}\bra{\phi(\tilde{g})}_R$ are simply rescaled gauge-fixing operators. Since these are correctable, one can recover the original state after measuring $\HS_R$ in this basis.

\paragraph{Error correction via charge measurements for Abelian gauge groups.}

In \cite{CCHM_2024}, gauge-fixing errors were discussed as being dual (via Pontryagin duality) to standard Pauli errors for qubit stabilizer codes. This duality only holds for Abelian groups, and we do not attempt a full classification of the general case here. Nevertheless, an important take-away from the discussion is that correctable sets of gauge-fixing errors for stabilizer codes have equivalent dual Pauli error sets. A similar idea leads us to a connection between the correctable gauge-fixing operators and the recovery scheme via charge measurements outlined in~\cref{ssec:QEC_and_gauge}.

Consider thus a perspective-neutral setup $\HS_\kin = \HS_R\otimes\HS_S$ with a compact Abelian gauge group $G$. Then, we can decompose the kinematical space into isotypes $\HS_{\bm{q}}$ belonging to the irreducible representations $\chi_{\bm{q}}$ (the irreducible characters of $G$) labeled by their charges ${\bm{q}}$. 
We thus find $\HS_\kin = \bigoplus_{\bm{q}} \HS_{\bm{q}}$. The projectors onto the isotypes $\HS_{\bm{q}}$ are given by \cite{Fulton:2004uyc}
\begin{equation}
    \Pi_{\bm{q}} = \frac{1}{|G|}\int \D g \ \overline{\chi}_{\bm{q}}(g)U_{RS}(g),
\end{equation}
where the bar denotes complex conjugation.

\begin{restatable}{proposition}{corrAq}
\label{prop:A_q_from_gf}
    Let G be a compact Abelian gauge group, and let $R$ be an ideal QRF for $G$. Then, the correctable set of gauge-fixing operators $\mathcal{E}_{\mathcal{P}} =\{\mathcal{P}_R^g\}_{g\in G}$ yields a set of operators $\{A_{\bm{q}}\}_{\bm{q}}$ which are unitary on $\HS_\kin$ and such that $A_{\bm{q}}( \HS_\phys) \subset \HS_{\bm{q}}$, where
    \begin{equation}
        A_{\bm{q}} = \frac{1}{\sqrt{|G|}}\int\D g \ \overline{\chi}_{\bm{q}}(g)\mathcal{P}_R^g.
    \end{equation}
    The operators $\{A_{\bm{q}}\}_{\bm{q}}$ are again a correctable error set.
\end{restatable}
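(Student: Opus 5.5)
Since $R$ is ideal, the orientation states $\{\ket{\phi(g)}_R\}_{g\in G}$ are orthonormal (in the $\delta$-sense) and resolve the identity with $c=1$, so $N=|G|$ and $\mathcal{P}_R^g=\sqrt{|G|}\,\ket{\phi(g)}\bra{\phi(g)}_R\otimes I_S$. I first rewrite
\begin{equation}
A_{\bm{q}}=\Big(\int\D g\ \overline{\chi}_{\bm{q}}(g)\ket{\phi(g)}\bra{\phi(g)}_R\Big)\otimes I_S \eqdef D_{\bm{q}}\otimes I_S .
\end{equation}
$D_{\bm{q}}$ is diagonal in an orthonormal basis with unimodular eigenvalues $\overline{\chi}_{\bm{q}}(g)$, since characters of a compact Abelian group are $\operatorname{U}(1)$-valued. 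Hence $D_{\bm{q}}D_{\bm{q}}^\dagger=\int\D g\,\ket{\phi(g)}\bra{\phi(g)}_R=I_R$, and likewise $D_{\bm{q}}^\dagger D_{\bm{q}}=I_R$, so $A_{\bm{q}}$ is unitary on $\HS_\kin$. This is the first claim.

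Next I show $A_{\bm{q}}(\HS_\phys)\subset\HS_{\bm{q}}$ by computing how $A_{\bm{q}}$ intertwines the gauge action. Using $U_R(h)\ket{\phi(g)}_R=\ket{\phi(hg)}_R$, a change of variables $g\to h^{-1}g$ (Haar invariance) and $\chi_{\bm{q}}(h^{-1}g)=\overline{\chi}_{\bm{q}}(h)\chi_{\bm{q}}(g)$, I get
\begin{equation}
U_R(h)D_{\bm{q}}U_R(h)^\dagger=\int\D g\ \overline{\chi}_{\bm{q}}(g)\ket{\phi(hg)}\bra{\phi(hg)}_R=\chi_{\bm{q}}(h)D_{\bm{q}} .
\end{equation}
Hence $U_{RS}(h)A_{\bm{q}}\ket{\psi}_\phys=\chi_{\bm{q}}(h)A_{\bm{q}}U_{RS}(h)\ket{\psi}_\phys=\chi_{\bm{q}}(h)A_{\bm{q}}\ket{\psi}_\phys$. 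A vector transforming by the character $\chi_{\bm{q}}$ is fixed by $\Pi_{\bm{q}}$, by orthogonality of characters: $\frac{1}{|G|}\int\D h\,\overline{\chi}_{\bm{q}}(h)\chi_{\bm{q}}(h)=1$. So it lies in $\HS_{\bm{q}}$. The only care needed is the sign convention of the character (whether it is $\chi$ or $\overline\chi$). I would fix this by checking it against the stated formula for $\Pi_{\bm{q}}$, and if necessary relabel $\bm{q}\to-\bm{q}$, which does not affect the conclusion.

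Finally, correctability. The direct route is to use the previous step: for $\ket{\psi},\ket{\psi'}\in\HS_\phys$, the vectors $A_{\bm{q}}\ket{\psi}$ and $A_{\bm{q}'}\ket{\psi'}$ lie in orthogonal isotypes when $\bm{q}\neq\bm{q}'$. This gives $\Pi_\phys A_{\bm{q}}^\dagger A_{\bm{q}'}\Pi_\phys=0$ for $\bm{q}\neq\bm{q}'$. When $\bm{q}=\bm{q}'$, unitarity gives $\Pi_\phys A_{\bm{q}}^\dagger A_{\bm{q}}\Pi_\phys=\Pi_\phys$. So the Knill--Laflamme conditions hold with $c_{\bm{q}\bm{q}'}=\delta_{\bm{q}\bm{q}'}$.

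As a consistency check, I would also note the alternative argument. Each $A_{\bm{q}}$ lies in the (closed) linear span of $\mathcal{E}_\mathcal{P}$, which is correctable by \cref{thm:correctable_gauge-fixing}. Equivalently, it is diagonal in the orientation basis, so \cref{rmk:diagonal_operators_in_orientation_states} applies.

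The main obstacle is rigor for continuous $G$. There the $\ket{\phi(g)}_R$ are $\delta$-normalized, so the integrals defining $D_{\bm{q}}$ are spectral integrals rather than sums. I would handle this by viewing $D_{\bm{q}}$ as multiplication by the bounded function $\overline{\chi}_{\bm{q}}$ in the $L^2(G)$-type representation of $\HS_R$ furnished by the orientation basis. Unitarity and the covariance identity then become statements about multiplication operators, which avoids manipulating ill-defined products of $\delta$'s.
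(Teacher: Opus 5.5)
Your argument is correct. The unitarity step is the paper's computation in cleaner form: the paper's double integral collapses via $\delta(g,h)$ to your observation that $A_{\bm q}=D_{\bm q}\otimes I_S$ is multiplication by the unimodular function $\overline{\chi}_{\bm q}$ in the orientation basis.

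The other two steps take a different route.

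\textbf{Image.} The paper uses $\mathcal{P}_R^g=U_{RS}(g)\mathcal{P}_R^eU_{RS}(g)^\dagger$ and $U_{RS}(g)^\dagger\Pi_\phys=\Pi_\phys$ to get the explicit identity $A_{\bm q}\Pi_\phys=\sqrt{|G|}\,\Pi_{\bm q}\mathcal{P}_R^e\Pi_\phys$. You instead prove the covariance $U_{RS}(h)A_{\bm q}U_{RS}(h)^\dagger=\chi_{\bm q}(h)A_{\bm q}$. This holds on all of $\HS_\kin$, so it says more. It shows that $A_{\bm q}$ carries the isotype of any $\chi_{\bm q'}$ into that of $\chi_{\bm q}\chi_{\bm q'}$, which is the charge-shift, dual-Pauli picture. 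Combined with unitarity, it also shows $A_{\bm q}$ maps $\HS_\phys$ onto $\HS_{\bm q}$, not just into it. The paper's identity, in exchange, expresses $A_{\bm q}|_{\HS_\phys}$ through the single gauge-fixing operator $\mathcal{P}_R^e$. Your sign caveat is unnecessary. With $\Pi_{\bm q}$ as defined, its range is exactly the set of vectors with $U_{RS}(h)\ket{\varphi}=\chi_{\bm q}(h)\ket{\varphi}$ for all $h$, so no relabeling is needed.

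\textbf{Correctability.} The paper only notes that each $A_{\bm q}$ is an (integral) linear combination of the gauge-fixing errors of \cref{thm:correctable_gauge-fixing}. Your main argument instead checks the Knill--Laflamme conditions directly, from orthogonality of distinct isotypes plus unitarity. This gives $c_{\bm q\bm q'}=\delta_{\bm q\bm q'}$ as a genuine Kronecker delta between bounded operators. Your route is self-contained and makes the recovery $\rho\mapsto\sum_{\bm q}A_{\bm q}^\dagger\Pi_{\bm q}\rho\,\Pi_{\bm q}A_{\bm q}$ immediate. It also avoids pushing distributional (Dirac-delta) Knill--Laflamme conditions through an integral superposition. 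The paper's route is shorter given the theorem and keeps the connection to gauge fixing in view, which is the point the paper wants to emphasize.
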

\begin{proof}
      The proof is a straight-forward calculation and can be found in \cref{app:proofs}.
\end{proof}

The operators $A_{\bm{q}}$ make the recovery  procedure described in \cref{ssec:QEC_and_gauge} possible. If the charge ${\bm{q}}$ can be measured, for instance, by a syndrome measurement in stabilizer codes or, as we will see later, by measuring the gauge constraint operators, then upon obtaining the outcome ${\bm{q}}$, we can apply $A_{\bm{q}}^\dagger$:
\begin{equation}
    \rho_{\rm err}\mapsto \sum_{\bm{q}} A_{\bm{q}}^\dagger \Pi_{\bm{q}} \rho_{\rm err}\Pi_{\bm{q}}A_{\bm{q}}.
\end{equation}
This defines a recovery channel for the error set $\{A_{\bm{q}}\}_{\bm{q}}$, and if $G$ is a stabilizer group, it recovers Eq. (162) from \cite{CCHM_2024}. Moreover, by the character orthogonality relations, we can invert \cref{eq:A_q_def} to $\mathcal{P}_R^g = \frac{1}{\sqrt{|G|}}\sum_{\bm{q}}\chi_{\bm{q}}(g)A_{\bm{q}}$. Therefore, the gauge-fixing operators are linear combinations of $A_{\bm{q}}$ and similarly correctable by this recovery.

\paragraph{Non-ideal QRFs and coarse-grained recovery.}
In certain cases we can generalize \cref{prop:A_q_from_gf} to non-ideal frames.  More precisely, if a non-ideal QRF becomes ideal upon restriction to a subgroup $H\subset G$, then the corresponding orthogonal orientation states not only give correctable gauge-fixing operators by \cref{thm:correctable_gauge-fixing}, but also determine the associated recovery via coarse-grained charge-sector measurements. Thus, whenever an orthogonal subgroup-orbit exists, non-ideal frames select both an exact correctable operator family and the appropriate coarse-grained charge-sector measurement.

In the setup of \cref{thm:correctable_gauge-fixing}, suppose thus that $\tilde{G}=H$ is a subgroup of the compact Abelian group $G$, associated with orientation states which form an {\em orthonormal basis} of $\HS_R$,
\begin{equation}
    \operatorname{span}\big\{\ket{\phi(h)}_R\, | \ h\in H\big\} = \HS_R, \quad \braket{\phi(h)}{\phi(h')} = \delta(h, h') \ \forall h, h'\in H.
\end{equation}
Then, $R$ becomes an ideal QRF for the subgroup $H$ and the restricted representation $\tilde{U}_{RS}=U_{RS}|_H$. 

We denote the characters of $H$ by $\chi_{\bm{r}}$, and $\HS_\kin$ decomposes into the direct sum of their isotypes $\HS_\kin = \bigoplus_{\bm{r}}\HS_{\bm{r}}$. Note that these isotypes are related to the ones of $G$ by
\begin{equation}
\label{eq:branching_rule}
    \HS_{\bm{r}} = \bigoplus_{\bm{q}\,:\,\chi_{\bm{q}}|_{H}=\chi_{\bm{r}}}\HS_{\bm{q}},
\end{equation}
in which the direct sum runs over all $\bm{q}$ such that the restriction of the $\chi_{\bm{q}}$ to $H$ equals $\chi_{\bm{r}}$. In other words, the $H$-charge sectors are obtained by coarse-graining the original $G$-charge sectors according to the restriction of characters to $H$ and contain precisely the states which transform as $U_{R}(h)\ket{\psi}_R = \chi_{\bm{r}}(h)\ket{\psi}_R$.
Applying \cref{prop:A_q_from_gf} to this situation, we find that\footnote{This expression is written as an integral over the Haar measure of $H$. If $H$ is discrete, we can replace the integral with a sum.}
\begin{equation}
    A_{\bm{r}} = \frac{1}{\sqrt{|H|}}\int_{H}\D h \ \overline{\chi}_{\bm{r}}(h)\mathcal{P}_R^{h}
\end{equation}
maps the zero-charge sector (with respect to $H$) into $\HS_{\bm{r}}$ and is unitary on $\HS_\kin$.  Since every $G$-invariant state is also $H$-invariant, the zero-charge sector of $H$ also contains $\HS_\phys$, and we get
\begin{equation}
    A_{\bm{r}}(\HS_\phys) \subset \bigoplus_{\bm{q}\,:\,\chi_{\bm{q}}|_{H}=\chi_{\bm{r}}}\HS_{\bm{q}}.
\end{equation}
Therefore, we can state the following.
\begin{proposition}
\label{prop:non_ideal_A_r}

    If $R$ is a QRF for $G$ and $H\subset G$ is a subgroup whose orientation states are an orthonormal basis of $\HS_{R}$, then the set of operators $\{A_{\bm{r}}\}_{\bm{r}}$ is correctable and can be used to implement a recovery operation via coarse-grained charge measurements,
    \begin{equation}
        \rho_{\rm err}\mapsto \sum_{\bm{r}} A_{\bm{r}}^\dagger \Pi_{\bm{r}} \rho_{\rm err}\Pi_{\bm{r}}A_{\bm{r}}.
    \end{equation}
    Here, the coarse-grained charge measurements project the state according to $\Pi_{\bm{r}} = \sum_{\bm{q}\,:\,\chi_{\bm{q}}|_{H}=\chi_{\bm{r}}}\Pi_{\bm{q}}$.
\end{proposition}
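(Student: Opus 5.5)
The plan is to reduce the statement to \cref{prop:A_q_from_gf}, applied to the restricted setup $(H, \tilde{U}_{RS}=U_{RS}|_H)$, and then to check that nothing is lost in passing from the $H$-invariant subspace to the smaller space $\HS_\phys$. First I verify that $R$, with orientation states $\{\ket{\phi(h)}_R\}_{h\in H}$, is an ideal QRF for $H$. The covariance $U_R(h')\ket{\phi(h)}_R=\ket{\phi(h'h)}_R$ is inherited from $G$, and $H$ acts transitively on this orbit. Orthonormality together with spanning $\HS_R$ gives the resolution of the identity with the Haar measure of $H$ (or counting measure if $H$ is discrete). $H$ is compact and Abelian, being a closed subgroup; I will assume closedness so that it has a normalized Haar measure. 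Writing $\Pi^H_{\bm 0}=\frac{1}{|H|}\int_H\D h\,\tilde{U}_{RS}(h)$, \cref{prop:A_q_from_gf} yields unitaries $A_{\bm r}$ on $\HS_\kin$ with $A_{\bm r}(\HS^H_{\bm 0})\subset\HS_{\bm r}$. It also gives the Knill--Laflamme conditions $\Pi^H_{\bm 0}A_{\bm r}^\dagger A_{\bm r'}\Pi^H_{\bm 0}=\delta_{\bm r\bm r'}\Pi^H_{\bm 0}$ relative to the $H$-code space.

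Second, I transfer this to $\HS_\phys$. Every $G$-invariant vector is $H$-invariant, so $\Pi_\phys=\Pi_\phys\Pi^H_{\bm 0}=\Pi^H_{\bm 0}\Pi_\phys$. Sandwiching the previous identity with $\Pi_\phys$ gives $\Pi_\phys A_{\bm r}^\dagger A_{\bm r'}\Pi_\phys=\delta_{\bm r\bm r'}\Pi_\phys$, so $\{A_{\bm r}\}$ is correctable for the code $\HS_\phys$. Orthogonality for $\bm r\neq\bm r'$ can also be seen directly: $A_{\bm r}\HS_\phys$ and $A_{\bm r'}\HS_\phys$ lie in distinct isotypes of $H$, which are mutually orthogonal. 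The inclusion $A_{\bm r}(\HS_\phys)\subset\bigoplus_{\bm q:\chi_{\bm q}|_H=\chi_{\bm r}}\HS_{\bm q}$ follows from the branching rule \cref{eq:branching_rule}. To justify that rule, note that $\HS_{\bm q}$ is the $\chi_{\bm q}$-eigenspace of $G$, hence contained in the $\chi_{\bm q}|_H$-eigenspace of $H$. The $G$-isotypes sum to $\HS_\kin$, so the $H$-isotype $\HS_{\bm r}$ is exactly the sum of the $G$-isotypes whose characters restrict to $\chi_{\bm r}$. The same argument shows $\Pi_{\bm r}=\sum_{\bm q:\chi_{\bm q}|_H=\chi_{\bm r}}\Pi_{\bm q}$, which can be checked by integrating $\overline\chi_{\bm r}(h)U(h)$ over $H$ on each $\HS_{\bm q}$.

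Third, I show the recovery map works. Because the $\Pi_{\bm r}$ form a complete orthogonal family of projectors and each $A_{\bm r}$ is unitary, the Kraus operators $K_{\bm r}=A_{\bm r}^\dagger\Pi_{\bm r}$ satisfy $\sum_{\bm r}K_{\bm r}^\dagger K_{\bm r}=\sum_{\bm r}\Pi_{\bm r}=I$, so the map is a CPTP channel. For an error $E=\sum_{\bm r'}c_{\bm r'}A_{\bm r'}$ and $\ket\psi\in\HS_\phys$, one has $\Pi_{\bm r}A_{\bm r'}\ket\psi=\delta_{\bm r\bm r'}A_{\bm r'}\ket\psi$. Hence $K_{\bm r}E\ket\psi=c_{\bm r}\ket\psi$, and the channel returns $\rho$ up to the scalar weights, exactly as required.

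The main obstacle is the first step: checking that the hypotheses of \cref{prop:A_q_from_gf} hold literally. In particular, the normalization $N=|H|/c$ of $\mathcal{P}_R^h$ must be recomputed with the Haar measure of $H$ rather than $G$, since the $\mathcal{P}^h_R$ defined with $G$'s constant differ by a scalar factor. I would handle this by defining the gauge-fixing operators in the statement relative to $H$.
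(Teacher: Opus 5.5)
Your proposal is correct and follows the paper's route. You treat $R$ as an ideal frame for the subgroup $H$, apply \cref{prop:A_q_from_gf} to $U_{RS}|_H$, and then use $\HS_\phys\subset\HS^H_{\bm 0}$ together with the branching rule \cref{eq:branching_rule}. Your extra steps, transferring the Knill--Laflamme conditions by sandwiching with $\Pi_\phys$ and checking that the recovery is trace preserving, make explicit what the paper leaves implicit.

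One small correction: the normalization you flag as the main obstacle is not actually a problem. Sandwiching the $H$-isometry identity $N_H\Pi^H_{\bm 0}(\ket{\phi(h)}\bra{\phi(h)}_R\otimes I_S)\Pi^H_{\bm 0}=\Pi^H_{\bm 0}$ with $\Pi_\phys$ and comparing with the $G$-version forces $N_H=N_G$ whenever $\HS_\phys\neq\{0\}$. So the two definitions of $\mathcal{P}_R^h$ coincide.
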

The same construction applies if the orthonormal family of orientation states is associated not with a subgroup itself but with a coset $\tilde{G} = g{H}$ of a subgroup  ${H}$ of $G$. In that case, the orientation states can be redefined with a new seed state $\ket{\phi'(e)}_R = \ket{\phi(g)}_R$ to
\begin{equation}
    \ket{\phi'(h)}_R = \ket{\phi(h{g})}_R \quad \forall h\in H,
\end{equation}
leading to the redefined gauge-fixing operators $\mathcal{P}_R'^h = \mathcal{P}_R^{h{g}}$.
The previous discussion then yields the correctable operators
\begin{equation}
    A_{\bm{r}} = \frac{1}{\sqrt{|H|}}\int_H \D h \ \overline{\chi}_{\bm{r}}(h)\mathcal{P}_R^{hg}.
\end{equation}

\section{Quantum Reference Frames in Lattice Quantum Electrodynamics}
\label{sec:QRFs_in_LQED}

Quantum electrodynamics, which describes the interactions between light and matter, is the simplest gauge theory of the Standard Model due to the Abelian structure of its gauge group $\operatorname{U}(1)$. The gauge field is the electromagnetic four-potential $A_\mu$, which, in absence of matter, is associated with the Lagrangian density
\begin{equation}
    \mathcal{L} = -\frac{1}{4}\ F_{\mu\nu}F^{\mu\nu},
\end{equation}
where $F_{\mu\nu} = \partial_\mu A_\nu - \partial_\nu A_\mu$ is the field strength tensor. 

Passing to the Hamiltonian formulation, the canonical momenta conjugate to $A_\mu$
are $P^\mu = \partial \mathcal{L}/\partial( \partial_0A_\mu)=F^{\mu 0}$. Since $F^{\mu\nu}$ is antisymmetric, one has $P^0=F^{00}=0$, which is the primary constraint of the theory. The spatial component of the canonical momentum is the electric field $P^i=F^{i0}= E^i$. Preservation in time of the primary constraint yields the secondary constraint $\partial_i P^i = 0$, which is Gauss's law $\vec{\nabla} \cdot {\vec{E}} = 0$ in the absence of charges.
 
At this stage, one may impose a gauge condition. For example, in \textit{temporal gauge}, the Hamiltonian takes the form
\begin{equation}
    H = \frac{1}{2}\int \D^3 \vec{x} \ \big(\vec{E}^2 (\vec{x}) + \vec{B}^2(\vec{x}) \big),
\label{eq:EM_hamiltonian}
\end{equation}
where the magnetic field is $\vec{B} = \vec{\nabla} \times \vec{A}$. Another common choice is \emph{Coulomb gauge}, with $\partial_i A^i = 0$.
To identify QRFs and interpret QED as a QECC, we now consider the lattice version of the theory in its Hamiltonian formulation \cite{KS1975,Kogut1979}.
After introducing lattice QED in \cref{ssec:LQED}, we show in \cref{ssec:spanning_tree_QRFs} that the pure-gauge sector admits complete ideal QRFs in the form of spanning trees. \Cref{ssec:QED_matter_QRF} then explains how the inclusion of charged matter changes this picture and leads to a new, non-ideal QRF built from fermionic matter.

\subsection{Introduction to Lattice QED}
\label{ssec:LQED}

We consider QED in temporal gauge. In its lattice formulation in $3+1$ dimensions, time remains continuous while the three spatial dimensions are discretized.

Specifically, let us consider a cubic lattice in three dimensions $\Gamma \subset (a\mathbb{Z})^3$ with lattice spacing $a$, vertex set $\mathcal{V}$ and links $\mathcal{L}$. The vertices (or sites) $v = (v_1, v_2, v_3)\in\mathcal{V}$ are located at positions\footnote{To distinguish the continuum from the lattice, we use arrows to indicate vectors in the former.} $x_v =(av_1, av_2, av_3)$. 
The links $l = [v, v+e_i]\in\mathcal{L}$ are positively oriented along the axes $e_1, e_2$ and $e_3$\footnote{The specific choice of orientation serves only as a convention and has no physical significance.}.  
The lattice can have different types of boundary conditions. For instance, it can be infinite, and hence translation-invariant; finite with periodic boundary conditions; or finite with either amputated or dangling boundary links. The latter two are referred to as {\it smooth boundaries} (for amputated links) or {\it rough boundaries}  (for dangling links). In what follows, we focus on the infinite lattice, the finite lattice with periodic boundary conditions, and the finite lattice with smooth boundaries. 

Each link is assigned a copy of  $L^2(\operatorname{U}(1))$, which encodes the degrees of freedom of the gauge field. In other words, each link corresponds to a planar quantum rotor with $\HS_\rot = L^2(\operatorname{U}(1))$. We discuss quantum rotors and their ties to error correction in \cref{sssec:rotor_codes}.
The total kinematical Hilbert space is 
\begin{equation}
    \HS_\text{kin} = \bigotimes_{l\in\mathcal{L}} L^2 \left(\operatorname{U}(1)\right) = \bigotimes_{l\in\mathcal{L}} \HS_\rot.
\end{equation}
On each link, the Hilbert space is spanned by the formal basis $\{\ket{e^{i\theta}}_l  \, | \,  \theta \in [0, 2\pi) \}$, normalized to $\braket{e^{i\theta}}{e^{i\phi}}_l = \delta(\theta-\phi)$. By a Fourier transform, we obtain the {\em electric flux basis} $\{\ket{k}_l \}_{k\in\mathbb{Z}}$, where 
\begin{equation}
    \ket{k}_l = \int_0^{2\pi} \frac{\D\theta}{\sqrt{2\pi}}e^{ik\theta}\ket{e^{i\theta}}_l.
\end{equation}
The conjugate position and momentum operators on $\HS_\rot$ are $U_l = \int_0^{2\pi}\D \theta \ e^{i\theta} \ket{e^{i\theta}}\bra{e^{i\theta}}_l$ (in the context of lattice QED, referred to as the {\it link operator}) and $\epsilon_l = \sum_{k\in\mathbb{Z}}k\ket{k}\bra{k}_l$ (the {\it electric field operator}). The link operator shifts the electric flux on a link, $U_l^m\ket{k}_l = \ket{k+m}_l$. In \cref{sssec:rotor_codes}, we define the angle shift operator $X_l(\lambda) = e^{-i\lambda\epsilon_l}$, acting as $X_l(\lambda)\ket{e^{i\theta}}_l = \ket{e^{i(\theta+\lambda)}}_l$, and find the braiding relation $U_lX_l(\lambda) = e^{i\lambda}X_l(\lambda)U_l$.

\paragraph{Plaquette operators and the Kogut-Susskind Hamiltonian.} To set up the Hamiltonian of lattice QED, we first need to introduce plaquettes in the cubic lattice\footnote{This construction works for any spatial dimension $D \geq 2$.}. We take these to be the smallest loops (oriented counter-clockwise with respect to $e_1, e_2, e_3$), consisting of the links 
\begin{equation}
\centertikz{
    \node (v) at (0,0) {\small$v$};
    \node (vi) at (2,0) {\small$v+e_i$};
    \node (vij) at (2,2) {\small$v+e_i+e_j$};
    \node (vj) at (0,2) {\small$v+e_j$};
    \draw[->] (v)--node[above] {\small$l_1$} (vi);
    \draw[->] (vi)--node[left] {\small$l_2$}(vij);
    \draw[->] (vij)--node[below] {\small$l_3$}(vj);
    \draw[->] (vj)--node[right] {\small$l_4$}(v);
    
}
    \label{eq:plaquette}
\end{equation} where $v$ is a vertex and $i \neq j$. We then write $p_{ij}(v) = (l_1, l_2, l_3, l_4)$ for this plaquette and $\mathcal{P}$ for the set of all such plaquettes in the lattice.

Taking products of the link operators along a plaquette with inverses whenever passing through a link in the negative direction results in the {\it plaquette operators}. For $p_{ij}(v)$ such that $l_3$ and $l_4$ are passed negatively, we have
\begin{equation}
    U_{ij}(v) \defeq U_{l_1}U_{l_2}U_{l_3}^\dagger U_{l_4}^\dagger = e^{i(\Theta_{l_1}+\Theta_{l_2}-\Theta_{l_3}-\Theta_{l_4})} \eqdef e^{i\Theta_{ij}(v)}
\end{equation}
where $\Theta_{ij}(v) \defeq \Theta_{l_1} + \Theta_{l_2} - \Theta_{l_3} - \Theta_{l_4} $ is the discrete curl.

The Kogut-Susskind Hamiltonian of the pure gauge system is given by
\begin{equation}
    H = \frac{g^2}{2a}\sum_{l\in\mathcal{L}}\epsilon_l^2 - \frac{1}{ag^2}\sum_{p_{ij}(v)\in\mathcal{P}} (U_{ij}(v) + U_{ij}(v)^\dagger)/2 = \frac{g^2}{2a}\sum_{l\in\mathcal{L}}\epsilon_l^2 - \frac{1}{ag^2}\sum_{p_{ij}(v)\in\mathcal{P}} \cos\Theta_{ij}(v),
    \label{eq:KS_ham_pg}
\end{equation}
where $g$ is the coupling constant. In \cref{app:continuum_limit}, we discuss the continuum limit of this expression and relate the operators $\epsilon_l$ to the electric field, $\Theta_l$ to the gauge field, and its discrete curl to the magnetic field.

\paragraph{Gauge transformations and Gauss's law.}
Let us now include a charged matter field $\psi(x)$ into the system. On the lattice, this is discretized to $\psi_v$ on the vertices $v$ as we will see in more detail below. 
In the continuum case, a gauge transformation corresponds to a transformation of the charged matter fields as $\psi(x)\mapsto e^{-i\lambda(x)}\psi(x)$ and of the gauge fields as $A_i(x)\mapsto A_i(x)+\frac{1}{g}\partial_i\lambda(x)$ for some function $\lambda(x)$. As shown in \cref{app:gauss_law}, the discrete analogue of a gauge transformation is
\begin{equation}
\label{eq:discrete_gauge_transformation_analogue}
    \psi_v\mapsto e^{-i\lambda_v}\psi_v \quad\text{and}\quad
    \Theta_{[v,v+e_i]} \mapsto  \Theta_{[v,v+e_i]} + \Delta_{v, v+e_i}\lambda,
\end{equation}
where the the function $\lambda(x)$ is discretized to $\lambda_v$ on $v\in \mathcal{V}$ and $\Delta_{v, v+e_i}\lambda \defeq \lambda_{v+e_i}-\lambda_{v}$. 
This local gauge transformation at $v$ is generated by
\begin{equation}
    \mathcal{C}_v = \sum_{i=1}^3(\epsilon_{[v, v+e_i]}-\epsilon_{[v-e_i, v]}) - \rho_v
    \label{eq:gauss_law_constraint}
\end{equation}
and acts on states as $e^{i\lambda_v\mathcal{C}_v}$ (see \cref{app:gauss_law}). Here, $\rho_v$ is a tentative density operator which generates phase rotations of the matter, which we introduce in the next paragraph.
Since $e^{i\lambda_v\mathcal{C}_v}$ leaves physical states invariant, those states satisfy the constraint $\mathcal{C}_v\ket{\psi}_\phys = 0$, recognizable as a discrete version of Gauss' law $\nabla\cdot E - \rho = 0$.
In the pure gauge theory without matter, the constraints are simply modified to $\mathcal{C}_v = \sum_{i=1}^3(\epsilon_{[v, v+e_i]}-\epsilon_{[v-e_i, v]})$ and represent $\nabla\cdot E = 0$.
The group of gauge transformations is
\begin{equation}
    \mathcal{G} = \left\{\prod\nolimits_{v\in\mathcal{V}} e^{i\lambda_v\mathcal{C}_v}  \, \middle| \,  \lambda_v \in [0, 2\pi) \ \forall v\in\mathcal{V}\right\}.
\end{equation}
The physical states are obtained from the kinematical Hilbert space via the projector
\begin{equation}
    \Pi_{\phys} = \prod_{v\in\mathcal{V}}\left(\int_0^{2\pi}\frac{d\lambda_v}{2\pi} e^{i\lambda_v\mathcal{C}_v} \right), \quad \Pi_{\phys}(\HS_{\text{kin}}) = \HS_\phys.
    \label{eq:gauss_projector}
\end{equation}
Note that the pure gauge Hamiltonian \labelcref{eq:KS_ham_pg} is gauge invariant as each individual component commutes with the constraints $\mathcal{C}_v$.

\paragraph{Fermionic matter: discretizing the Dirac field.} Let us now introduce a discretized version of the Dirac field, which, in QED, describes both electrons and positrons and couples to the electromagnetic field. 
Technical issues, such as fermion doubling \cite{Susskind_1997}, arise when naively discretizing the Dirac equation to a lattice. Different discretizations have been proposed to address these issues \cite{KARSTEN1981}. Here, we present and make use of the Kogut-Susskind prescription of {\it staggered} fermionic fields \cite{KS1975, Kogut1979}.

With each vertex, we associate a two-dimensional Hilbert space
$\HS_{\text{matter}} = \bigotimes_{v\in\mathcal{V}}\mathbb{C}^2$ equipped with fermionic creation and annihilation operators $\psi_v^\dagger$ and $\psi_v$ satisfying the anti-commutation relations
\begin{equation}
    \{\psi_v, \psi_{v'}^\dagger\} = \delta_{v, v'},\quad \{\psi_v, \psi_{v'}\} = \{\psi_v^\dagger, \psi_{v'}^\dagger\}=  0.
\end{equation}
We consider the number basis spanned by the eigenstates of $\psi_v^\dagger\psi_v$, $\ket{0}_v$ and $\ket{1}_v$, labeled by their eigenvalue. 
The joint description of particles and anti-particles in the staggered formulation is obtained by distinguishing between even and odd sites ($v=(v_1,v_2,v_3)$ is even if $(-1)^{|v|} = (-1)^{v_1+v_2+v_3} = 1$ and odd if $(-1)^{|v|} = -1$). On even sites, the state $\ket{1}_v$ is interpreted as a positively charged fermion while on odd sites, the state $\ket{0}_v$ represents a negatively charged anti-fermion. The charge density at a vertex $v$ is then defined accordingly,
\begin{equation}
    \rho_v = \psi_v^\dagger\psi_v - j_v = \begin{cases}
        \psi_v^\dagger\psi_v\quad v \text{\ even}\\
        \psi_v^\dagger\psi_v - 1\quad v \text{\ odd}
    \end{cases}.
\end{equation}
Here, $j_v = \frac{1}{2}(1-(-1)^{|v|})$ is the parity indicator function.

\paragraph{Hamiltonian of lattice QED with fermionic matter.}
Here we provide the total Hamiltonian of interacting lattice QED with fermionic matter following Kogut and Susskind \cite{KS1975}. Firstly, the Hamiltonian contains a gauge invariant mass term for the matter field, called staggered mass,
\begin{equation}
    \sum_{v\in\mathcal{V}}(-1)^{|v|}m\psi^\dagger_v\psi_v.
\end{equation}
On odd sites, an occupied state $\ket{1}_v$ (i.e., no anti-fermions) contributes $-m$ to the sum while a hole $\ket{0}_v$ (i.e., an anti-fermion) contributes $0$. 
Then, we introduce a hopping term describing the creation and annihilation of particle - anti-particle pairs. This is realized by the gauge-invariant operators $\psi_v^\dagger U_{[v,v+e_i]}\psi_{v+e_i}$ (or their hermitian conjugate) which conserve the global charge. Notice that to satisfy Gauss's law after this change of charge density on $v$ and $v+e_i$, the operator $U_l$ is needed to adjust the electric flux on $[v, v+e_i]$. This leads to the hopping term
\begin{equation}
    \frac{i}{2a}\sum_{v\in\mathcal{V}, \ i} \eta_{v,i}(\psi_v^\dagger U_{[v, v+e_i]}\psi_{v+e_i} - \psi_{v}^\dagger U_{[v-e_i, v]}^\dagger\psi_{v-e_i}),
    \label{eq:hopping_term}
\end{equation}
with the staggered sign factor $\eta_{v, i} = (-1)^{\sum_{j<i}v_j}$.

Finally, the full Kogut-Susskind Hamiltonian reads
\begin{equation}
    \begin{aligned}
        H_{m} = &\sum_{v\in\mathcal{V}}(-1)^{|v|}m\psi^\dagger_v\psi_v +\frac{i}{2a}\sum_{v\in\mathcal{V}, \ i} \eta_{v,i}(\psi_v^\dagger U_{[v, v+e_i]}\psi_{v+e_i} - \psi_{v}^\dagger U_{[v-e_i, v]}^\dagger\psi_{v-e_i})  \\
        & + \frac{g^2}{2a}\sum_{l\in\mathcal{L}}\epsilon_l^2 - \frac{1}{ag^2}\sum_{p_{ij}(v)\in\mathcal{P}} (U_{ij}(v) + U_{ij}(v)^\dagger)/2.
    \end{aligned}
    \label{eq:full_hamiltonian}
\end{equation}
One can show that this Hamiltonian does indeed yield a discrete version of the Klein-Gordon equation for the staggered field $\psi_v$ \cite{CPS_2025}.

\paragraph{Wilson loops and holonomies.}
The gauge-invariant hopping operators can be generalized across multiple links. In this case, the operators $\psi_v$ and $\psi_{v'}^\dagger$ need to be connected by a string of link operators along a path $\gamma$ connecting $v$ and $v'$. This string of link operators is referred to as a {\it Wilson line}, and (if $\gamma$ has links $\{l_1, l_2, \dots, l_n\}$) is given by
\begin{equation}
    W_\gamma = \prod_{l\in\gamma}U_l^{\sigma_{l}},
\label{eq:wilson_line}
\end{equation}
where $\sigma_l$ is the sign of the direction in which $\gamma$ passes $l$. As discussed in \cref{app:gauss_law}, this operator implements the parallel transport along $\gamma$ and adjusts the phase of a charge when transported from $v$ to $v'$ via $\psi_{v'}^\dagger W_\gamma \psi_v$.
The curvature of the gauge field is measured by Wilson lines on closed loops, which we refer to as {\it holonomy operators} or {\it Wilson loops}\footnote{In higher-dimensional cases, Wilson loops are also traced over; for $\operatorname{U}(1)$, this does not matter.}. We denote the holonomy operator around a closed loop $\gamma$ by
\begin{equation}
    H_\gamma = \prod_{l\in\gamma}U_l^{\sigma_{l}}.
    \label{eq:holonomy}
\end{equation}
We refer to the eigenvalues $h_\gamma$ of these operators as {\it holonomies}.

\subsection{QRFs for Pure Gauge Lattice QED}
\label{ssec:spanning_tree_QRFs}

In this section, we first focus on the pure gauge sector without the charged matter field.
We will denote a gauge transformation by the angles $\bm{\lambda} = \{\lambda_v\}_{v\in\mathcal{V}}\in[0, 2\pi)^{\times|\mathcal{V}|}$ as
\begin{equation}
    G(\bm{\lambda}) \defeq \prod_{v\in\mathcal{V}}e^{i\lambda_v\mathcal{C}_v} = \bigotimes_{l=[v, v']\in\mathcal{L}}X_l(\lambda_{v'}-\lambda_v) \in\mathcal{G},
\end{equation}
where  $X_l$ is the angle shift operator $X_l(\lambda) = e^{-i\lambda\epsilon_l}$.
To construct complete, ideal reference frames for $\mathcal{G}$, we need to find a set of links $R$ such that we can define appropriate orientation states in $\HS_R$ to parametrize the action of $\mathcal{G}$. In this way, $R$ will serve as a reference frame for the remaining links $S$.
The action splits into a product $G(\bm{\lambda})=G_R(\bm{\lambda})\otimes G_S(\bm{\lambda})$ of factors on $R$ and on $S$ in the obvious way, i.e.,
\begin{equation}
    G_R(\bm{\lambda}) = \bigotimes_{l = [v, v']\in R} X_l(\lambda_{v'}-\lambda_v),
\end{equation}
and similarly for $G_S(\bm{\lambda})$. The constraints also split up into a sum of terms with support on $R$ and $S$, $\mathcal{C}_{v, R} = \sum_{l_{\text{out}}\in R}\epsilon_{l_{\text{out}}}-\sum_{l_{\text{in}}\in R}\epsilon_{l_{\text{in}}}$ and similarly for $\mathcal{C}_{v, S}$.
For simplicity, we will also denote the sum of constraints associated with a vertex set $V\subset\mathcal{V}$ as $\mathcal{C}_V\defeq \sum_{v\in V}\mathcal{C}_v$,
and we write $\mathcal{C}_{V, R} = \sum_{v\in V}\mathcal{C}_{v, R}$ for the restriction of this sum to $R$.

In the pure gauge theory, not all constraints $\mathcal{C}_v$ are independent. Indeed,
\begin{equation}  \sum_{v\in\mathcal{V}}\mathcal{C}_v = 0
    \label{eq:sum_to_zero}
\end{equation}
holds when the lattice $\Gamma$ is either infinite, or finite with smooth or periodic boundary conditions, in which case there are $|\mathcal{V} |-1$ independent constraints. This means that a global gauge transformation (by the same angle on every site) is trivial. 
In other words, if $\lambda_v  - \eta_v = \alpha$ for all $v$, then $G(\bm{\lambda}) = G(\bm{\eta})$.

If $\HS_R$ admits an orthonormal basis such that every non-trivial transformation $G_R(\bm{\lambda})\neq I_R$ maps each basis element to a different one, then $R$, equipped with  this basis as orientation states, is a complete and ideal reference frame.  
Intuitively speaking, the basis elements keep track of the orientation induced by the gauge transformations and, since they are orthogonal, these orientations are perfectly distinguishable. 
As captured by the following theorem, such sets of links arise from spanning trees of the lattice. These are subgraphs of the lattice which span all of the vertices but contain no loops.

The following theorem is valid for the types of lattice $\Gamma$ for which~\cref{eq:sum_to_zero} holds.

\begin{restatable}[Spanning tree QRFs]{theorem}{spanningQRF}
\label{thm:LGT_QRF_pure}
    The gauge field spaces on the links of a spanning tree $T$ of $\Gamma$ form a QRF $R$ for $\mathcal{G} = \left\{G(\bm{\lambda})  \, | \,  \bm{\lambda}\in[0, 2\pi)^{\times |\mathcal{V}|}\right\}$ with the following properties:
    \begin{enumerate}
        \item $\HS_\kin = \HS_R\otimes\HS_S$, where $\HS_R = \bigotimes_{l\in R}\HS_\rot$ (similarly for $\HS_S$) and $\mathcal{G}$ factorizes to $G(\bm{\lambda}) = G_R(\bm{\lambda}) \otimes G_S(\bm{\lambda})$.
        \item There exists a set of constraints $\{\mathcal{C}_{V_l} \ | \ l \in R\}$ such that $\mathcal{C}_{V_l, R} = -\epsilon_l$ through which we can uniquely parametrize gauge transformations by $\bm{\lambda} = \{\lambda_l\}_{l\in R}$,
        \begin{equation}
            G'(\bm{\lambda}) \defeq \prod_{l\in R}e^{i\lambda_l\mathcal{C}_{V_l}}, \quad G'_R(\bm{\lambda})=\bigotimes_{l\in R}X_l(\lambda_l).
        \end{equation}
        \item The orientation states $\ket{\phi(\bm{\lambda})}_{R} = G'_{R}(\bm{\lambda})\left(\bigotimes_{l\in R}\ket{e^{i(\theta = 0)}}_l\right)$ provide a formal orthonormal basis of $\HS_R$ which transforms covariantly under gauge transformations, 
        \begin{equation}
            G'_{R}(\bm{\eta})\ket{\phi(\bm{\lambda})}_{R} = \ket{\phi(\bm{\lambda} + \bm{\eta})}_{R}, \quad \braket{\phi(\bm{\eta})}{\phi(\bm{\lambda})}_{R} = \prod_{l\in R}\delta(\eta_l -\lambda_l),
        \end{equation}
        making $R$ an ideal QRF.
    \end{enumerate}
\end{restatable}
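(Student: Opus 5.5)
Property 1 is immediate from the definitions. Since $\HS_\kin=\bigotimes_{l\in\mathcal{L}}\HS_\rot$ and $\mathcal{L}=R\sqcup S$ with $R$ the links of $T$, the Hilbert space factorizes. Moreover, $G(\bm{\lambda})=\bigotimes_{l}X_l(\lambda_{v'}-\lambda_v)$ is a tensor product over links, so it splits as $G_R\otimes G_S$.

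The substance is property 2, which I would build from the tree structure. For each $l=[v,v']\in R$, removing $l$ from $T$ disconnects it into exactly two components, since $T$ has no loops. Let $V_l$ be the vertex set of the component containing $v'$, the head of $l$. Every link of $T$ other than $l$ has both endpoints in the same component. Hence in $\mathcal{C}_{V_l,R}=\sum_{u\in V_l}\mathcal{C}_{u,R}$ each such link with both ends in $V_l$ appears once with $+$ (at its tail) and once with $-$ (at its head), and cancels. Links with both ends outside $V_l$ do not appear. The only surviving term is $l$ itself, which enters $V_l$, so $\mathcal{C}_{V_l,R}=-\epsilon_l$, as claimed. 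Then $e^{i\lambda_l\mathcal{C}_{V_l}}$ restricted to $R$ equals $e^{-i\lambda_l\epsilon_l}=X_l(\lambda_l)$. These exponentials commute, so $G'_R(\bm{\lambda})=\bigotimes_{l\in R}X_l(\lambda_l)$.

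It remains to show that $\{G'(\bm{\lambda})\}$ is all of $\mathcal{G}$ and that the parametrization is unique. Here I would use \cref{eq:sum_to_zero}: modulo global shifts, a gauge transformation is determined by the differences $\lambda_{v'}-\lambda_v$ along tree links. Fix a root $r$ and set $\lambda_r=0$. Because $T$ is spanning and each vertex has a unique tree path to $r$, the map $\{\lambda_v\}_{v\neq r}\mapsto\{\lambda_{v'}-\lambda_v\}_{[v,v']\in T}$ is a bijection on $U(1)^{|\mathcal{V}|-1}$; its inverse sums the link angles along the tree path.

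Using $e^{i\lambda_l\mathcal{C}_{V_l}}=\prod_{u\in V_l}e^{i\lambda_l\mathcal{C}_u}$, the product $G'(\bm{\lambda})$ equals $G(\bm{\mu})$ with $\mu_u=\sum_{l:\,u\in V_l}\lambda_l$. This is a triangular (path-sum) relation. With the global shift fixed by \cref{eq:sum_to_zero}, it inverts to $\lambda_l=\mu_{v'}-\mu_v$, so the map is a bijection between $[0,2\pi)^{|R|}$ and $\mathcal{G}$. Uniqueness also follows directly: $G'_R(\bm{\lambda})=I_R$ forces all $\lambda_l=0$ modulo $2\pi$. I expect this counting step to be the main obstacle. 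One must check that $|R|=|\mathcal{V}|-1$ matches the number of independent constraints, and handle the infinite lattice with some care, for example through nested finite subtrees or an infinite tree with a chosen root.

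Property 3 then reduces to single-rotor facts. Since $X_l(\lambda)\ket{e^{i\theta}}_l=\ket{e^{i(\theta+\lambda)}}_l$, we get $\ket{\phi(\bm{\lambda})}_R=\bigotimes_{l}\ket{e^{i\lambda_l}}_l$. These states satisfy the formal orthonormality $\prod_l\delta(\eta_l-\lambda_l)$ and the resolution of identity $\int\prod_l\D\lambda_l\,\ket{\phi}\bra{\phi}=I_R$, so \cref{eq:resolution_of_I} holds with $c=1$. The covariance $G'_R(\bm{\eta})\ket{\phi(\bm{\lambda})}=\ket{\phi(\bm{\lambda}+\bm{\eta})}$ follows from the additivity $X_l(\eta)X_l(\lambda)=X_l(\eta+\lambda)$. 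The action is free because distinct $\bm{\lambda}$ label orthogonal states, so $R$ is complete and ideal.
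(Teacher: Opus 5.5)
Your proof is correct. For properties 1 and 3, and for the choice of $V_l$ (the component of $T\setminus l$ containing the head of $l$, where telescoping gives $\mathcal{C}_{V_l,R}=-\epsilon_l$), it is the paper's argument. You argue differently only in the generation and uniqueness step.

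The paper works at the level of generators. On a finite lattice it counts: there are $|\mathcal{V}|-1$ constraints $\mathcal{C}_{V_l}$ and $|\mathcal{V}|-1$ independent generators. On an infinite lattice it uses \cref{eq:sum_to_zero} to write
\[
\mathcal{C}_v=-\sum_{l_{\text{out}}=[v,*]\in T}\mathcal{C}_{V_{l_{\text{out}}}}+\sum_{l_{\text{in}}=[*,v]\in T}\mathcal{C}_{V_{l_{\text{in}}}},
\]
which is equivalent to $\sum_v\eta_v\mathcal{C}_v=\sum_{l=[v,v']\in T}(\eta_{v'}-\eta_v)\,\mathcal{C}_{V_l}$.

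You work at the level of parameters. You compute $G'(\bm{\lambda})=G(\bm{\mu})$ with $\mu_u=\sum_{l:\,u\in V_l}\lambda_l$ and invert it via $\lambda_l=\mu_{v'}-\mu_v$. The inversion is exact because, for $l'\neq l$, both endpoints of $l$ lie on the same side of $l'$. This gives an explicit bijection, so you prove injectivity outright. The paper's counting argument tacitly needs the $\mathcal{C}_{V_l}$ to be independent, which does follow from $\mathcal{C}_{V_l,R}=-\epsilon_l$ but is never stated.

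The paper's route is better suited to infinite lattices. There your $\mu_u$ is in general an infinite sum of angles: if $T$ contains a full coordinate axis, every vertex on it lies in the head component of infinitely many axis links. So the rooted variant you mention is needed there, not optional. In it, you take for each $l$ the component away from the root and fix the sign with \cref{eq:sum_to_zero}, so that only the finitely many links on the path to the root contribute to $\mu_u$. The generator identity above avoids the problem altogether, since each $\mathcal{C}_v$ involves only the tree links incident to $v$.
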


\begin{proof}
See~\cref{app:proofs}.
\end{proof}

The completeness of the spanning-tree QRF is tied to the special property \cref{eq:sum_to_zero} of the  pure gauge  Gauss constraints on lattices without boundary flux contributions of dangling links. Then, only $|\mathcal{V}|-1$ independent gauge parameters need to be fixed. Since a spanning tree contains exactly $|\mathcal{V}|-1$ links, its link degrees of freedom suffice to resolve every non-trivial gauge transformation, making the associated QRF complete.

On a finite lattice, the operators $\mathcal{P}_R^{\bm{\lambda}} = (2\pi)^{\frac{|\mathcal{V}|-1}{2}}(\ket{\phi(\bm{\lambda})}\bra{\phi(\bm{\lambda})}_R \otimes I_S)$ constitute a complete gauge-fixing and remove any redundancy.
Gauge fixing via spanning trees (often referred to as maximal trees in this context) is common practice in the field of lattice gauge theory without explicitly invoking QRFs. A few examples are \cite{Creutz_1977,LWB_2000, bauer_2023, Mariani_2024, MO_2018, Banburski:2014cwa}, and a textbook introduction can be found in \cite{GL_2010_textbook}.
\Cref{thm:LGT_QRF_pure} illustrates this process by explicitly connecting spanning trees to QRFs for the gauge group. 
A similar construction of spanning trees as perspective-neutral QRFs was recently presented in \cite{AHS_25}, where the QRFs were used to calculate relational entanglement entropies.

\paragraph{The holonomy basis of $\HS_{\phys}$.}
The spanning tree QRFs, constructed in \cref{thm:LGT_QRF_pure}, define a split between the redundant and physical degrees of freedom on the lattice. 
Using the associated Page–Wootters reduction maps, we can extract the physical degrees of freedom encoded in $\HS_{\phys}$ explicitly.
The spanning tree QRFs are ideal, and gauge-fixing their orientation removes all the redundancy. The reduction maps are thus unitaries onto $\HS_S$,
\begin{equation}
\mathcal{R}^{\bm{\lambda}}_R:\HS_{\phys}\rightarrow  \HS_{S|R}^{\bm{\lambda}} =\HS_S,
\end{equation}
where $S = \mathcal{L}\setminus R$. On a finite $N_1\times N_2 \times N_3$ lattice, a spanning tree contains $N_1N_2N_3-1$ links and thus, $\HS_S$ consists of $K \defeq \mathcal{|L|} - N_1N_2N_3+1$ rotor spaces $\HS_\rot$. The system therefore encodes $K$ rotors, each assigned to a link in $S$ by $\mathcal{R}^{\bm{\lambda}}_R$.

This admits a simple geometric interpretation: Since $R$ is a spanning tree, every link $l \in S$ closes a unique loop $\gamma_l$ when added to the tree. Each non-tree link therefore labels one independent holonomy, and these links hence naturally parametrize the physical information that remains after gauge fixing on $R$.
We refer to these independent holonomies as the {\it fundamental holonomies}, associated with a {\it holonomy basis}, presented in \cite{Mariani_2024} for lattice gauge theories with finite groups.
The fundamental holonomy associated with $l = [v, v']\in S$ is defined as the holonomy on the loop $\gamma_l$, obtained by connecting $v'$ to $v$ on the spanning tree and closing the loop along $l$. We write $H_l \defeq H_{\gamma_l}$ for this operator. 
Any other holonomy operator can be written uniquely as a product of the fundamental holonomies: If $H_\gamma$ passes a set of links $L_\gamma\subset S$, each with orientation $\sigma_l = \pm 1$, then
\begin{equation}
    H_\gamma = \prod_{l\in L_\gamma} H_l^{\sigma_l}.
\end{equation}
The eigenstates of the fundamental holonomies form the holonomy basis of $\HS_\phys$.

\begin{restatable}[Holonomy basis]{proposition}{holbasis}
\label{prop:holonomy_basis}
    Consider the inverse reduction map\footnote{We choose to denote $\ket{e^{i\theta}}$ for $\theta = 0$ as $\ket{e^{i 0 }}$ to avoid confusion with the state $\ket{1}\equiv\ket{k=1}$ in the electric flux basis.}
    \begin{equation}
        (\mathcal{R}_R^{\bm{0}})^\dagger = (2\pi)^\frac{|\mathcal{V}|-1}{2}\Pi_\phys (\ket{e^{i0}, e^{i0}, \dots, e^{i0}}_R\otimes I_S) : \HS_S\rightarrow \HS_{\phys}.
    \end{equation}
    The states $(\mathcal{R}_R^{\bm{0}})^\dagger \bigotimes_{l\in S}\ket{e^{i\theta_l}}_l \in \HS_{\phys}$ are eigenstates of the fundamental holonomies with eigenvalue $h_l = e^{i\theta_l}$ for $H_l$. Accordingly labeled, the states $\ket{\{h_l\}} = (\mathcal{R}_R^{\bm{0}})^\dagger \bigotimes_{l\in S}\ket{e^{i\theta_l}}_l$ form the orthonormal \textit{holonomy basis}
    \begin{equation}
        \left\{\ket{\{h_l\}} \, | \, \forall l\in S:  h_l\in\operatorname{U}(1) \right\}
    \end{equation}
    of $\HS_{\phys}$.
\end{restatable}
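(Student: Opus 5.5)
The proof has two parts. First, orthonormality and completeness of the holonomy basis follow from properties of the reduction map. Second, the eigenvalue statement follows from commuting the holonomy operators through the physical projector.

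\textbf{Orthonormality and completeness.} By \cref{thm:LGT_QRF_pure} the spanning tree $R$ is an ideal QRF. By the discussion in \cref{sec:QEC_in_Gauge}, $\mathcal{R}_R^{\bm{0}}$ is then a unitary from $\HS_\phys$ onto $\HS_{S|R}^{\bm{0}}=\HS_S$. Its adjoint $(\mathcal{R}_R^{\bm{0}})^\dagger$ is therefore a unitary from $\HS_S$ onto $\HS_\phys$. It satisfies $\mathcal{R}_R^{\bm{0}}(\mathcal{R}_R^{\bm{0}})^\dagger=I_S$, and $(\mathcal{R}_R^{\bm{0}})^\dagger\mathcal{R}_R^{\bm{0}}=\Pi_\phys$ by \cite[Lemma 8]{delahamette_2021}. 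The product states $\bigotimes_{l\in S}\ket{e^{i\theta_l}}_l$ form a formal orthonormal basis of $\HS_S$, with delta normalization. Their images under this unitary therefore form a formal orthonormal basis of $\HS_\phys$:
\[
\braket{\{h_l\}}{\{h'_l\}}=\prod_{l\in S}\delta(\theta_l-\theta'_l).
\]
Spanning $\HS_\phys$ is immediate from surjectivity. I will also note that the overall prefactor $(2\pi)^{(|\mathcal{V}|-1)/2}$ is consistent with $N=|G|/c$ for the normalization chosen in \cref{thm:LGT_QRF_pure}.

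\textbf{Eigenvalue statement.} I will use two facts. First, each $H_l$ is gauge invariant, since it is a Wilson loop, so it commutes with every $e^{i\lambda_v\mathcal{C}_v}$ and hence with $\Pi_\phys$. To check this, note that conjugating $U_{l'}^{\sigma}$ by $G(\bm{\lambda})$ multiplies it by $e^{i\sigma(\lambda_{v'}-\lambda_v)}$, and these phases telescope to $1$ around the closed loop $\gamma_l$. This uses the braiding relation $U_lX_l(\lambda)=e^{i\lambda}X_l(\lambda)U_l$. Second, $H_l$ is a product of $U$-operators, so it is diagonal in the $\operatorname{U}(1)$-position basis of every link. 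Its eigenvalue on a product position state is $e^{i\sum_{l'\in\gamma_l}\sigma_{l'}\theta_{l'}}$.

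Using these two facts, I will compute
\[
H_l(\mathcal{R}_R^{\bm 0})^\dagger\bigotimes_{l'\in S}\ket{e^{i\theta_{l'}}}
=(2\pi)^{\frac{|\mathcal{V}|-1}{2}}\,\Pi_\phys H_l\Big(\ket{e^{i0},\dots,e^{i0}}_R\otimes\bigotimes_{l'\in S}\ket{e^{i\theta_{l'}}}\Big).
\]
The loop $\gamma_l$ consists of $l$ together with tree links only. The tree links are all at angle $0$, so the phase collapses to $e^{i\sigma_l\theta_l}$. With the orientation convention that $\gamma_l$ traverses $l$ positively, this gives $e^{i\theta_l}=h_l$. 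The same computation holds for every $l\in S$ simultaneously.

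\textbf{Main obstacle.} The delicate step is the rigour of working with improper position eigenstates and delta-normalized objects: the action of $\Pi_\phys$ on them, and making sense of unitarity on a formal basis. On a finite lattice I would handle this by treating everything as distributional identities, smeared against test functions in $L^2$. In that form, the commutation $[H_l,\Pi_\phys]=0$ and the unitarity of the reduction map hold honestly, and the stated identities follow by linearity.
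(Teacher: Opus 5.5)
Your proposal is correct and follows the paper's proof essentially step for step: the unitarity of $(\mathcal{R}_R^{\bm 0})^\dagger$ from $\HS_S$ onto $\HS_\phys$ carries the product position basis to an orthonormal basis, and the eigenvalue follows by commuting the gauge-invariant $H_l$ through $\Pi_\phys$ and evaluating it on the product state with all tree links at angle $0$. You also make explicit two things the paper leaves implicit: the telescoping check that $H_l$ is gauge invariant, and the orientation convention under which $\gamma_l$ traverses $l$ positively.
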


\begin{proof}
    See \cref{app:proofs}.
\end{proof}

The role of the spanning-tree QRF can thus be understood as follows. 
Fixing the gauge on the spanning tree links fixes their contribution to the holonomies, and the only remaining dynamical degrees of freedom reside on the links outside the tree. These unfixed links thus fully determine the holonomies and carry all of the physical information. Since we can associate a fundamental holonomy to each of them, the holonomy basis yields an explicit parametrization of the encoded rotors. The map $(\mathcal{R}^{\bm{0}}_R)^\dagger$ concretely realizes this parametrization.

Associated with each of the $K$ encoded rotors for $l\in S$, there are $X_l(\lambda)$ and $U_l^m$ operators on $\HS_S$. These are encoded as gauge-invariant operators on the full lattice whose form we can see from \cref{prop:holonomy_basis}:
\begin{equation}
\label{eq:physical_operators_pure_gauge}
    H_l^m = (\mathcal{R}^{\bm{0}}_R)^\dagger U_l^m \mathcal{R}^{\bm{0}}_R \quad\text{and}\quad X_l(\lambda) = (\mathcal{R}^{\bm{0}}_R)^\dagger X_l(\lambda) \mathcal{R}^{\bm{0}}_R.
\end{equation}

\subsection{QRFs from Fermionic Matter}

\label{ssec:QED_matter_QRF}

So far, we discussed QRFs and their interpretation in the case of pure gauge lattice QED. In the following, we will include dynamical fermionic matter into the system.
As we will see, spanning trees still provide QRFs for this case; however, these QRFs are no longer complete. Moreover, we can also construct a QRF from the fermionic degrees of freedom.

\paragraph{The kinematical space of the full theory.} The kinematical Hilbert space of full lattice QED also includes the matter degrees of freedom on the sites,
\begin{equation}
    \HS_{\text{kin}} = \bigotimes_{v\in\mathcal{V}}\mathbb{C}^2\bigotimes_{l\in\mathcal{L}}\HS_\rot.
\end{equation}
The qubit spaces on the sites represent the fermions, and the computational basis indicates the occupation number, $\psi_v^\dagger\psi_v \ket{n}_v=n\ket{n}_v$ for $ n = 0, 1$. The constraints now also include the charge density. Importantly, this means that they are all independent and \cref{eq:sum_to_zero} is modified to
\begin{equation}
    \sum_{v\in\mathcal{V}}\mathcal{C}_v = -\sum_{v\in\mathcal{V}}\rho_v.
\end{equation}
We denote the constraints without the charge density as $\mathcal{C}_v^{\mathcal{L}}$, i.e., $\mathcal{C}_v = \mathcal{C}_v^{\mathcal{L}} - \rho_v$.

In this subsection, we focus on the situation where $\Gamma$ is finite such that $|\mathcal{V}|< \infty$ with either smooth or periodic boundary conditions.

\paragraph{Spanning tree QRFs for full lattice QED.}
A spanning tree still yields a QRF for the gauge group in the presence of charged matter. However, in contrast to \cref{thm:LGT_QRF_pure} for the pure gauge theory, this QRF does not suffice to resolve the full action of $\mathcal{G}$. Since all $|\mathcal{V}|$ constraints are independent, the constraints $\mathcal{C}_{V_l}$ with $l\in R$  from property 2 in \cref{thm:LGT_QRF_pure} cannot be used to uniquely parametrize gauge transformations. In particular, the orientation states $\ket{\phi(\bm{\lambda})}_R$ are invariant under the action of the global transformation generated by $\sum_{v\in\mathcal{V}}\mathcal{C}_v$. A spanning tree QRF $R$ is thus {\it incomplete} for lattice QED including charged matter and the operators $\mathcal{P}_R^{\bm{\lambda}}$ constitute a partial gauge-fixing.

\paragraph{Fermionic field as QRF.} Instead of using link degrees of freedom as the frame for $\mathcal{G}$, we can use the fermionic ones on the sites. A gauge transformation $e^{i\lambda_v\mathcal{C}_v}$ acts on the fermionic field via $e^{-i\lambda_v\rho_v}$ yielding a local phase $\psi_v \mapsto e^{i\lambda_v}\psi_v$. We can thus use the number states $\ket{0}_v$ and $\ket{1}_v$ to build a QRF for the local phase. This is reminiscent of a set of independently constrained 2-level quantum clocks on a lattice, using the relative phase between the two levels to parametrize a $\operatorname{U}(1)$ action.

\begin{restatable}[Fermionic field QRFs]{theorem}{matterfieldqrfs}
\label{thm:matter_qrf}
     The matter degrees of freedom yield a QRF $\tilde{R}$ such that $\HS_{\tilde{R}} = \HS_{\text{matter}} = \bigotimes_{v\in\mathcal{V}}\mathbb{C}^2$, and the remaining subsystem $\tilde{S}$ consists of all the link degrees of freedom. With respect to this structure, the gauge transformations split into a product
    \begin{equation}
        G(\bm{\lambda}) = 
        \underbrace{\bigotimes_{v\in\mathcal{V}}e^{-i\lambda_v \rho_v}}_{G_{\tilde{R}}(\bm{\lambda})} \otimes \underbrace{\bigotimes_{v\in\mathcal{V}}e^{i\lambda_v\mathcal{C}_v^{\mathcal{L}}}}_{G_{\tilde{S}}(\bm{\lambda})}.
    \end{equation}
    The orientation states
    \begin{equation}\label{eq:matter_qrf}
        \ket{\bm{\lambda}}_{\tilde{R}} \defeq \bigotimes_{v\in\mathcal{V}}\tfrac{1}{\sqrt{2}}(\ket{0}_v+e^{-i\lambda_v}\ket{1}_v), \quad \bm{\lambda}\in[0, 2\pi)^{\times\mathcal{|V|}}
    \end{equation}
    are acted upon transitively and freely by $\mathcal{G}$, making $\tilde{R}$ a complete QRF. 
\end{restatable}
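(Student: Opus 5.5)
The plan is to verify the three ingredients of a perspective-neutral QRF in turn: the tensor factorization of the gauge action, the covariance and transitivity of the orientation states, and a resolution of the identity of the form \cref{eq:resolution_of_I}. Completeness then comes from checking that the action is free.

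\emph{Factorization.} The first step is to write $\mathcal{C}_v = \mathcal{C}_v^{\mathcal{L}} - \rho_v$, where $\rho_v$ acts only on the qubit at $v$ and $\mathcal{C}_v^{\mathcal{L}}$ acts only on links. All these operators commute pairwise. They are diagonal in the number basis and the flux basis respectively, and operators on different sites commute (the $\rho_v$ are bilinear and hence even). Therefore
\begin{equation}
G(\bm{\lambda}) = \prod_v e^{i\lambda_v\mathcal{C}_v} = \Big(\bigotimes_v e^{-i\lambda_v\rho_v}\Big)\otimes\Big(\prod_v e^{i\lambda_v\mathcal{C}_v^{\mathcal{L}}}\Big),
\end{equation}
which gives the stated split with $\HS_{\tilde R}=\HS_{\text{matter}}$ and $\tilde S$ consisting of all links.

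\emph{Covariance and transitivity.} On site $v$, $e^{-i\eta_v\rho_v}$ acts as $\mathrm{diag}(e^{i\eta_v j_v}, e^{-i\eta_v(1-j_v)})$ on $(\ket{0}_v,\ket{1}_v)$. This is $e^{i\eta_v j_v}\,\mathrm{diag}(1,e^{-i\eta_v})$. Applying it to $\tfrac{1}{\sqrt2}(\ket{0}_v+e^{-i\lambda_v}\ket{1}_v)$ therefore yields $e^{i\eta_v j_v}\ket{\lambda_v+\eta_v}_v$. Hence $G_{\tilde R}(\bm\eta)\ket{\bm\lambda}_{\tilde R}=e^{i\sum_v\eta_v j_v}\ket{\bm\lambda+\bm\eta}_{\tilde R}$.

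The main point requiring care is the odd-site phase. I would handle it by noting that it is a global phase, so it is irrelevant for the rank-one projectors and for the gauge-fixing operators. Alternatively, one can absorb it by redefining the orientation states, or by regarding the representation on $\tilde R$ as projective up to this one-dimensional character. Either way, the orbit of the seed state $\ket{\bm 0}_{\tilde R}$ is the whole family, which gives transitivity.

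\emph{Freeness and resolution of the identity.} If $G_{\tilde R}(\bm\eta)\ket{\bm\lambda}\propto\ket{\bm\lambda}$, then at each site we need $\ket{\lambda_v+\eta_v}_v\propto\ket{\lambda_v}_v$. The relative phase between $\ket0$ and $\ket1$ forces $\eta_v\equiv 0 \bmod 2\pi$, so the action is free and the frame is complete. Note that here, unlike \cref{thm:LGT_QRF_pure}, all $|\mathcal V|$ parameters are independent, because the global transformation acts nontrivially through $\sum_v\rho_v$.

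For the resolution of the identity, compute site by site:
\begin{equation}
\int_0^{2\pi}\D\lambda_v\,\ket{\lambda_v}\bra{\lambda_v}_v=\pi\,I_v,
\end{equation}
since the off-diagonal terms $e^{\pm i\lambda_v}$ integrate to zero. The tensor product then gives $cI_{\tilde R}$ with $c=\pi^{|\mathcal V|}>0$.

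Finally, I would remark that $\braket{\bm\eta}{\bm\lambda}=\prod_v\tfrac12(1+e^{i(\eta_v-\lambda_v)})\neq\delta$, so the frame is non-ideal. Orthogonality holds only for $\eta_v-\lambda_v\in\{0,\pi\}$, which is the $\mathbb{Z}_2^{|\mathcal V|}$ subgroup used later.
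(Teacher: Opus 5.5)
Your proposal is correct and follows the paper's proof. It uses the same site-wise computation $e^{-i\eta_v\rho_v}\ket{\lambda_v}_v=e^{i\eta_v j_v}\ket{\lambda_v+\eta_v}_v$, generates all orientation states from the seed $\ket{\bm 0}_{\tilde R}$, and uses the resolution $\int_0^{2\pi}\D\lambda_v\,\ket{\lambda_v}\bra{\lambda_v}_v=\pi I_v$ to get $c=\pi^{|\mathcal{V}|}$. Your explicit factorization, freeness check and handling of the odd-site phase fill in points the paper leaves implicit; for the phase, you can take $e^{i\sum_v\lambda_v j_v}\ket{\bm\lambda}_{\tilde R}$ as orientation states, which is well defined since $j_v\in\{0,1\}$.

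The only inaccuracy is in your closing aside, which is not part of the theorem. $\ket{\bm\eta}_{\tilde R}$ and $\ket{\bm\lambda}_{\tilde R}$ are already orthogonal once $\eta_v-\lambda_v=\pi$ at a single site, so orthogonality does not require every difference to lie in $\{0,\pi\}$. What singles out the $\mathbb{Z}_2^{|\mathcal{V}|}$ coset is that it is a pairwise-orthogonal family forming a basis.
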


\begin{proof}
See \cref{app:proofs}.
\end{proof}

This QRF is complete, but the different orientation states are not perfectly distinguishable; instead, their overlap is
\begin{equation}
    |\braket{\bm{\eta}}{\bm{\lambda}}_{\tilde{R}}|^2 = \prod_{v\in\mathcal{V}}\left(\frac{1}{2} + \frac{1}{2}\cos(\eta_v-\lambda_v)\right).
\end{equation}
Therefore, the fermionic field QRF $\tilde{R}$ is not ideal.
The unitary gauge-fixing operators from \cref{eq:gauge-fixing} for the QRF $\tilde{R}$ are
$\mathcal{P}_{\tilde{R}}^{\bm{\lambda}} = 2^{|\mathcal{V}|/2}(\ket{\bm{\lambda}}\bra{\bm{\lambda}}_{\tilde{R}}\otimes I_{\tilde{S}})$
and fix the local phase of the field to $\lambda_v$ at the vertex $v$.

\paragraph{The reduced subspace and gauge-invariant operators.}

We can discuss the physical information encoded in the staggered fermion model by using the reduction maps associated with the fermionic field QRF. Since these remove the space of the fermionic matter, we can express any physical state in terms of only the electric flux on the links.

Indeed, any state $\ket{\psi}\in\HS_\phys$ contains sufficient information on the links alone to reconstruct the fermionic field on the sites via Gauss' law. This is precisely why the fermionic field can be treated as a QRF in the first place.
\begin{proposition}
\label{prop:electric_basis}
    The reduced subspace $\HS_{\tilde{S}|\tilde{R}} = \mathcal{R}_{\tilde{R}}^{\bm{\lambda}}(\HS_\phys)$ is independent of the orientation $\bm{\lambda}$ and spanned by the electric basis states
    \begin{equation}
    \label{eq:electric_basis}
        \bigotimes_{l\in\mathcal{L}}\ket{k_l}_l \, ,\quad \forall v\in\mathcal{V}: \sum\nolimits_i k_{[v, v+e_i]} - k_{[v, v-e_i]} \in \{-j_v, 1-j_v\}\,.
    \end{equation}
    The associated projector is given by
    \begin{equation}
        \Pi_{\tilde{S}|\tilde{R}} = \mathcal{R}_{\tilde{R}}^{\bm{\lambda}}\circ (\mathcal{R}_{\tilde{R}}^{\bm{\lambda}})^\dagger = \prod_{v\in\mathcal{V}}\left(\Pi_{(-j_v)} + \Pi_{(1-j_v)}\right),
    \end{equation}
    where $\Pi_{q_v} = \int_0^{2\pi}\frac{\D\lambda}{2\pi} \ e^{i\lambda(\mathcal{C}_v^{\mathcal{L}}-q_v)}$ projects onto eigenstates of $\mathcal C_v^{\mathcal L}$ with eigenvalue $q_v$.
\end{proposition}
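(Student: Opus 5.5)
We compute the projector $\mathcal{R}_{\tilde{R}}^{\bm{\lambda}}(\mathcal{R}_{\tilde{R}}^{\bm{\lambda}})^\dagger$ directly, in the same way as the ideal-frame computation in \cref{sec:QEC_in_Gauge}. By definition,
\begin{equation}
\mathcal{R}_{\tilde{R}}^{\bm{\lambda}}(\mathcal{R}_{\tilde{R}}^{\bm{\lambda}})^\dagger = N\,(\bra{\bm{\lambda}}_{\tilde{R}}\otimes I_{\tilde{S}})\,\Pi_\phys\,(\ket{\bm{\lambda}}_{\tilde{R}}\otimes I_{\tilde{S}}),
\end{equation}
with $N = 2^{|\mathcal{V}|}$. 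This normalization follows from the resolution of identity $\int \D\bm{\lambda}\,\ket{\bm{\lambda}}\bra{\bm{\lambda}} = \pi^{|\mathcal{V}|}\cdot\ldots$; we would recheck the constant against $\mathcal{P}_{\tilde{R}}^{\bm{\lambda}}=2^{|\mathcal{V}|/2}(\cdots)$. Inserting \cref{eq:gauss_projector} and using the factorization $G(\bm{\eta}) = G_{\tilde{R}}(\bm{\eta})\otimes G_{\tilde{S}}(\bm{\eta})$ from \cref{thm:matter_qrf}, everything factorizes over vertices:
\begin{equation}
\prod_{v\in\mathcal{V}} 2\int_0^{2\pi}\frac{\D\eta_v}{2\pi}\,\bra{\lambda_v}e^{-i\eta_v\rho_v}\ket{\lambda_v}_v\; e^{i\eta_v\mathcal{C}_v^{\mathcal{L}}}.
\end{equation}

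The single-site matrix element is elementary. Since $\rho_v = \psi_v^\dagger\psi_v - j_v$, we get
\begin{equation}
\bra{\lambda_v}e^{-i\eta_v\rho_v}\ket{\lambda_v} = \tfrac12 e^{i\eta_v j_v}\bigl(1+e^{-i\eta_v}\bigr),
\end{equation}
which is independent of $\lambda_v$. This already proves that the image is independent of the orientation $\bm{\lambda}$. Multiplying by $2$ and integrating against $e^{i\eta_v\mathcal{C}_v^{\mathcal{L}}}$ gives
\begin{equation}
\int\frac{\D\eta}{2\pi}\Bigl(e^{i\eta(\mathcal{C}_v^{\mathcal{L}}+j_v)} + e^{i\eta(\mathcal{C}_v^{\mathcal{L}}-(1-j_v))}\Bigr) = \Pi_{(-j_v)}+\Pi_{(1-j_v)}.
\end{equation}
These two projectors are orthogonal, and the products over different $v$ commute because all $\mathcal{C}_v^{\mathcal{L}}$ are diagonal in the electric basis. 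The product is therefore an orthogonal projector. By the isometry property $(\mathcal{R}^{\bm{\lambda}})^\dagger\mathcal{R}^{\bm{\lambda}}=\Pi_\phys$ (\cref{lemma:isometry}), $\mathcal{R}\mathcal{R}^\dagger$ is exactly the projector onto the image $\HS_{\tilde{S}|\tilde{R}}$.

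Finally, $\mathcal{C}_v^{\mathcal{L}}$ acts on $\bigotimes_l\ket{k_l}$ with eigenvalue equal to the net outgoing flux, $\sum_i (k_{[v,v+e_i]}-k_{[v-e_i,v]})$. The range of $\Pi_{\tilde{S}|\tilde{R}}$ is therefore spanned by the electric basis states whose net flux at each vertex lies in $\{-j_v,1-j_v\}$, which is \cref{eq:electric_basis}. (The expression $k_{[v,v-e_i]}$ in the statement should be read as the incoming link $[v-e_i,v]$.)

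The main obstacle is bookkeeping rather than conceptual. The sign conventions need care: the action $e^{-i\lambda\rho}$ versus the phase $e^{-i\lambda}$ in $\ket{\bm{\lambda}}$, and the sign in $\Pi_{q_v}$, determine whether the allowed values come out as $\{-j_v,1-j_v\}$ or their negatives. The normalization constant $N$ must also make each factor an exact projector with unit coefficients. I would fix $N$ first by checking the resolution of identity for one qubit: $\int_0^{2\pi}\D\lambda\,\ket{\lambda}\bra{\lambda} = \pi I$, so $c=\pi^{|\mathcal{V}|}$ and $N=(2\pi)^{|\mathcal{V}|}/\pi^{|\mathcal{V}|}=2^{|\mathcal{V}|}$, which is consistent with the factor $2$ used above.
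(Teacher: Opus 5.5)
Your proof is correct and follows the paper's own computation in the appendix: you insert the Gauss projector and factorize over vertices. You then evaluate the single-site overlap $\tfrac12 e^{i\eta_v j_v}(1+e^{-i\eta_v})$ and read off $\Pi_{(-j_v)}+\Pi_{(1-j_v)}$. Your extra remarks fill in steps the paper leaves implicit: that $\mathcal{R}\mathcal{R}^\dagger$ is the range projector by the isometry lemma, the check that $N=2^{|\mathcal{V}|}$, and the reading of $k_{[v,v-e_i]}$ as the incoming link $[v-e_i,v]$.
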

\begin{proof}
    The electric basis states spanning $\HS_{\tilde{S}|\tilde{R}}$ follow directly from the form of the projector $\Pi_{\tilde{S}|\tilde{R}}$, which we find by a direct calculation in \cref{app:staggered_fermion_lqed_operators}.
\end{proof}
The reduction maps associated with the fermionic field QRF are thus not surjective onto $\HS_{\tilde{S}}$. Instead, they are unitaries between $\HS_\phys$ and $\HS_{\tilde{S}|\tilde{R}}$, and we can express the physical states in terms of the basis $\ket{\{k_l\}} \defeq (\mathcal{R}_{\tilde{R}}^{\bm{\lambda}})^\dagger\bigotimes_{l\in\mathcal{L}}\ket{k_l}$ with $\bigotimes_{l\in\mathcal{L}}\ket{k_l}_l$ as in \cref{eq:electric_basis}. Note that $\epsilon_l\ket{\{k_{l'}\}} = k_{l'}\ket{\{k_{l'}\}}$, i.e., this basis labels the physical states in terms of their electric flux.
This means that the encoded physical information in $\HS_\phys$ can be understood purely in terms of the electric flux on the links. This electric flux is further constrained by the allowed eigenvalues of the reduced Gauss-law operators:
\begin{equation}
    \mathcal{C}_v^{\mathcal{L}} = \begin{cases}
        0, 1\quad &j_v = 0\\
        -1, 0 &j_v = 1
    \end{cases}\ .
\label{eq:reduced_background_charges}
\end{equation}
These eigenvalues correspond precisely to the allowed charge of staggered fermions on $v$ and reflect the constraint $\mathcal{C}_v = \mathcal{C}_v^{\mathcal{L}} - \rho_v = 0$ since $\rho_v$ has eigenvalues $\{-j_v, 1-j_v\}$.
In the reduced picture, the eigenvalue of $\mathcal C_v^{\mathcal L}$ encodes the presence or absence of a staggered background charge.
Thus, the reduced states look like link configurations with at most one background (anti-)charge per site, consistent with fermionic occupation constraints.

The gauge-invariant operators reduced to $\HS_{\tilde{S}|\tilde{R}}$ reflect this structure. The holonomy operators $H_\gamma$ and the $X_l(\lambda)$-operators act on the reduced space exactly as on $\HS_\phys$, i.e.,
\begin{equation}
    \mathcal{R}_{\tilde{R}}^{\bm{\lambda}}H_\gamma(\mathcal{R}_{\tilde{R}}^{\bm{\lambda}})^\dagger = H_\gamma \quad {\rm and} \quad \mathcal{R}_{\tilde{R}}^{\bm{\lambda}}X_l(\lambda')(\mathcal{R}_{\tilde{R}}^{\bm{\lambda}})^\dagger = X_l(\lambda').
\end{equation}
In addition, there are gauge-invariant operators involving the sites. As shown in \cref{app:staggered_fermion_lqed_operators}, the hopping terms reduce to
\begin{equation}
\label{eq:reduced_hopping}
\mathcal{R}_{\tilde{R}}^{\bm{\lambda}} \circ \psi_v^\dagger U_{[v, v']}\psi_{v'}\circ (\mathcal{R}_{\tilde{R}}^{\bm{\lambda}})^\dagger = U_{[v, v']}\circ\Pi_{(-j_v)}\Pi_{(1-j_{v'})}\Pi_{\tilde{S}|\tilde{R}}.
\end{equation}
The link operator changes the electric flux, adding and removing background (anti-)charges on the adjacent sites to the reduced state. 
The projectors $\Pi_{(-j_v)}$ and $\Pi_{(1-j_{v'})}$ ensure that the charge constraints \labelcref{eq:reduced_background_charges} remain satisfied upon this action. Their kernels represent the action of $\psi_{v'}\ket{0} _{v'}= 0$ and $\psi_v^\dagger\ket{1}_v = 0$ in the hopping term.
The reduced operator therefore reproduces the fermionic creation and annihilation operators as well as the link operator purely in terms of an electric flux change and background charge constraints on the reduced space. From this, we can read off the action of the hopping term on the basis $\ket{\{k_l\}}$:
\begin{equation}
    \psi_v^\dagger U_{[v, v']}\psi_{v'} \ket{\{k_l\}} = \begin{cases}
        \ket{\{k_l + \delta_{l, {[v, v']}}\}} & \rho_v = -j_v, \rho_{v'} = 1-j_{v'} \\
        0 &{\rm else},
    \end{cases}
\end{equation}
where $\rho_v = \sum\nolimits_i k_{[v, v+e_i]} - k_{[v, v-e_i]}$. Similarly, The more general hopping term $\psi_v^\dagger W_\gamma\psi_{v'}$ on for any two vertices $v, v'$ connected by a path $\gamma$ corresponds to the reduced operator
\begin{equation}
    W_\gamma\circ\Pi_{(-j_v)}\Pi_{(1-j_{v'})}\Pi_{\tilde{S}|\tilde{R}}.
\end{equation}

The projector $\psi_v^\dagger\psi_v = \ket{1}\bra{1}_v$ reduces to
\begin{equation}
\label{eq:reduced_number}
    \mathcal{R}_{\tilde{R}}^{\bm{\lambda}} \circ \psi_v^\dagger\psi_v\circ (\mathcal{R}_{\tilde{R}}^{\bm{\lambda}})^\dagger  = \Pi_{(1-j_v)}\Pi_{\tilde{S}|\tilde{R}}.
\end{equation}
The projector $\Pi_{(1-j_v)}$ enforces the correct charge sector on the site $v$. Analogously, $\psi_v\psi_v^\dagger = \ket{0}\bra{0}_v$ reduces to $\Pi_{-j_v}\Pi_{\tilde{S}|\tilde{R}}$.

\section{Lattice Quantum Electrodynamics as an Error Correction Code}
\label{sec:LQED_as_QECC}
By interpreting the gauge group as a generalized stabilizer group for lattice QED, we obtain two types QECCs corresponding to lattice QED with or without fermionic matter.
The one associated with pure gauge has the structure of a quantum rotor code \cite{Vuillot_2024} (see \cref{sssec:rotor_codes}), while in the case of fermionic matter the kinematical Hilbert space consists of both qubits and quantum rotors.
In \cref{sec:QEC_in_Gauge}, we identified QRFs for both types of QECC. Here, we use these QRFs as a tool to understand the QECCs given by lattice QED. 
Recalling the dictionary of \cite{CCHM_2024} and \cref{sec:QEC_in_Gauge}, the inverse reduction map of a QRF acts as encoding isometry of the associated QECC. Thus, choosing a QRF amounts to choosing an encoding of the logical data, and simultaneously defines a split between physical and redundant degrees of freedom. This perspective, together with \cref{thm:correctable_gauge-fixing}, allows us to identify sets of correctable gauge-violating errors. 

Using this approach, we characterize the error-correcting properties of the codes resulting from lattice QED. Because the theory is Abelian, gauge-fixing errors can be recovered via charge measurements (see \cref{prop:A_q_from_gf} and \cref{prop:non_ideal_A_r}) which correspond to measuring Gauss constraints. 
We also find correctable errors which are not connected to the QRFs through different choices of operators mapping from the non-trivial charge sectors to the code subspace. Our results recover and extend known features: For pure gauge systems with finite gauge groups, a set of correctable errors is given by all single link operators which change the generalized electric flux \cite[Section III.2]{CLLL_2024}. We find the same result for pure gauge $\operatorname{U}(1)$ lattice QED and extend it to case with staggered fermions. 
In \cite[Theorem 1]{spagnoli2024}, it was shown that $\mathbb{Z}_2$ lattice QED with staggered fermions, when treated as a QECC, can correct any single error in electric flux on the links or in the fermion occupation number on the sites. We show that the same is true in the $\operatorname{U}(1)$ theory, where electric flux is quantized to $\mathbb{Z}$ instead of $\mathbb{Z}_2$.

\subsection{U(1)-Charges and Constraint Measurements}

Before we examine the error correction code properties of lattice QED, let us briefly consider its charge structure. Following the outline in \cref{ssec:QEC_and_gauge} and \cref{prop:A_q_from_gf}, we interpret the recovery operations associated with the correctable error sets in the remainder of this section as measurements of the $\operatorname{U}(1)$-charges, or equivalently, as measurements of the Gauss’ law constraints.

\paragraph{Pure gauge lattice QED.}
The group of gauge transformations $\mathcal{G}$ of pure gauge lattice QED on a finite lattice is generated by $|\mathcal{V}|-1$ constraints $\mathcal{C}_v$ due to \cref{eq:sum_to_zero}. We may choose any subset of vertices $\mathcal{V}'\subset \mathcal{V}$ which contains all but one vertex. Then, $\mathcal{G}$ can naturally be seen as a representation of $\operatorname{U}(1)^{|\mathcal{V}|-1}$ on the kinematical Hilbert space parametrized by the constraints associated with $\mathcal{V}'$:
\begin{equation}
    (e^{i\lambda_v})_{v\in\mathcal{V}'}\mapsto G(\bm{\lambda}) = \prod_{v\in\mathcal{V}'}e^{i\lambda_v\mathcal{C}_v}.
\end{equation} 
The kinematical space decomposes into the direct sum of the charge sectors $\HS_{\bm{q}}$,
\begin{equation}
    \HS_\kin = \bigoplus_{\bm{q}}\HS_{\bm{q}}.
\end{equation}
The integer vector of charges $\bm{q}=(q_v)_{v\in\mathcal{V}'}\in\mathbb{Z}^{|\mathcal{V}|-1}$ labels the irreducible representations (i.e., the characters) of $\operatorname{U}(1)^{|\mathcal{V}|-1}$ and contains one component for each independent vertex transformation, $\chi_{\bm{q}}(\bm{\lambda}) = \prod_{v\in\mathcal{V}'}e^{iq_v\lambda_v}$. The charge sectors $\HS_{\bm{q}}$ are the isotypes of the characters $\chi_{\bm{q}}$ within $\HS_\kin$. Therefore, $\bm{q}$ here does not represent charge of physical matter, but dictates the eigenvalues of the constraints $\mathcal{C}_v$ on states in $\HS_{\bm{q}}$. We interpret non-zero components $q_v$ of $\bm{q}$ as indicating the presence of fixed electric {\it background charges} on $v$ on the lattice\footnote{Despite $\bm{q}$ having only $|\mathcal{V}|-1$ entries, it also determines the background charge on the remaining vertex since \cref{eq:sum_to_zero} implies that the total charge vanishes.}. The physical states live in $\HS_\phys = \HS_{\bm{0}}$.

Since $\mathcal{C}_v\ket{\psi} = q_v\ket{\psi}$ for $\ket{\psi}\in\HS_{\bm{q}}$, measuring the constraints $\mathcal{C}_v$ (for $v\in\mathcal{V}'$) for a given a state $\ket{\psi}$ yields the charges.
The orthogonal projector onto the corresponding subspace is given by
\begin{equation}
\label{eq:charge_projector}
    \Pi_{q_v} = \int_0^{2\pi}\frac{\D\lambda}{2\pi} e^{i\lambda(\mathcal{C}_v - q_v)}, 
\end{equation}
and a full measurement of the set $\{\mathcal{C}_v\}_{v\in\mathcal{V}'}$ with outcome $\bm{q}$ projects onto the charge sector $\HS_{\bm{q}}$ according to\footnote{This expression also recovers $\Pi_\phys$ from \cref{eq:gauss_projector}, despite the extra integral over $v\notin\mathcal{V}'$ there. This extra integral can be eliminated by rewriting the constraint using \cref{eq:sum_to_zero} and a change of variables.}
\begin{equation}
    \Pi_{\bm{q}} = \prod_{v\in\mathcal{V}'}\Pi_{q_v}.
\end{equation}

This charge-sector decomposition provides the basic mechanism for error correction. Errors map the state to a different outside the charge sector, and measuring the constraints obtaining outcome $\bm{q}$, the error state is projected to $\HS_{\bm{q}}$. The recovery can be done through an appropriate choice of unitary $A_{\bm{q}}$, which maps $\HS_\phys$ into $\HS_{\bm{q}}$. For the ideal spanning tree QRFs, an example of such set $\{A_q\}_q$ is given by \cref{prop:A_q_from_gf}.
An error set $\mathcal{E} = \{E_i\}_i$ can be corrected by this operation if for each $i$ and $\bm{q}$, $A_{\bm{q}}^\dagger\Pi_{\bm{q}}E_i|_{\HS_\phys}\propto I_{\HS_\phys}$.

We apply these ideas in \cref{ssec:pglqed_qecc} to find correctable error sets based on constraint measurements. 

\paragraph{Staggered fermion lattice QED.} 

For lattice QED with staggered fermionic matter, $\mathcal{G}$ is a representation of $\operatorname{U}(1)^{|\mathcal{V}|}$. The charge sectors are thus labeled by ${\bm{q}}\in\mathbb{Z}^{|\mathcal{V}|}$ and correspond to eigenvalues of constraints including the charge density $\mathcal{C}_v = \mathcal{C}_v^{\mathcal{L}} - \rho_v$, where $\mathcal{C}_v^{\mathcal{L}}$ is the component of the constraint acting on the links.
We can now follow the same discussion as in the pure gauge case to find the projectors $\Pi_{{\bm{q}}}$, and choose operators $\{A_{{\bm{q}}}\}_{\bm{q}}$ to obtain a recovery operation following a constraint measurement. 
We will, however, need to perform a coarse-grained version of the constraint measurement first if we want to correct error sets which affect the sites of the lattice.

\subsection{Pure Gauge Lattice QED as a QECC}
\label{ssec:pglqed_qecc}

In pure gauge lattice QED, all the relevant degrees of freedom are quantum rotors. The system thus has the structure of a quantum rotor code with $\HS_\physical = \HS_\kin = \bigotimes_{l\in\mathcal{L}}\HS_\rot$ (see \cref{sssec:rotor_codes} for a brief introduction to rotor codes). The stabilizer $\mathcal{G}$ is composed entirely of $X$-type operators,
\begin{equation}
    G(\bm{\lambda}) = \bigotimes_{v\in\mathcal{V}}X_v(\lambda_v), \quad \bm{\lambda} = (\lambda_v)_{v\in\mathcal{V}}.
\end{equation}

\subsubsection{The QECC Structure}

In \cref{sec:QRFs_in_LQED} we introduced the holonomy basis of $\HS_\phys$. Since we now identify $\HS_\phys$ with the code space $\HS_\code$, the holonomy operators become logical operators of the code. The logical $U$-type operators $\Log_U(\HS_\phys)$ on the code subspace are generated by the holonomy operators. The logical $X$-type operators $\Log_X(\HS_\phys)$ are generated by the operators $X_l(\lambda)$ modulo the stabilizer group $\mathcal{G}$. 

To make the encoding structure explicit, choose a spanning tree QRF $R$ of the lattice. We can parametrize the encoded $|\mathcal{V}|-1$ quantum rotors by the links in the residual system $S = \mathcal{L}\setminus R$, $S\ni l\mapsto i_l\in\{1, 2, \dots, |S|\}$. As in \cref{eq:physical_operators_pure_gauge}, we can make the identifications
\begin{equation}
        \bigotimes_{i_l = 1}^{|S|}\HS_{\rot, i_l} \cong \HS_\phys, \quad
        \begin{cases}
            X_{i_l}(\lambda) \mapsto X_l(\lambda) \in \Log_X(\HS_\phys) \\
        U_{i_l}^{m} \mapsto H_l^m \in \Log_U(\HS_\phys)
        \end{cases}.
\end{equation}

\Cref{prop:holonomy_basis} makes this isomorphism explicit via the inverse reduction map $(\mathcal{R}_R^{\bm{0}})^\dagger$. The logical Hilbert space is $\HS_S \cong \bigotimes_{i_l = 1}^{|S|}\HS_{\rot, i_l}$ and $(\mathcal{R}_R^{\bm{0}})^\dagger:\HS_S\rightarrow\HS_\phys$ plays the role of the encoding isometry.

Thus, pure gauge lattice QED can be viewed as a quantum rotor code in which physical states are gauge-invariant rotor configurations and logical information is stored in holonomies.

\paragraph{Errors and electric flux lines.} 

Ultimately, we are interested in finding the types of errors which are correctable in lattice QED. Since the operators $X(\lambda)U^m$ from a Hilbert-Schmidt basis on $\HS_\rot$, we consider errors in terms of these generalized Pauli operators as in \cite{Vuillot_2024}. 

Any $X_l(\lambda)$ is a logical/gauge-invariant operator and thus preserves Gauss' law. Consequently, an $X$-type error on $\HS_\phys$ is not correctable. In contrast, the gauge-violating $U$-type errors can be detected and potentially corrected. These $U$-type errors are the $\operatorname{U}(1)$-analogue of phase flip errors. In the language of lattice QED, they create electric flux lines \cite{KS1975} as we will see now. 

Recall that on the electric flux basis we have $U_l^m\ket{k}_l = \ket{k+m}_l$, increasing the electric flux on the link $l = [v, v']$ by $m$ units. Starting from a physical state $\ket{\psi}_\phys$, the error state $\ket{\psi'}=U_l\ket{\psi}_\phys$ violates Gauss' law at endpoints of the link,
\begin{equation}
    \mathcal{C}_v\ket{\psi'} = +1\ket{\psi'}, \quad \mathcal{C}_{v'}\ket{\psi'} = -1\ket{\psi'}.
\end{equation}
This state lies in a non-trivial charge sector of $\HS_\kin$ and the Gauss' law violation can be interpreted as a pair of opposite electric background charges on the lattice. More generally, a Wilson line $W_\gamma$ along a path $\gamma$ ``excites'' a pair of background charges at the end points of $\gamma$ by creating a line of electric flux along $\gamma$, where the electric flux is modified from the gauge-invariant configuration. These electric flux lines are thus unphysical configurations of the gauge field.

\subsubsection{Correctable Errors for Pure Gauge Lattice QED}
We now identify correctable sets of errors for pure gauge lattice QED within the QECC we constructed.

\paragraph{Correctable error sets associated with spanning tree QRFs.}
Let $R$ be a spanning tree QRF as in \cref{thm:LGT_QRF_pure}. The set of correctable gauge-fixing operators for this QRF consists of
\begin{equation}
    \mathcal{P}_R^{\bm{\lambda}} = (2\pi)^\frac{|\mathcal{V}|-1}{2}(\ket{\phi(\bm{\lambda})}\bra{\phi(\bm{\lambda})}_R \otimes I_S),
\end{equation}
where $\ket{\phi(\bm{\lambda})}_R = \bigotimes_{l\in R}\ket{e^{i\lambda_l}}_l$. Note that we chose the links in the spanning tree to parametrize $\mathcal{G}$, and thus get charges $q_l$ as the eigenvalues of the constraints $\mathcal{C}_{V_l}$ from \cref{thm:LGT_QRF_pure}.
Since $R$ is ideal and $\mathcal{G}$ is Abelian, we can use \cref{prop:A_q_from_gf} to find 
\begin{equation}
    A_{\bm{q}} = (2\pi)^{-\frac{|\mathcal{V}|-1}{2}}\int_0^{2\pi}\left(\prod\nolimits_{l\in R}\D \lambda_l\ e^{-iq_l\lambda_l}\right)\mathcal{P}_R^{\bm{\lambda}}.
\end{equation}
To find the explicit form of this operator, note that on a single link, we can write
\begin{equation}
    \int_0^{2\pi}\D\lambda_l \ e^{-iq\lambda_l}\ket{e^{i\lambda_l}}\bra{e^{i\lambda_l}}_l = U_l^{-q}.
\end{equation}
This readily generalizes to
\begin{equation}
     A_{\bm{q}}= (2\pi)^{-\frac{|\mathcal{V}|-1}{2}}\int_0^{2\pi} \left(\prod\nolimits_{l\in R}\D \lambda_l \ e^{-iq_l\lambda_l}\right) \mathcal{P}_R^{\bm{\lambda}} = \bigotimes_{l\in R}U_l^{-q_l}.
\end{equation}
We thus find the following statement.

\begin{proposition}[Correctable errors on spanning trees]
\label{prop:correctable_U_on_T_in_QED}
    Consider a spanning tree QRF $R$. Then, the set
    \begin{equation}
    \label{eq:correctable_errors_on_R}
        \left\{\bigotimes\nolimits_{l\in R}U_l^{m_l} \ \middle| \ \forall l\in R: m_l \in \mathbb{Z}\right\}
    \end{equation}
    of all $U$-type errors located on $R$ is correctable.
\end{proposition}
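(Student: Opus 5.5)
The plan is to deduce the statement directly from \cref{thm:LGT_QRF_pure}, \cref{thm:correctable_gauge-fixing} and \cref{prop:A_q_from_gf}, using the explicit form of $A_{\bm{q}}$ derived just above. By \cref{thm:LGT_QRF_pure}, the spanning tree $R$ is a complete, ideal QRF for $\mathcal{G}$. The gauge transformations are parametrized by $\{\lambda_l\}_{l\in R}$ through the constraints $\mathcal{C}_{V_l}$, so $\mathcal{G}\cong\operatorname{U}(1)^{|R|}$. Hence $\mathcal{G}$ is compact and Abelian, and its characters are $\chi_{\bm{q}}(\bm{\lambda})=\prod_{l\in R}e^{iq_l\lambda_l}$ with $\bm{q}\in\mathbb{Z}^{|R|}$. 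Since $R$ is ideal, \cref{thm:correctable_gauge-fixing} applied with $\tilde{G}=\mathcal{G}$ shows that the full family $\{\mathcal{P}_R^{\bm{\lambda}}\}$ satisfies the Knill-Laflamme conditions with a Dirac delta on the right-hand side. \Cref{prop:A_q_from_gf} then says that the Fourier transforms $A_{\bm{q}}$ form a correctable set of unitaries with $A_{\bm{q}}(\HS_\phys)\subset\HS_{\bm{q}}$.

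The second step is to identify these $A_{\bm{q}}$ with the claimed error set. The single-link identity $\int_0^{2\pi}\D\lambda\, e^{-iq\lambda}\ket{e^{i\lambda}}\bra{e^{i\lambda}}_l=U_l^{-q}$ follows from the spectral decomposition of $U_l$. Factorizing the integral over $\prod_{l\in R}\D\lambda_l$ gives
\begin{equation}
A_{\bm{q}}=\bigotimes_{l\in R}U_l^{-q_l}.
\end{equation}
Here one checks that the normalization $(2\pi)^{(|\mathcal{V}|-1)/2}$ in $\mathcal{P}_R^{\bm{\lambda}}$ cancels against $1/\sqrt{|\mathcal{G}|}$, using $|R|=|\mathcal{V}|-1$. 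As $\bm{q}$ ranges over $\mathbb{Z}^{|R|}$, the map $m_l=-q_l$ is a bijection onto all tuples $(m_l)_{l\in R}\in\mathbb{Z}^{|R|}$, so $\{A_{\bm{q}}\}_{\bm{q}}$ is exactly the set in \cref{eq:correctable_errors_on_R}.

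As a sanity check, and as an alternative proof that avoids the continuous Knill-Laflamme formalism, I would verify the conditions directly:
\begin{equation}
\Pi_\phys A_{\bm{q}}^\dagger A_{\bm{q}'}\Pi_\phys=\delta_{\bm{q},\bm{q}'}\Pi_\phys.
\end{equation}
The operator $A_{\bm{q}}^\dagger A_{\bm{q}'}=\bigotimes_{l\in R}U_l^{q_l-q'_l}$ is unitary. It shifts the eigenvalues of $\mathcal{C}_{V_l}$, since $\mathcal{C}_{V_l,R}=-\epsilon_l$ and $[\epsilon_l,U_l]=U_l$, so $\mathcal{C}_{V_l}$ shifts by $-(q_l-q'_l)$. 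For $\bm{q}\neq\bm{q}'$ it therefore maps $\HS_\phys$ into a charge sector orthogonal to $\HS_\phys$, and the sandwich vanishes. For $\bm{q}=\bm{q}'$ it is the identity. Together with unitarity, this also makes the recovery $\rho\mapsto\sum_{\bm{q}}A_{\bm{q}}^\dagger\Pi_{\bm{q}}\rho\Pi_{\bm{q}}A_{\bm{q}}$ explicit.

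The main obstacle is the bookkeeping in this last check. One has to confirm that $\mathcal{C}_{V_l}$ involves tree links only through $-\epsilon_l$, so that $U_{l'}$ for $l'\neq l$ in $R$ commutes with $\mathcal{C}_{V_l}$. This is exactly property 2 of \cref{thm:LGT_QRF_pure}, so it is available. There is also a mild subtlety: the charge labels of \cref{prop:A_q_from_gf} here refer to the parametrization $\{\mathcal{C}_{V_l}\}_{l\in R}$, not to the vertex parametrization $\mathcal{V}'$. This parametrization is a unimodular change of basis of $\mathbb{Z}^{|R|}$, so the charge sectors, and hence the projectors $\Pi_{\bm{q}}$ in the recovery, coincide as a family.
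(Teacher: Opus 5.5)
Your proposal is correct and follows essentially the same route as the paper. You apply \cref{prop:A_q_from_gf} to the ideal spanning-tree QRF, evaluate $A_{\bm q}$ through the single-link identity to obtain $\bigotimes_{l\in R}U_l^{-q_l}$, and relabel $m_l=-q_l$; your direct Knill--Laflamme check, via the shift of the $\mathcal{C}_{V_l}$ eigenvalues, is a sound supplementary verification that mirrors the paper's subsequent discussion of recovery by measuring the constraints $\mathcal{C}_{V_l}$.
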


Following the discussion around \cref{prop:A_q_from_gf}, we can understand the recovery protocol for these errors in terms of a measurement of the constraints $\mathcal{C}_{V_l}$. Due to $[\epsilon_l, U_l] = U_l$ and $\mathcal{C}_{V_l, R} = -\epsilon_l$, we find
\begin{equation}  
\mathcal{C}_{V_l}U_l^m\ket{\psi} = -(U_l^m\mathcal{C}_{V_l} +mU_l^m)\ket{\psi} = -m U_l^m\ket{\psi}
\end{equation}
for $\ket{\psi}\in\HS_\phys$, as expected.
Thus, measuring the constraints $\mathcal{C}_{V_l}$ reveals the eigenvalues $m_l$, uniquely identifying the error $\bigotimes_{l\in R}U_{l}^{m_{l}}$ supported on $R$.

\paragraph{Any single $U$-type error is correctable.}
Spanning tree QRFs are supported on a subset of links, and therefore, the associated correctable error sets are restricted to these links. However, a different recovery operation allows correction of single link errors anywhere on the lattice.
To see this, consider a measurement of the local constraints $\{\mathcal{C}_{{v}}\}_{v\in\mathcal{V}}$ which projects onto the charge sectors by $\{\Pi_{\bm{q}}\}_{\bm{q}\in\mathbb{Z}^{|\mathcal{V}|}}$.
An error $U_l^m$ on $l = [v, v']$ shows up as $q_v=m,  q_{v'}=-m$ and can thus be uniquely determined from the error set $\{U_l^m  \, | \,  m\in\mathbb{Z}, l\in\mathcal{L}\}$. We therefore choose $A_{\bm{q}} = U_l^m$ whenever $\bm{q} = (\delta_{w, v}m - \delta_{w,v'}m)_{w\in\mathcal{V}}$ while remaining agnostic about errors which map to other charge sectors\footnote{As soon as two background charges are separated by more than one link, the choice of a physically motivated ``lowest weight'' error is generally ambiguous.}.

This shows the following result.

\begin{proposition}
\label{prop:correctable_U_errors}
    The set $\mathcal{E} = \{U_l^m  \, | \,  m\in\mathbb{Z}, l\in\mathcal{L}\}$ is correctable on $\HS_\phys$.
\end{proposition}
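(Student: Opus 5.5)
We verify the Knill--Laflamme conditions \cref{eq:Knill_Laflamme} directly for the error set $\mathcal{E}=\{U_l^m\}$, using the charge-sector decomposition of the previous subsection. The recovery channel just described, with $A_{\bm{q}}=U_l^m$ on the sectors $\bm{q}=(\delta_{w,v}m-\delta_{w,v'}m)_w$, then furnishes an explicit recovery. The key fact is that each error $U_l^m$ maps $\HS_\phys$ into a single, definite charge sector. From $[\epsilon_l,U_l]=U_l$ we get $[\epsilon_l,U_l^m]=mU_l^m$, while $U_l^m$ commutes with $\epsilon_{l'}$ for $l'\neq l$. Hence $[\mathcal{C}_w,U_l^m]=m(\delta_{w,v}-\delta_{w,v'})U_l^m$ for $l=[v,v']$, so $U_l^m\HS_\phys\subset\HS_{\bm{q}(l,m)}$ with $\bm{q}(l,m)_w=m(\delta_{w,v}-\delta_{w,v'})$.

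Next we compute $\Pi_\phys (U_l^m)^\dagger U_{l'}^{m'}\Pi_\phys$ by cases. If $(l,m)=(l',m')$, unitarity gives $\Pi_\phys$. If $\bm{q}(l,m)\neq\bm{q}(l',m')$, the two images lie in orthogonal isotypes, so the product vanishes. The main step is to show that distinct pairs $(l,m)\neq(l',m')$ with $m,m'$ not both zero always have distinct charge vectors; the trivial element $m=0$ is the identity and is handled by the first two cases. For $m\neq0$, the support of $\bm{q}(l,m)$ is exactly the two endpoints of $l$, with sign $+m$ at the source and $-m$ at the target. On the simple lattices under consideration, distinct links have distinct ordered endpoint pairs, since there are no multi-edges and orientations are fixed. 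Therefore the support and signs determine $l$, and the value determines $m$. The only potentially delicate situation is a very small periodic lattice, e.g.\ of length $2$, where two links may join the same pair of vertices. This should be excluded or remarked upon; I would assume each direction has length at least $3$.

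Finally, I would note that the projectors $\Pi_{\bm{q}}$ with $\bm{q}\in\mathbb{Z}^{|\mathcal{V}|}$ used here are consistent with \cref{eq:sum_to_zero}, since every $\bm{q}(l,m)$ has vanishing total charge. I would also check that the recovery map is trace preserving: this requires completing the $A_{\bm{q}}$ on the unassigned sectors by any unitary onto $\HS_\phys$, for instance some $A_{\bm{q}}$ from \cref{prop:A_q_from_gf}. Then $A_{\bm{q}}^\dagger\Pi_{\bm{q}}U_l^m|_{\HS_\phys}=I|_{\HS_\phys}$ shows recovery explicitly, alongside the Knill--Laflamme verification. Since the indices are discrete, no Dirac-delta subtleties arise, and the conditions hold with $c_{ij}=\delta_{ij}$.
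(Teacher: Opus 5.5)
Your argument is correct and is essentially the paper's. In both, the key fact is that $U_l^m$ maps $\mathcal{H}_{\mathrm{phys}}$ into the single charge sector $\bm{q}(l,m)$, with $+m$ at the source and $-m$ at the target of $l$; the paper uses this directly to choose $A_{\bm{q}}=U_l^m$ after measuring all $\mathcal{C}_v$, while you also check the Knill--Laflamme conditions through orthogonality of the isotypes. Your caveat about periodic lattices of side length at most $2$ is a genuine assumption the paper leaves unstated: there $U_{[v,v+e_i]}^{m}$ and $U_{[v+e_i,v]}^{-m}$ have the same syndrome but differ by a nontrivial holonomy. The condition you actually need is no self-loops and no two links joining the same unordered pair of vertices. ``Distinct ordered endpoint pairs'' is not enough, since your own length-$2$ example satisfies it.
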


Note that this is consistent with the code $U$-distance (see \cref{sssec:rotor_codes}). If we let the weight of a $U$-type operator be the number of non-trivial factors, the $U$-distance of lattice QED is $d_{U} = 4$, where the minimum weight is attained by the plaquette operators.
Indeed, the code corrects any $t_U = \left\lfloor \frac{d_{U,w}-1}{2}\right\rfloor = 1$ $U$-error.

\paragraph{Visualizing the recovery protocols.}
The correctable error sets from \cref{prop:correctable_U_on_T_in_QED} and \cref{prop:correctable_U_errors} correspond to different recovery protocols upon measuring the constraints $\mathcal{C}_v$.

Given a spanning tree $T$, the associated correctable errors are products of Wilson line operators $W_\gamma$ with $\gamma\subset T$. As previously mentioned, these create electric flux lines. To illustrate these errors, we consider the ground state in the {\it strong-coupling limit} $g\gg 1$ where the electric term in \cref{eq:KS_ham_pg} dominates, i.e., $H\approx \frac{g^2}{2a}\sum_{l\in\mathcal{L}}\epsilon_l^2$. The ground state is the gauge-invariant vacuum configuration $\ket{\Omega_{\text{sc}}} =\bigotimes_{l\in\mathcal{L}}\ket{0}_l$.
A Wilson line operator $W_\gamma$ on a path from $v$ to $v'$ creates an electric flux line along $\gamma$,
\begin{equation}\label{eq:WilsonlineError}
    W_\gamma \ket{\Omega_{\text{sc}}} = \bigotimes_{l\in\mathcal{L}\setminus \gamma}\ket{0}_l \bigotimes_{l\in\gamma}\ket{\sigma_l}.
\end{equation}
As before, $\sigma_l = \pm1$ indicates whether $\gamma$ passes $l$ in the positive or negative direction. For this state, $\mathcal{C}_v = +1$ and $\mathcal{C}_{v'} = -1$, and so $\gamma$ connects an opposite pair of background charges.

The recovery protocol associated with the spanning tree $T$ is given by the following steps:
\begin{enumerate}
\itemsep0em 
    \item measure the constraints $\mathcal{C}_v$,
    \item identify pairs of negative and positive charges and connect them along paths $\gamma_i\subset T$, and
    \item apply the operator $\prod_iW_{\gamma_i}$.
\end{enumerate}
The operator $\prod_iW_{\gamma_i}$ is independent of the choice of how the pairs of negative and positive charges are connected. This removes all background charges and corrects any $U$-type error which is supported on $T$, i.e., the error set from \cref{prop:correctable_U_on_T_in_QED}. This is illustrated in \cref{fig:electric_flux_lines}.

\begin{figure}[h!]
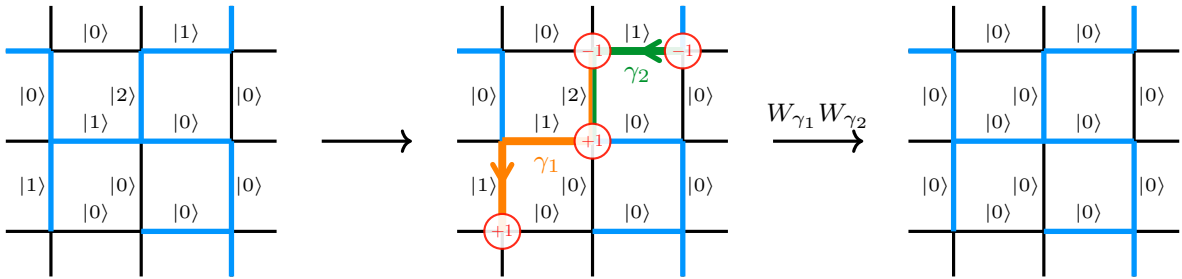

    \centering
    $\centertikz{
    \def\s{1}
    \def\L{2.5}
    \begin{scope}[scale=1.2,transform shape]
        \begin{scope}[shift = {(-5,0)}]
        \draw[very thick] (0,-0.5) -- (0,\L);
        \draw[very thick] (\s,-0.5) -- (\s,\L);
        \draw[very thick] (2*\s,-0.5) -- (2*\s,\L);
        \draw[very thick] (-0.5,0) -- (\L,0);
        \draw[very thick] (-0.5,\s) -- (\L,\s);
        \draw[very thick] (-0.5,2*\s) -- (\L,2*\s);
        \node at (0.5,0.2) {\tiny$\ket{0}$};
        \node at (0.5,0.2+\s) {\tiny$\ket{1}$};
        \node at (0.5,0.2+2*\s) {\tiny$\ket{0}$};

        \node at (0.5+\s,0.2) {\tiny$\ket{0}$};
        \node at (0.5+\s,0.2+\s) {\tiny$\ket{0}$};
        \node at (0.5+\s,0.2+2*\s) {\tiny$\ket{1}$};

        \node at (-0.2,0.5+\s) {\tiny$\ket{0}$};
        \node at (-0.2,0.5) {\tiny$\ket{1}$};

        \node at (-0.2+\s,0.5+\s) {\tiny$\ket{2}$};
        \node at (-0.2+\s,0.5) {\tiny$\ket{0}$};

        \node at (0.2+2*\s,0.5+\s) {\tiny$\ket{0}$};
        \node at (0.2+2*\s,0.5) {\tiny$\ket{0}$};

        \draw[line width = 2pt, color=lightblue] (0,0) -- (0,\L-0.5);
        \draw[line width = 2pt, color=lightblue] (0,\s) -- (\L-0.5,\s);
        \draw[line width = 2pt, color=lightblue] (\s,0) -- (\L-0.5,0);
        \draw[line width = 2pt, color=lightblue] (2*\s,\s) -- (2*\s,-0.5);
        \draw[line width = 2pt, color=lightblue] (\s,\s) -- (\s,2*\s);
        \draw[line width = 2pt, color=lightblue] (\s,2*\s) -- (2*\s,2*\s);
        \draw[line width = 2pt, color=lightblue] (2*\s,2*\s) -- (2*\s,2*\s+0.5);
        \draw[line width = 2pt, color=lightblue] (0,2*\s) -- (-0.5,2*\s);
    \end{scope}
    \draw[very thick, ->] (-2, \s) -- (-1,\s);
    \begin{scope}[shift = {(0,0)}]
        \draw[very thick] (0,-0.5) -- (0,\L);
        \draw[very thick] (\s,-0.5) -- (\s,\L);
        \draw[very thick] (2*\s,-0.5) -- (2*\s,\L);
        \draw[very thick] (-0.5,0) -- (\L,0);
        \draw[very thick] (-0.5,\s) -- (\L,\s);
        \draw[very thick] (-0.5,2*\s) -- (\L,2*\s);
        \node at (0.5,0.2) {\tiny$\ket{0}$};
        \node at (0.5,0.2+\s) {\tiny$\ket{1}$};
        \node at (0.5,0.2+2*\s) {\tiny$\ket{0}$};

        \node at (0.5+\s,0.2) {\tiny$\ket{0}$};
        \node at (0.5+\s,0.2+\s) {\tiny$\ket{0}$};
        \node at (0.5+\s,0.2+2*\s) {\tiny$\ket{1}$};

        \node at (-0.2,0.5+\s) {\tiny$\ket{0}$};
        \node at (-0.2,0.5) {\tiny$\ket{1}$};

        \node at (-0.2+\s,0.5+\s) {\tiny$\ket{2}$};
        \node at (-0.2+\s,0.5) {\tiny$\ket{0}$};

        \node at (0.2+2*\s,0.5+\s) {\tiny$\ket{0}$};
        \node at (0.2+2*\s,0.5) {\tiny$\ket{0}$};

        \draw[line width = 2pt, color=lightblue] (0,0) -- (0,\L-0.5);
        \draw[line width = 2pt, color=lightblue] (0,\s) -- (\L-0.5,\s);
        \draw[line width = 2pt, color=lightblue] (\s,0) -- (\L-0.5,0);
        \draw[line width = 2pt, color=lightblue] (2*\s,\s) -- (2*\s,-0.5);
        \draw[line width = 2pt, color=lightblue] (\s,\s) -- (\s,2*\s);
        \draw[line width = 2pt, color=lightblue] (\s,2*\s) -- (2*\s,2*\s);
        \draw[line width = 2pt, color=lightblue] (2*\s,2*\s) -- (2*\s,2*\s+0.5);
        \draw[line width = 2pt, color=lightblue] (0,2*\s) -- (-0.5,2*\s);
        \draw[line width = 1.5pt, color=mygreen] (\s+0.0221,2*\s) -- (\s+0.0221,\s);
    
    \draw[line width = 1.5pt, color=orange] (\s-0.0221,2*\s) -- (\s-0.0221,\s);
        \draw[line width = 3pt, color=mygreen,decoration={markings,
	mark= at position 0.5 with {\arrow[line width=2pt]{angle 60}}},postaction={decorate}] (2*\s,2*\s) --  node[below] {\scriptsize$\gamma_2$} (\s,2*\s);
        \draw[line width = 3pt, color=orange,decoration={markings,
	mark= at position 0.5 with {\arrow[line width=2pt]{angle 60}}},postaction={decorate}] (0,\s) -- (0,0);
        \draw[line width = 3pt, color=orange,] (0,\s) -- node[below] {\scriptsize$\gamma_1$} (\s,\s);

        \node[thick, circ, draw=myred, fill=white,scale=0.5, minimum size=22pt,opacity=0.9] at (0,0) {\color{red}{$+1$}};
        \node[thick, circ, draw=myred, fill=white,scale=0.5, minimum size=22pt,opacity=0.9] at (\s,\s) {\color{red}{$+1$}};
        \node[thick, circ, draw=myred, fill=white,scale=0.5, minimum size=22pt,opacity=0.9] at (\s,2*\s) {\color{red}{$-1$}};
        \node[thick, circ, draw=myred, fill=white,scale=0.5, minimum size=22pt,opacity=0.9] at (2*\s,2*\s) {\color{red}{$-1$}};
    \end{scope}
    \draw[very thick, ->] (3, \s) -- node[above] {\scriptsize$W_{\gamma_1}W_{\gamma_2}$}(4,\s);
    \begin{scope}[shift = {(5,0)}]
        \draw[very thick] (0,-0.5) -- (0,\L);
        \draw[very thick] (\s,-0.5) -- (\s,\L);
        \draw[very thick] (2*\s,-0.5) -- (2*\s,\L);
        \draw[very thick] (-0.5,0) -- (\L,0);
        \draw[very thick] (-0.5,\s) -- (\L,\s);
        \draw[very thick] (-0.5,2*\s) -- (\L,2*\s);
        \node at (0.5,0.2) {\tiny$\ket{0}$};
        \node at (0.5,0.2+\s) {\tiny$\ket{0}$};
        \node at (0.5,0.2+2*\s) {\tiny$\ket{0}$};

        \node at (0.5+\s,0.2) {\tiny$\ket{0}$};
        \node at (0.5+\s,0.2+\s) {\tiny$\ket{0}$};
        \node at (0.5+\s,0.2+2*\s) {\tiny$\ket{0}$};

        \node at (-0.2,0.5+\s) {\tiny$\ket{0}$};
        \node at (-0.2,0.5) {\tiny$\ket{0}$};

        \node at (-0.2+\s,0.5+\s) {\tiny$\ket{0}$};
        \node at (-0.2+\s,0.5) {\tiny$\ket{0}$};

        \node at (0.2+2*\s,0.5+\s) {\tiny$\ket{0}$};
        \node at (0.2+2*\s,0.5) {\tiny$\ket{0}$};

        \draw[line width = 2pt, color=lightblue] (0,0) -- (0,\L-0.5);
        \draw[line width = 2pt, color=lightblue] (0,\s) -- (\L-0.5,\s);
        \draw[line width = 2pt, color=lightblue] (\s,0) -- (\L-0.5,0);
        \draw[line width = 2pt, color=lightblue] (2*\s,\s) -- (2*\s,-0.5);
        \draw[line width = 2pt, color=lightblue] (\s,\s) -- (\s,2*\s);
        \draw[line width = 2pt, color=lightblue] (\s,2*\s) -- (2*\s,2*\s);
        \draw[line width = 2pt, color=lightblue] (2*\s,2*\s) -- (2*\s,2*\s+0.5);
        \draw[line width = 2pt, color=lightblue] (0,2*\s) -- (-0.5,2*\s);
        
    \end{scope}
    \end{scope}
    
    }
    $
    \caption{The electric flux on the lattice (here represented in 2 dimensions, with links oriented upwards/to the right) is excited from the strong-coupling ground state $\bigotimes_{l\in\mathcal{L}}\protect\ket{0}_l$ along the spanning tree (marked in blue) by an error in the set \labelcref{eq:correctable_errors_on_R}. A measurement of all Gauss' law constraints $\mathcal{C}_v$ reveals the background charges marked in red. Choosing a pair of negative and positive background charges and connecting them on the spanning tree yields paths $\gamma_1$ and $\gamma_2$. The error is then corrected by applying the Wilson line operators $W_{\gamma_1}W_{\gamma_2}$. Note that this product is independent of the choice of which negative charges to connect to which positive ones.}
    \label{fig:electric_flux_lines}
\end{figure}

To instead correct arbitrary single $U$-errors as in \cref{prop:correctable_U_errors}, the recovery is:
\begin{enumerate}
\itemsep0em
    \item measure the constraints $\mathcal{C}_v$,
    \item if two adjacent sites have background charges $q_v = m$ and $q_{v'} = -m$, apply $(U_l^m)^\dagger$ on the connecting link $l=[v, v']$.
\end{enumerate}
This is illustrated in \cref{fig:U_errors}.
Since this protocol is ambiguous as soon as there are background charges on more than two adjacent sites, it can only appropriately correct errors on single links.

\begin{figure}[h!]
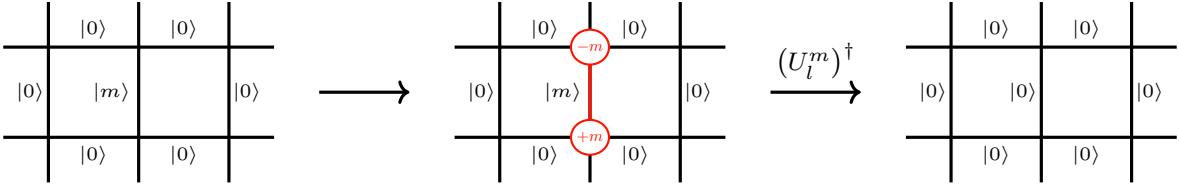

    \centering
        $\centertikz{
            \def\s{1}
    \def\L{2.5}
    \begin{scope}[scale=1.2,transform shape]
        \begin{scope}[shift = {(-5,0)}]
            \draw[very thick] (0,-0.5) -- (0,\L-\s);
            \draw[very thick] (\s,-0.5) -- (\s,\L-\s);
            \draw[very thick] (2*\s,-0.5) -- (2*\s,\L-\s);
            \draw[very thick] (-0.5,0) -- (\L,0);
            \draw[very thick] (-0.5,\s) -- (\L,\s);
            \node at (0.5,-0.2) {\tiny$\ket{0}$};
            \node at (0.5,0.2+\s) {\tiny$\ket{0}$};
            
            \node at (0.5+\s,-0.2) {\tiny$\ket{0}$};
            \node at (0.5+\s,0.2+\s) {\tiny$\ket{0}$};
    
            \node at (-0.2,0.5) {\tiny$\ket{0}$};
            \node at (-0.3+\s,0.5) {\tiny$\ket{m}$};
            \node at (0.2+2*\s,0.5) {\tiny$\ket{0}$};
        \end{scope}
    \draw[very thick, ->] (-2, 0.5*\s) -- (-1,0.5*\s);
    \begin{scope}[shift = {(0,0)}]
            \draw[very thick] (0,-0.5) -- (0,\L-\s);
            \draw[very thick] (\s,-0.5) -- (\s,\L-\s);
            \draw[very thick, color = myred] (\s,0) -- (\s,\s);
            \draw[very thick] (2*\s,-0.5) -- (2*\s,\L-\s);
            \draw[very thick] (-0.5,0) -- (\L,0);
            \draw[very thick] (-0.5,\s) -- (\L,\s);
            \node at (0.5,-0.2) {\tiny$\ket{0}$};
            \node at (0.5,0.2+\s) {\tiny$\ket{0}$};
            
            \node at (0.5+\s,-0.2) {\tiny$\ket{0}$};
            \node at (0.5+\s,0.2+\s) {\tiny$\ket{0}$};
    
            \node at (-0.2,0.5) {\tiny$\ket{0}$};
            \node at (-0.3+\s,0.5) {\tiny$\ket{m}$};
            \node at (0.2+2*\s,0.5) {\tiny$\ket{0}$};
            \node[thick, circ, draw = myred, fill=white,scale=0.5, inner sep=0.2pt, minimum size=22pt, font=\small] at (\s,\s) {\color{myred}{$-m$}};
            \node[thick, circ, draw = myred, fill=white,scale=0.5, minimum size=22pt, inner sep=0.2pt, font=\small] at (\s,0) {\color{myred}{$+m$}};
    \end{scope}
    \draw[very thick, ->] (3, 0.5*\s) -- node[above] {\scriptsize$\left(U_l^m\right)^\dagger$}(4,0.5*\s);
    \begin{scope}[shift = {(5,0)}]
        \draw[very thick] (0,-0.5) -- (0,\L-\s);
            \draw[very thick] (\s,-0.5) -- (\s,\L-\s);
            \draw[very thick] (2*\s,-0.5) -- (2*\s,\L-\s);
            \draw[very thick] (-0.5,0) -- (\L,0);
            \draw[very thick] (-0.5,\s) -- (\L,\s);
            \node at (0.5,-0.2) {\tiny$\ket{0}$};
            \node at (0.5,0.2+\s) {\tiny$\ket{0}$};
            
            \node at (0.5+\s,-0.2) {\tiny$\ket{0}$};
            \node at (0.5+\s,0.2+\s) {\tiny$\ket{0}$};
    
            \node at (-0.2,0.5) {\tiny$\ket{0}$};
            \node at (-0.2+\s,0.5) {\tiny$\ket{0}$};
            \node at (0.2+2*\s,0.5) {\tiny$\ket{0}$};
    \end{scope}
    \end{scope}
    
    }$
    \caption{An error $U_l^m$ acting on the strong-coupling ground state creates $m$ units of electric flux on $l = [v, v']$ (middle, oriented upwards). A measurement of the constraints results in $\mathcal{C}_v = m$ and $\mathcal{C}_{v'} = -m$ (marked in red) and $0$ elsewhere, and the error is corrected with $(U_l^m)^\dagger$.}
    \label{fig:U_errors}
\end{figure}

\subsection{Lattice QED with Fermionic Matter as a QECC}
\label{sec:fermionQECC}
Including staggered fermions into the model, the kinematical space contains both quantum rotors on links and qubits on sites. 
The resulting QECC is therefore no longer a pure rotor code and must be described more generally.
The stabilizer group is $\mathcal{G}$, generated by the constraints $\mathcal{C}_v = \mathcal{C}_v^{\mathcal{L}} - \rho_v$. Its elements take the form
\begin{equation}
    G(\bm{\lambda}) = \bigotimes_{v\in\mathcal{V}}e^{-i\lambda_v\rho_v}\otimes\bigotimes_{l = [v, v']\in\mathcal{L}}X_l(\lambda_{v'} - \lambda_v),
    \label{eq:local_stabilizer}
\end{equation}
where $\bm{\lambda} = \{\lambda_v\}_{v\in\mathcal{V}}$. Note that this is not a generalized Pauli stabilizer, because on the sites it acts as $e^{i\lambda_v \rho_v}= e^{i\lambda_v j_v}(\ket{0}\bra{0}_v + e^{-i\lambda_v}\ket{1}\bra{1}_v)$. This reduces to a Pauli operator only for $\lambda_v = 0$ (where it is $I_v$) or for $\lambda_v = \pi$ (where it is $(-1)^{|v|}Z_v$).

\subsubsection{The QECC Structure}

In contrast to the pure gauge case, the encoded logical degrees of freedom of this hybrid rotor-qubit code are no longer simply a collection of independent rotors. We have already found in \cref{ssec:QED_matter_QRF} that we can describe the gauge-invariant states purely in terms of the electric flux on all of their links through the reduction maps of the fermionic field QRF.

Since the reduced space $\HS_{\tilde{S}|\tilde{R}}\subset\HS_{\tilde{S}}$ is a proper subspace, the inverse reduction maps $(\mathcal{R}_{\tilde{R}}^{\bm{\lambda}})^\dagger$ are encoding maps only of the subspace $\HS_{\tilde{S}|\tilde{R}} \cong \HS_{\rm logical}$. As seen in \cref{prop:electric_basis}, this subspace has a basis given by the electric flux states $\bigotimes_{l\in\mathcal{L}}\ket{k_l}_l$ with eigenvalues $\mathcal{C}_v^{\mathcal{L}} \in \{-j_v, 1-j_v\}$. The logical operators of the QECC correspond to the gauge-invariant operators on $\HS_\phys$. As discussed in \cref{ssec:QED_matter_QRF}, these operators include the holonomy operators $H_\gamma$, the $X$-type operators $X_l(\lambda)$, the projectors $\psi_v^\dagger\psi_v, \, \psi_v\psi_v^\dagger$, and hopping terms $\psi_v^\dagger W_\gamma\psi_{v'}$. While these cannot be separated into $U$- and $X$-type logical operators on $\HS_\phys$, they reduce to operators that act on the basis $\bigotimes_{l\in\mathcal{L}}\ket{k_l}_l$ of $\HS_{\rm logical}$ either as $U$-, $X$-, or projector-type operators and products thereof. 

\paragraph{Errors on the links and on the sites.} Errors for staggered fermion lattice QED may occur on links and on sites. Only gauge-violating errors are detectable and thus potentially correctable.

For the errors that act on the links only, the discussion from the pure gauge case carries over: The $X$-type errors are gauge-invariant and thus not correctable, and only the $U$-type errors violate Gauss' law by creating electric flux lines. Correctable error sets of operators on the links thus sample from these $U$-type operators. 

On the sites, any operator can be expressed as linear combinations of $\psi_v$, $\psi_v^\dagger$, $\psi_v^\dagger\psi_v$ and $\psi_v\psi_v^\dagger$. Since the latter two operators are gauge-invariant, only errors that are linear combinations of the fermionic creation and annihilation operators violate Gauss' law. Such errors remove or add fermions onto the sites, creating unphysical charge configurations.

\subsubsection{Correctable Errors for Lattice QED with Fermionic Matter}
\label{sec:fermionError}
We now identify sets of correctable errors for lattice QED with staggered fermions. $U$-type errors are correctable in precisely the same manner as for pure gauge lattice QED.  
The fermionic field QRF yields sets of correctable errors acting on the sites via \cref{thm:correctable_gauge-fixing} and, due to \cref{prop:non_ideal_A_r}, a recovery operation based on coarse-grained charge-sector measurements. In addition, a coarse-grained measurement can also be used to correct an error set with errors acting on the sites or the links.

\paragraph{Correctable $U$-type errors.}
Both the correctable errors from \cref{prop:correctable_U_on_T_in_QED} and \cref{prop:correctable_U_errors} remain correctable. On the one hand, spanning trees provide QRFs for lattice QED if it also involves staggered fermions. On the other, staggered fermions do not affect the recovery of the error set \cref{prop:correctable_U_errors} based on measuring the constraints.

\paragraph{Correctable error sets associated to the fermionic field QRF.}
The fermionic QRF $\tilde{R}$ from \cref{thm:matter_qrf} yields a set of correctable errors acting on the sites by \cref{thm:correctable_gauge-fixing}. The orientation states 
\begin{equation}
    \ket{\bm{\lambda}}_{\tilde{R}} = \bigotimes_{v\in \mathcal{V}}\frac{1}{\sqrt{2}}(\ket{0}_v + e^{-i\lambda_v}\ket{1}_v) = \bigotimes_{v\in \mathcal{V}}\ket{\lambda_v}_v.
\end{equation}
are not all orthogonal, and so not the full set of gauge-fixing errors on the fermionic field QRF is correctable by the same recovery. However, choosing an angle $\alpha_v$ for every site $v$, the subset of orthogonal orientation states
\begin{equation}
    \left\{\ket{\bm{\lambda}}_{\tilde{R}} \, | \,  \forall v\in\mathcal{V}:\lambda_v\in\{\alpha_v, \alpha_v + \pi\}\right\}
\end{equation}
forms an orthonormal basis of $\HS_{\tilde{R}} = \bigotimes_{v\in\mathcal{V}}\mathbb{C}^2$.
The associated gauge-fixing operators are thus a correctable error set. These orientation states belong to the subset $\prod_{v\in\mathcal{V}}e^{i\alpha_v}\mathbb{Z}_2\subset\operatorname{U}(1)^{|\mathcal{V}|}$ which is a coset of the subgroup $\mathbb{Z}_2^{|\mathcal{V}|}$ (where $\mathbb{Z}_2 = \{\pm 1\}$ corresponds to the angles $0, \pi$). By \cref{prop:non_ideal_A_r}, we can thus find a set of correctable operators $A_{\bm{r}}$ which map $\HS_\phys$ into a coarse-grained charge sector $\HS_{\bm{r}}$.
Here, for $\bm{r}=(r_v)_{v\in\mathcal{V}}\in\{0, 1\}^{|\mathcal{V}|}$ and $\bm{\lambda}\in\{0, \pi\}^{|\mathcal{V}|}$, the restricted characters are $\chi_{\bm{r}}(\bm{\lambda})=\prod_{v\in\mathcal{V}}e^{-i\lambda_v r_v}$. We thus find the correctable operators
\begin{equation}
    A_{\bm{r}} = \frac{1}{2^{|\mathcal{V}|/2}}\sum_{\bm{\lambda}\in \{0, \pi\}^{|\mathcal{V}|}}\chi_{\bm{r}}(\bm{\lambda})\big(\ket{\bm{\lambda}+\bm{\alpha}}\bra{\bm{\lambda}+\bm{\alpha}}_{\tilde{R}}\otimes I_{\tilde{S}}\big).
\end{equation}
In each individual term, this operator on a single site is either $\ket{\alpha_v}\bra{\alpha_v}_v + \ket{\alpha_v +\pi}\bra{\alpha_v+\pi}_v = I_v$ or $\ket{\alpha_v}\bra{\alpha_v}_v - \ket{\alpha_v +\pi}\bra{\alpha_v+\pi}_v = e^{i\alpha_v}\psi_v + e^{-i\alpha_v}\psi_v^\dagger$. Defining $A_v(\alpha_v) \defeq e^{i\alpha_v}\psi_v + e^{-i\alpha_v}\psi_v^\dagger$,
we obtain
\begin{equation}
    A_{\bm{r}} = \bigotimes_{v\in\mathcal{V}}A_v(\alpha_v)^{r_v}.
\end{equation}
We thus find the following result.
\begin{proposition}
\label{prop:correctable_matter_errors}
    Let $\alpha_v\in[0, 2\pi)$ for all $v\in\mathcal{V}$. Then, the operators
    \begin{equation}
    \label{eq:correctable_A_errors}
        \bigotimes_{v\in\mathcal{V}} A_v(\alpha_v)^{r_v}, \quad r_v \in \{0, 1\}
    \end{equation}
    constitute a set of correctable errors.
\end{proposition}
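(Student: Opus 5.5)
The plan is to reduce the statement to \cref{thm:correctable_gauge-fixing} and \cref{prop:non_ideal_A_r}, with the coset version of the latter doing most of the work. Fix $\bm{\alpha}=(\alpha_v)_{v\in\mathcal{V}}$ and set $\tilde{G}=\bm{\alpha}+\{0,\pi\}^{|\mathcal{V}|}$, a coset of the subgroup $H\cong\mathbb{Z}_2^{|\mathcal{V}|}$ of $\mathcal{G}\cong\operatorname{U}(1)^{|\mathcal{V}|}$.

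The first step is orthonormality. On a single site, $\braket{\alpha_v}{\alpha_v+\pi}_v=\tfrac12(1+e^{-i\pi})=0$, and each factor has unit norm. Taking tensor products, $\{\ket{\bm{\lambda}}_{\tilde{R}}\}_{\bm{\lambda}\in\tilde{G}}$ is an orthonormal family of $2^{|\mathcal{V}|}$ vectors, hence an orthonormal basis of $\HS_{\tilde{R}}=\bigotimes_v\mathbb{C}^2$. \Cref{thm:correctable_gauge-fixing} then shows that the gauge-fixing operators $\mathcal{P}_{\tilde{R}}^{\bm{\lambda}}$, $\bm{\lambda}\in\tilde{G}$, satisfy the Knill–Laflamme conditions. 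Because the group is discrete, the Dirac delta there is a Kronecker delta. The normalization $\sqrt{N}$ has to be checked for this finite subfamily: restricted to $H$, the constant $c$ in the resolution of the identity is $1$, so the relevant normalization is that of the $H$-ideal frame.

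The second step identifies the operators in \eqref{eq:correctable_A_errors} as linear combinations of these gauge-fixing operators. By the coset form of \cref{prop:non_ideal_A_r}, $A_{\bm{r}}$ is the character transform of $\mathcal{P}^{\bm{\lambda}+\bm{\alpha}}_{\tilde{R}}$ over $\bm{\lambda}\in\{0,\pi\}^{|\mathcal{V}|}$. The sum factorizes over sites. Each site contributes either the identity or $\ket{\alpha_v}\bra{\alpha_v}-\ket{\alpha_v+\pi}\bra{\alpha_v+\pi}$, depending on $r_v$. Expanding the second expression in the number basis gives $e^{i\alpha_v}\ket{0}\bra{1}+e^{-i\alpha_v}\ket{1}\bra{0}=A_v(\alpha_v)$. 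This yields $A_{\bm{r}}=\bigotimes_v A_v(\alpha_v)^{r_v}\otimes I_{\tilde{S}}$, which is unitary.

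For the conclusion, the character transform is invertible by orthogonality of the characters of $\mathbb{Z}_2^{|\mathcal{V}|}$, so the family $\{A_{\bm{r}}\}$ spans the same space as the correctable gauge-fixing set. Correctable sets are closed under linear span, so the Knill–Laflamme conditions carry over. Alternatively, one can verify them directly. Write $E=A_{\bm{r}}^\dagger A_{\bm{r}'}=\bigotimes_v A_v(\alpha_v)^{r_v\oplus r'_v}$, using $A_v(\alpha_v)^2=I$. If $\bm{r}\neq\bm{r}'$, then $E$ flips the occupation number, and hence $\rho_v$, on some site. That moves a state into a sector with $\mathcal{C}_v\neq 0$, so $\Pi_\phys E\Pi_\phys=0$, which gives $\delta_{\bm{r}\bm{r}'}$.

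The main obstacle is bookkeeping rather than substance. One must keep the normalization and conjugation conventions of the characters consistent, since $\chi_{\bm{r}}$ versus $\overline{\chi}_{\bm{r}}$ matters only up to sign for $\mathbb{Z}_2$. One must also justify using the coset $\bm{\alpha}+H$ instead of a subgroup, which means re-seeding the orientation states at $\ket{\bm{\alpha}}$ as in the remark following \cref{prop:non_ideal_A_r}. The direct charge-sector argument above serves as an independent check that the result does not depend on these conventions.
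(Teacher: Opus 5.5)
Your proposal is correct and follows the paper's own route. The orthonormal coset $\bm{\alpha}+\{0,\pi\}^{|\mathcal{V}|}$ of fermionic orientation states gives correctable gauge-fixing operators by \cref{thm:correctable_gauge-fixing}, and the coset form of \cref{prop:non_ideal_A_r} turns their $\mathbb{Z}_2^{|\mathcal{V}|}$ character transform, site by site, into $\bigotimes_v A_v(\alpha_v)^{r_v}$. Your additional direct Knill--Laflamme check is also valid: $A_v(\alpha_v)$ is a Hermitian involution, and any flipped site shifts $\mathcal{C}_v$ to $\pm 1$, so $\Pi_\phys E\,\Pi_\phys=0$ for $\bm{r}\neq\bm{r}'$. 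The paper does not spell this out, and it confirms the result independently of the normalization and character conventions.
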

For concreteness, let us fix $\alpha_{v} = 0$ for all $v$. Then, the operator $A_v(0) = \psi_v  +\psi_v^\dagger$ is simply a Pauli-$X$ in the number basis. This special case leads to the correctable error set 
    \begin{equation}
        \left\{\bigotimes\nolimits_{v\in\mathcal{V}}X_v^{r_v}\ \middle| \ \forall v\in\mathcal{V}:r_v\in\{0, 1\}\right\}.
    \end{equation}
Similarly, if $\alpha_v = \pi/2$, then $A_v(\pi/2) = -Y_v$. We could thus also choose to correct Pauli-$Y$ instead of Pauli-$X$ errors on some sites $v$.
Note however that $\psi_v$ and $\psi_v^\dagger$ by themselves are not correctable errors because they have non-trivial kernel on $\HS_\phys$ and do not satisfy the Knill-Laflamme conditions \labelcref{eq:Knill_Laflamme}. 

\paragraph{Recovery operation for the $A_v(\alpha_v)$-errors.} The recovery operation of the operators in \cref{eq:correctable_A_errors}, following \cref{sec:QEC_in_Gauge}, is based on a coarse-grained charge measurement. By construction in \cref{prop:non_ideal_A_r}, $A_{\bm{r}}$ maps $\HS_\phys$ into
\begin{equation}
    \HS_{\bm{r}} = \bigoplus_{\bm{q}\rightarrow\bm{r}}\HS_{\bm{q}}.
\end{equation}
Here, the direct sum goes over all $\bm{q}$ such that the characters satisfy $\chi_{\bm{r}}(\bm{\lambda}) = \chi_{\bm{q}}(\bm{\lambda}) $ for all $ \bm{\lambda}\in\{0, \pi\}^{|\mathcal{V}|}$, i.e., over all $\bm{q}$ such that $q_v$ is even if $r_v=0$ and $q_v$ is odd if $r_v = 1$.
The associated coarse-grained measurement thus projects onto these sectors.

More concretely, a single error $A_v(\alpha_v)$ maps a state $\ket{\psi}\in\HS_\phys$ to $A_v(\alpha_v)\ket{\psi}\in \HS_{\bm{q}}\oplus\HS_{-\bm{q}}$ where $q_{v'} = \delta_{vv'}$, i.e., into a superposition of charge $\pm1$ at $v$.
The coarse-grained measurement thus determines that this state is in $\HS_{\bm{r}}$ where $r_{v'} = \delta_{vv'}$, and we apply $A_v(\alpha_v)^\dagger$ to correct the error.

Finally, let us remark on the implications of the non-orthogonality of the orientation states of the fermionic field QRF for the correctable error sets in \cref{prop:correctable_matter_errors}. The recovery described above relies on a choice of the angle $\alpha_v$ on every site. This specifies two orthogonal states on $v$ and ultimately which operator the recovery corrects on each site. By a different choice of the angle on $v$, say $\beta_v$, a different operator $A_v(\beta_v)$ on $v$ becomes correctable. This does not mean that the operators $A_v(\alpha_v)$ for any angles and their products form a correctable error set, because the recovery operations are different for different angles $\alpha_v$.

\paragraph{Combining electric flux and $A_v(\alpha_v)$-error correction.}

We found in \cref{prop:correctable_U_errors} that any single $U_l^m$-error is correctable by a measurement of the constraints. This procedure can be combined with the one above to correct $A_v(\alpha_v)$-errors to correct both types of errors:
\begin{enumerate}
\itemsep0em
    \item perform a coarse-grained measurement with the projectors $\Pi_{\bm{r}}$,
    \item if the outcome corresponds to $\bm{r}$ with $r_{v'}=\delta_{vv'}$, apply $A_v(\alpha_v)^\dagger$,
    \item else, measure the charge sectors with projectors $\Pi_{\bm{q}}$, i.e., the constraints $\mathcal{C}_v$,
    \item if $\mathcal{C}_v=m$ and $\mathcal{C}_{v'}=-m$ on adjacent sites connected by $l = [v,v']$ and the constraints vanish elsewhere, apply $(U_{l}^m)^\dagger$.
\end{enumerate}
This proves the theorem below.

\begin{theorem}
\label{thm:correctable_errors_full_QED}
    Consider lattice QED including staggered fermions. On $\HS_\phys$, any single $U_l^m$- or $A_v(\alpha_v)$-error is correctable. That is, 
    \begin{equation}
    \label{eq:correctable_errors_full_QED}
        \left\{U_l^m  \, | \, l\in\mathcal{L}, m\in\mathbb{Z}\right\}\cup\left\{A_v(\alpha_v) \, | \,  v\in\mathcal{V}, \alpha_v\in[0, 2\pi)\right\}
    \end{equation}
    is a correctable error set.
\end{theorem}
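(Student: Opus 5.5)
The plan is to verify the Knill--Laflamme conditions \labelcref{eq:Knill_Laflamme} directly, with $\Pi_\code=\Pi_\phys$, using the charge-sector decomposition under $\mathcal{G}\cong\operatorname{U}(1)^{|\mathcal{V}|}$. This sidesteps checking that the four-step recovery is a valid channel. Throughout, an angle $\alpha_v$ is fixed for each site $v$, as in \cref{prop:correctable_matter_errors}. The key fact is that each error maps $\HS_\phys$ into a specific set of charge sectors, which is then read off from the commutation relations with $\mathcal{C}_w=\mathcal{C}_w^{\mathcal{L}}-\rho_w$.

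First I would record the following.
\begin{itemize}
\item Using $[\epsilon_l,U_l]=U_l$, the error $U_l^m$ with $l=[v,v']$ maps $\HS_\phys$ into $\HS_{\bm{q}}$, where $q_v=m$, $q_{v'}=-m$ and all other charges vanish.
\item Using $[\rho_v,\psi_v]=-\psi_v$ and $[\rho_v,\psi_v^\dagger]=\psi_v^\dagger$, the error $A_v(\alpha_v)$ maps $\HS_\phys$ into $\HS_{\bm{q}}\oplus\HS_{-\bm{q}}$ with $q_w=\delta_{wv}$.
\end{itemize}
Since distinct charge sectors are orthogonal, $\Pi_\phys E_i^\dagger E_j\Pi_\phys=0$ whenever the sector sets of $E_i$ and $E_j$ are disjoint. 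That is, $\Pi_\phys E_i^\dagger E_j\Pi_\phys=(E_i\Pi_\phys)^\dagger(E_j\Pi_\phys)$, and $E_i\Pi_\phys$ and $E_j\Pi_\phys$ have orthogonal ranges.

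Next I would run through the cases.
\begin{itemize}
\item $(U,U)$: if $(l,m)\neq(l',m')$ the sectors differ, because a single link with nonzero $m$ is determined by its endpoint charges. Otherwise the product is $I$, and in particular $m=0$ gives the identity error.
\item $(U,A)$: a nonzero $U$-error produces even charges at every site, whereas $A_v$ produces charge $\pm1$ at $v$. The parities differ, so this term vanishes; the case $m=0$ falls under $(A,A)$ with the identity.
\item $(A,A)$ with $v\neq v'$: the odd-charge locations differ, so the term vanishes.
\item $(A,A)$ with $v=v'$: compute $A_v(\alpha_v)^\dagger A_v(\alpha_v)=\psi_v^\dagger\psi_v+\psi_v\psi_v^\dagger=I$.
\item $(I,A_v)$: the term vanishes, again by the parity mismatch.
\end{itemize}
Together these give $c_{ij}=\delta_{ij}$. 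The four-step protocol preceding the theorem is then just the explicit recovery realizing these conditions: the coarse-grained parity measurement separates $A$-errors from $U$-errors, and the full constraint measurement identifies $(l,m)$.

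The main obstacle is the same-site pair $A_v(\alpha)^\dagger A_v(\beta)$ with $\alpha\neq\beta$. This product equals $e^{i(\beta-\alpha)}\psi_v^\dagger\psi_v+e^{i(\alpha-\beta)}\psi_v\psi_v^\dagger$, which is gauge invariant but not proportional to $\Pi_\phys$. The conditions therefore fail if the union over all $\alpha_v\in[0,2\pi)$ is read as a single error set, consistent with the remark after \cref{prop:correctable_matter_errors}. I would therefore state and prove the theorem for an arbitrary but fixed choice $\{\alpha_v\}_v$: one recovery for each choice, with every choice admissible. Beyond that, only the sector bookkeeping for the $A$-errors needs care, in particular that $\HS_{\pm\bm{q}}$ never coincides with any $U$-error sector; the parity argument settles this.
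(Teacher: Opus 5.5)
\textbf{How your route differs from the paper's.} The paper proves the theorem constructively. It gives the four-step recovery: first the coarse-grained measurement with $\Pi_{\bm r}$ inherited from the fermionic-QRF construction of \cref{prop:non_ideal_A_r}, then, if needed, the full constraint measurement with $\Pi_{\bm q}$. It only remarks that the Knill--Laflamme conditions could also be checked directly. You carry out that direct check, reducing everything to orthogonality of charge sectors plus $A_v(\alpha_v)^\dagger A_v(\alpha_v)=I$.

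\begin{itemize}
\item Your route is more self-contained: you never need to show that the paper's partially specified protocol, which is ``agnostic'' about other outcomes, completes to a channel acting correctly on every relevant sector.
\item Your computation of $A_v(\alpha)^\dagger A_v(\beta)$ exposes something the statement's notation glosses over. With $\alpha_v$ ranging over $[0,2\pi)$ the set is not correctable, so the theorem must be read for a fixed choice $\{\alpha_v\}_v$. This is exactly what the paper's recovery (``we again fix $\alpha_v$ for all sites'') and its remark after \cref{prop:correctable_matter_errors} intend, and you handle it correctly.
\item The paper's route, in exchange, yields an operational protocol showing how the QRF-selected coarse-grained measurement and the Gauss-law measurement combine.
\end{itemize}

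\textbf{A false step in your $(U,A)$ case.} The claim that a nonzero $U$-error ``produces even charges at every site'' fails for odd $m$. The error $U_l^m$ with $l=[v,v']$ puts the odd charges $\pm m$ on both endpoints; the paper itself notes $r_v=r_{v'}=1$ for odd $m$ in \cref{fig:X_and_U_errors}. So site-wise parity does not separate $U_l^m$ from $A_v(\alpha_v)$ at $v$. Your closing sentence (``the parity argument settles this'') and your description of the coarse-grained measurement as separating $A$- from $U$-errors rest on the same mistake.

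The conclusion is still true, and the repair is one line. The sector of $U_l^m$ with $m\neq 0$ has exactly two nonzero entries and total charge $\sum_w q_w=0$. The sectors $\pm\bm q$ with $q_w=\delta_{wv}$ reached by $A_v(\alpha_v)$ have exactly one nonzero entry and total charge $\pm1$. Physically, link errors leave $\sum_w\rho_w$ invariant, while $A_v(\alpha_v)$ shifts it by $\pm1$. In terms of $\bm r$, a $U$-error yields zero or two odd sites versus exactly one for $A_v$, which is precisely what step~2 of the paper's protocol tests. With this substitution your case analysis is complete.

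\textbf{A tacit assumption you should state.} The paper shares it, but it should be made explicit. Reading off $(l,m)$ from the endpoint charges requires the lattice graph to have no double links or self-loops, e.g.\ periodic directions of length at least $3$. For length $2$, the errors $U^m_{[v,v+e_i]}$ and $U^{-m}_{[v+e_i,v]}$ land in the same sector. Their product is a non-contractible Wilson loop, so the conditions fail there.
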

Instead of relying on the recovery operation described above as a proof, one can check the Knill-Laflamme conditions for this set directly. 

Theorem 1 in \cite{spagnoli2024} is the analogue of this statement for lattice QED with a truncated local $\mathbb{Z}_2$-gauge group. In this case, both the link and site degrees of freedom are essentially qubits so that Gauss' law operators become qubit Pauli stabilizers. There are thus no constraints to measure and we have to resort to syndrome measurements of the Pauli stabilizers. On the sites, these act as Pauli-$Z$ and thus only anti-commute with $X_v$ (as opposed to $A_v(\alpha_v)$).
Also, there is only one possible electric flux error per link (represented by a qubit Pauli operator). This means that in the truncated case, any single electric flux error or single $X$-error on a site is correctable.

\paragraph{Visualizing the recovery protocols.}
The correctable error sets from \cref{prop:correctable_matter_errors} and from \cref{thm:correctable_errors_full_QED} are associated with recovery protocols that start with a coarse-grained charge-sector measurement and differ in the choice of which error to correct for a given outcome. 

To illustrate the errors on the fermionic field, we consider again the strong-coupling limit of the Hamiltonian \labelcref{eq:full_hamiltonian},
\begin{equation}
    \begin{aligned}
        H_{m} \approx &\sum_{v\in\mathcal{V}}(-1)^{|v|}m\psi^\dagger_v\psi_v +\frac{i}{2a}\sum_{v\in\mathcal{V}, \ i} \eta_{v,i}(\psi_v^\dagger U_{[v, v+e_i]}\psi_{v+e_i} - \psi_{v}^\dagger U_{[v-e_i, v]}^\dagger\psi_{v-e_i}) \\
        & + \frac{g^2}{2a}\sum_{l\in\mathcal{L}}\epsilon_l^2.
    \end{aligned}
\end{equation}
The ground state in this regime is $\ket{\Omega_{m,\text{sc}}} = \bigotimes_{l\in\mathcal{L}}\ket{0}_l \bigotimes_{v\in\mathcal{V}}\ket{j_v}_v$, where $j_v$ indicates the staggered vacuum configuration (since $\rho_v\ket{j_v}_v=0$).
Applying an error $A_v(\alpha_v)$ yields
\begin{equation}
    A_v(\alpha_v)\ket{\Omega_{m,\text{sc}}} = e^{(-1)^{|v|+1}i\alpha_v}\bigotimes_{l\in\mathcal{L}}\ket{0}_l \bigotimes_{v' \in\mathcal{V}}\ket{j_{v'} + (-1)^{|v|}\delta_{v,v'}}_{v'}.
\end{equation}
The occupation number on $v$ is thus flipped, creating a particle (or anti-particle) without adjusting the electric flux.
Similarly, an error $\bigotimes_{v\in V}A_v(\alpha_v)$ for $V\subset\mathcal{V}$ adds particles and antiparticles on all the sites in $V$ to the strong-coupling ground state. 
More generally, if the configuration is a superposition of occupation numbers, $A_v(\alpha_v)$ flips this number on $v$ and adds a relative phase, i.e., it exchanges occupied and unoccupied states on $v$ while adding a relative phase.

The recovery protocol of the errors in \cref{prop:correctable_matter_errors} starts by choosing an angle $\alpha_v$ per site (e.g., $\alpha_v = 0$ to correct Pauli $X$-errors). After the coarse-grained measurement results in the outcome $\bm{r}$, an operator $A_v(\alpha_v)^\dagger$ is applied to every site $v$ where $r_v = 1$ (see \cref{fig:X_errors}). 
In this way, any number of $A_v(\alpha_v)$-errors on the sites can be corrected.

To correct both single $A_v(\alpha_v)$- and $U_l^m$-errors, we again fix $\alpha_v$ for all sites and begin with the coarse-grained measurement. Compared to the previous recovery, we now interpret the outcome differently and follow the procedure outlined above \cref{thm:correctable_errors_full_QED}, as illustrated in \cref{fig:X_and_U_errors}.
The key difference is that matter errors produce a single-site violation, whereas electric flux errors produce a pair of opposite violations at adjacent sites. This difference is exploited in the recovery protocol.

\begin{figure}[h]
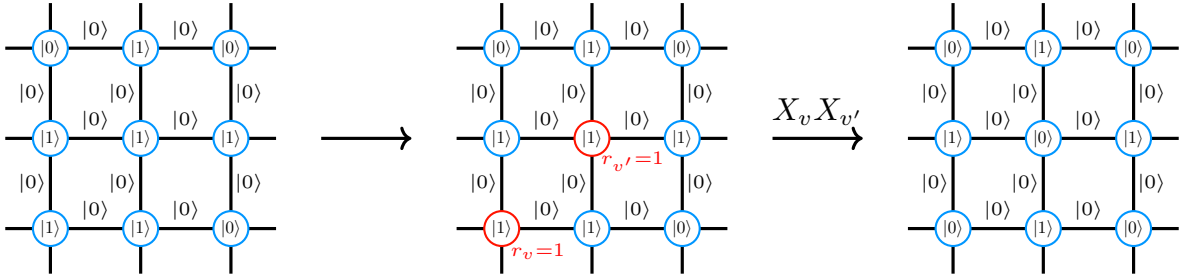

    \centering
    $\centertikz{
            \def\s{1}
    \def\L{2.5}
    \begin{scope}[scale=1.2,transform shape]
        \begin{scope}[shift = {(-5,0)}]
        \draw[very thick] (0,-0.5) -- (0,\L);
        \draw[very thick] (\s,-0.5) -- (\s,\L);
        \draw[very thick] (2*\s,-0.5) -- (2*\s,\L);
        \draw[very thick] (-0.5,0) -- (\L,0);
        \draw[very thick] (-0.5,\s) -- (\L,\s);
        \draw[very thick] (-0.5,2*\s) -- (\L,2*\s);
        \node at (0.5,0.2) {\tiny$\ket{0}$};
        \node at (0.5,0.2+\s) {\tiny$\ket{0}$};
        \node at (0.5,0.2+2*\s) {\tiny$\ket{0}$};

        \node at (0.5+\s,0.2) {\tiny$\ket{0}$};
        \node at (0.5+\s,0.2+\s) {\tiny$\ket{0}$};
        \node at (0.5+\s,0.2+2*\s) {\tiny$\ket{0}$};

        \node at (-0.2,0.5+\s) {\tiny$\ket{0}$};
        \node at (-0.2,0.5) {\tiny$\ket{0}$};

        \node at (-0.2+\s,0.5+\s) {\tiny$\ket{0}$};
        \node at (-0.2+\s,0.5) {\tiny$\ket{0}$};

        \node at (0.2+2*\s,0.5+\s) {\tiny$\ket{0}$};
        \node at (0.2+2*\s,0.5) {\tiny$\ket{0}$};
        \node[thick, circ, fill=white,scale=0.5, minimum size=22pt, draw=lightblue] at (0,0) {$\ket{1}$};
        \node[thick, circ, fill=white,scale=0.5, minimum size=22pt, draw=lightblue] at (0,\s) {$\ket{1}$};
        \node[thick, circ, fill=white,scale=0.5, minimum size=22pt, draw=lightblue] at (0,2*\s) {$\ket{0}$};

        \node[thick, circ, fill=white,scale=0.5, minimum size=22pt, draw=lightblue] at (\s,0) {$\ket{1}$};
        \node[thick, circ, fill=white,scale=0.5, minimum size=22pt, draw=lightblue] at (\s,\s) {$\ket{1}$};
        \node[thick, circ, fill=white,scale=0.5, minimum size=22pt, draw=lightblue] at (\s,2*\s) {$\ket{1}$};

        \node[thick, circ, fill=white,scale=0.5, minimum size=22pt, draw=lightblue] at (2*\s,0) {$\ket{0}$};
        \node[thick, circ, fill=white,scale=0.5, minimum size=22pt, draw=lightblue] at (2*\s,\s) {$\ket{1}$};
        \node[thick, circ, fill=white,scale=0.5, minimum size=22pt, draw=lightblue] at (2*\s,2*\s) {$\ket{0}$};

    \end{scope}
    \draw[very thick, ->] (-2, \s) -- (-1,\s);
    \begin{scope}[shift = {(0,0)}]
        \draw[very thick] (0,-0.5) -- (0,\L);
        \draw[very thick] (\s,-0.5) -- (\s,\L);
        \draw[very thick] (2*\s,-0.5) -- (2*\s,\L);
        \draw[very thick] (-0.5,0) -- (\L,0);
        \draw[very thick] (-0.5,\s) -- (\L,\s);
        \draw[very thick] (-0.5,2*\s) -- (\L,2*\s);
        \node at (0.5,0.2) {\tiny$\ket{0}$};
        \node at (0.5,0.2+\s) {\tiny$\ket{0}$};
        \node at (0.5,0.2+2*\s) {\tiny$\ket{0}$};

        \node at (0.5+\s,0.2) {\tiny$\ket{0}$};
        \node at (0.5+\s,0.2+\s) {\tiny$\ket{0}$};
        \node at (0.5+\s,0.2+2*\s) {\tiny$\ket{0}$};

        \node at (-0.2,0.5+\s) {\tiny$\ket{0}$};
        \node at (-0.2,0.5) {\tiny$\ket{0}$};

        \node at (-0.2+\s,0.5+\s) {\tiny$\ket{0}$};
        \node at (-0.2+\s,0.5) {\tiny$\ket{0}$};

        \node at (0.2+2*\s,0.5+\s) {\tiny$\ket{0}$};
        \node at (0.2+2*\s,0.5) {\tiny$\ket{0}$};

        \node[thick, circ, fill=white,scale=0.5, minimum size=22pt, draw=myred] at (0,0) {$\ket{1}$};
        \node at (0.4,-0.25) {\tiny\color{red}{$r_v\mathord{=}1$}};
        \node[thick, circ, fill=white,scale=0.5, minimum size=22pt, draw=lightblue] at (0,\s) {$\ket{1}$};
        \node[thick, circ, fill=white,scale=0.5, minimum size=22pt, draw=lightblue] at (0,2*\s) {$\ket{0}$};

        \node[thick, circ, fill=white,scale=0.5, minimum size=22pt, draw=lightblue] at (\s,0) {$\ket{1}$};
        \node[thick, circ, fill=white,scale=0.5, minimum size=22pt,draw=myred] at (\s,\s) {$\ket{1}$};
        \node at (\s+0.45,\s-0.25) {\tiny\color{red}{$r_{v'}\mathord{=}1$}};
        \node[thick, circ, fill=white,scale=0.5, minimum size=22pt, draw=lightblue] at (\s,2*\s) {$\ket{1}$};

        \node[thick, circ, fill=white,scale=0.5, minimum size=22pt, draw=lightblue] at (2*\s,0) {$\ket{0}$};
        \node[thick, circ, fill=white,scale=0.5, minimum size=22pt, draw=lightblue] at (2*\s,\s) {$\ket{1}$};
        \node[thick, circ, fill=white,scale=0.5, minimum size=22pt, draw=lightblue] at (2*\s,2*\s) {$\ket{0}$};
    \end{scope}
    \draw[very thick, ->] (3, \s) -- node[above] {\small$X_v X_{v'}$}(4,\s);
    \begin{scope}[shift = {(5,0)}]
        \draw[very thick] (0,-0.5) -- (0,\L);
        \draw[very thick] (\s,-0.5) -- (\s,\L);
        \draw[very thick] (2*\s,-0.5) -- (2*\s,\L);
        \draw[very thick] (-0.5,0) -- (\L,0);
        \draw[very thick] (-0.5,\s) -- (\L,\s);
        \draw[very thick] (-0.5,2*\s) -- (\L,2*\s);
        \node at (0.5,0.2) {\tiny$\ket{0}$};
        \node at (0.5,0.2+\s) {\tiny$\ket{0}$};
        \node at (0.5,0.2+2*\s) {\tiny$\ket{0}$};

        \node at (0.5+\s,0.2) {\tiny$\ket{0}$};
        \node at (0.5+\s,0.2+\s) {\tiny$\ket{0}$};
        \node at (0.5+\s,0.2+2*\s) {\tiny$\ket{0}$};

        \node at (-0.2,0.5+\s) {\tiny$\ket{0}$};
        \node at (-0.2,0.5) {\tiny$\ket{0}$};

        \node at (-0.2+\s,0.5+\s) {\tiny$\ket{0}$};
        \node at (-0.2+\s,0.5) {\tiny$\ket{0}$};

        \node at (0.2+2*\s,0.5+\s) {\tiny$\ket{0}$};
        \node at (0.2+2*\s,0.5) {\tiny$\ket{0}$};
        \node[thick, circ, fill=white,scale=0.5, minimum size=22pt, draw=lightblue] at (0,0) {$\ket{0}$};
        \node[thick, circ, fill=white,scale=0.5, minimum size=22pt, draw=lightblue] at (0,\s) {$\ket{1}$};
        \node[thick, circ, fill=white,scale=0.5, minimum size=22pt, draw=lightblue] at (0,2*\s) {$\ket{0}$};

        \node[thick, circ, fill=white,scale=0.5, minimum size=22pt, draw=lightblue] at (\s,0) {$\ket{1}$};
        \node[thick, circ, fill=white,scale=0.5, minimum size=22pt, draw=lightblue] at (\s,\s) {$\ket{0}$};
        \node[thick, circ, fill=white,scale=0.5, minimum size=22pt, draw=lightblue] at (\s,2*\s) {$\ket{1}$};

        \node[thick, circ, fill=white,scale=0.5, minimum size=22pt, draw=lightblue] at (2*\s,0) {$\ket{0}$};
        \node[thick, circ, fill=white,scale=0.5, minimum size=22pt, draw=lightblue] at (2*\s,\s) {$\ket{1}$};
        \node[thick, circ, fill=white,scale=0.5, minimum size=22pt, draw=lightblue] at (2*\s,2*\s) {$\ket{0}$};
    \end{scope}
    \end{scope}
    
    }$
    \caption{The fermionic field QRF lives on the sites of the lattice (marked in blue). We assume that the top left site is even. To illustrate \cref{prop:correctable_matter_errors}, two even sites $v$ (bottom left) and $v'$ (middle) are excited from the strong-coupling vacuum by a gauge-violating error $X_v X_{v'}$. A coarse-grained measurement reveals $r_v = r_{v'} = 1$ and $\bm{r}$ vanishing elsewhere (marked in red), and the error is appropriately corrected by acting on the lattice with $X_v X_{v'}$.}
    \label{fig:X_errors}
\end{figure}

\begin{figure}[h]
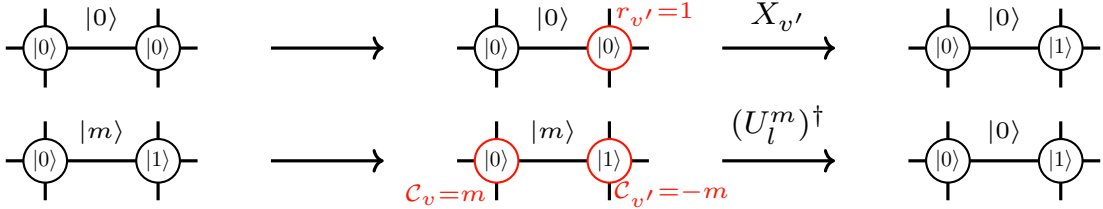

    \centering
   $\centertikz{
    \def\s{1}
    \def\L{2.5}
    \begin{scope}[scale=1.5,transform shape]
        \begin{scope}[shift = {(-4,0)}]
            \draw[very thick] (0,-0.35) -- (0,0.35);    
            \draw[very thick] (\s,-0.35) -- (\s,+0.35);
            \draw[very thick] (-0.35,0) -- node[above] {\tiny$\ket{m}$}  (\s+0.35,0);
            
            \draw[very thick] (0,-0.35+\s) -- (0,0.35+\s);    
            \draw[very thick] (\s,-0.35+\s) -- (\s,+0.35+\s);
            \draw[very thick] (-0.35,0+\s) -- node[above] {\tiny$\ket{0}$} (\s+0.35,0+\s);
            \node[thick, circ, fill=white,scale=0.5, minimum size=22pt] at (0,0) {$\ket{0}$};
            \node[thick, circ, fill=white,scale=0.5, minimum size=22pt] at (0,\s) {$\ket{0}$};
            \node[thick, circ, fill=white,scale=0.5, minimum size=22pt] at (\s,0) {$\ket{1}$};
            \node[thick, circ, fill=white,scale=0.5, minimum size=22pt] at (\s,\s) {$\ket{0}$};
        \end{scope}
    \draw[very thick, ->] (-2, \s) -- (-1,\s);
    \draw[very thick, ->] (-2, 0) -- (-1,0);
    \begin{scope}[shift = {(0,0)}]
            \draw[very thick] (0,-0.35) -- (0,0.35);    
            \draw[very thick] (\s,-0.35) -- (\s,+0.35);
            \draw[very thick] (-0.35,0) -- node[above] {\tiny$\ket{m}$}  (\s+0.35,0);
            
            \draw[very thick] (0,-0.35+\s) -- (0,0.35+\s);    
            \draw[very thick] (\s,-0.35+\s) -- (\s,+0.35+\s);
            \draw[very thick] (-0.35,0+\s) -- node[above] {\tiny$\ket{0}$} (\s+0.35,0+\s);
            \node[thick, circ, fill=white,scale=0.5, minimum size=22pt,draw=myred] at (0,0) {$\ket{0}$};
            \node[thick, circ, fill=white,scale=0.5, minimum size=22pt] at (0,\s) {$\ket{0}$};
            \node[thick, circ, fill=white,scale=0.5, minimum size=22pt,draw=myred] at (\s,0) {$\ket{1}$};
            
            \node[color=myred] at (\s+0.55,-0.3) {\tiny$\mathcal{C}_{v'} \mathord{=}\mathord{-}m$};
            \node[color=myred] at (\s+0.4,\s+0.3) {\tiny$r_{v'} \mathord{=}1$};
            \node[color=myred] at (-0.45,-0.3) {\tiny$\mathcal{C}_v \mathord{=}m$};
            \node[thick, circ, fill=white,scale=0.5, minimum size=22pt,draw=myred] at (\s,\s) {$\ket{0}$};
    \end{scope}
    \draw[very thick, ->] (2, \s) -- node[above] {\scriptsize$X_{v'}$} (3,\s);
    \draw[very thick, ->] (2, 0) -- node[above] {\scriptsize$(U_l^m)^\dagger$} (3,0);
    \begin{scope}[shift = {(4,0)}]
            \draw[very thick] (0,-0.35) -- (0,0.35);    
            \draw[very thick] (\s,-0.35) -- (\s,+0.35);
            \draw[very thick] (-0.35,0) -- node[above] {\tiny$\ket{0}$}  (\s+0.35,0);
            
            \draw[very thick] (0,-0.35+\s) -- (0,0.35+\s);    
            \draw[very thick] (\s,-0.35+\s) -- (\s,+0.35+\s);
            \draw[very thick] (-0.35,0+\s) -- node[above] {\tiny$\ket{0}$} (\s+0.35,0+\s);
            \node[thick, circ, fill=white,scale=0.5, minimum size=22pt] at (0,0) {$\ket{0}$};
            \node[thick, circ, fill=white,scale=0.5, minimum size=22pt] at (0,\s) {$\ket{0}$};
            \node[thick, circ, fill=white,scale=0.5, minimum size=22pt] at (\s,0) {$\ket{1}$};
            \node[thick, circ, fill=white,scale=0.5, minimum size=22pt] at (\s,\s) {$\ket{1}$};
    \end{scope}
    \end{scope}
   
   }$
    \caption{To illustrate \cref{thm:correctable_errors_full_QED}, we consider two different errors. (Top) A gauge-violating error excites the strong-coupling vacuum to $|0\rangle_v|0\rangle_l|0\rangle_{v'}\otimes|\phi\rangle_{\text{rest}}$, where we take $v$ to be an even site. Now, a coarse-grained measurement yields $r_{v'} = 1$ and $0$ elsewhere, and the error is corrected with $X_{v'}$. (Bottom) A different gauge-violating error excites the strong-coupling vacuum to $|0\rangle_v|m\rangle_l|1\rangle_{v'}\otimes|\phi\rangle_{\text{rest}}$ (where $v$ is an even site). A coarse-grained measurement results in $r_v = r_{v'} = 0, 1$, depending on the parity of $m$. Both outcomes do not correspond to a single $X_v$-error. Therefore, a subsequent constraint measurement shows $\mathcal{C}_v = m$, $\mathcal{C}_{v'} = -m$ (marked in red). This outcome is interpreted as the error $U_l^m$ and thus corrected with $(U_l^m)^\dagger$.}
    \label{fig:X_and_U_errors}
\end{figure}

\section{The Continuum Counterparts of the Quantum Reference Frames and Correctable Errors}
\label{sec:continuum}
To build further physical intuition for the QRFs and the correctable error sets identified in lattice QED, we now turn to their continuum counterparts. The goal of this section is not to cast continuum QED as a quantum error-correcting code; such a formulation would require different tools and a careful treatment of subtleties unique to continuum field theory. Rather, we aim to identify the continuum analogues of the QRFs and correctable error structures found in the lattice theory, which we expect to hold, in an appropriate sense, in the continuum limit.

We begin by noting a basic structural distinction between the lattice and continuum settings: in continuum QED, the group of gauge transformations is no longer locally compact. Consequently, the construction of QRFs in \cite{delahamette_2021} is not directly applicable, and one must instead employ field-theoretic techniques.
However, the physical intuition underlying QRFs remains clear. The spanning-tree QRF introduced in \cref{ssec:spanning_tree_QRFs} identifies a gauge-fixable subsystem (with respect to $\mathcal{H}_{\kin}$) of the gauge field by connecting any two vertices via a unique path. While there is no direct analogue of a spanning tree in the continuum, a conceptually similar construction does exist --- namely, the contour gauge \cite{Anikin2025}. The continuum limit of a Wilson line, i.e., a product of $U_l$ operators  $W_\gamma = \prod_{l\in\gamma}U_l^{\sigma_l} = e^{i\sum_{l\in\gamma}\sigma_l\Theta_l}$ (recall that $\sigma_l$ denotes wether $\gamma$ passes $l$ in its positive or negative orientation) is the open Wilson line:
\begin{equation}
    \hat{W}_\gamma = e^{ig\int_\gamma \D x^{\mu} \cdot \hat{A}_{\mu}}.
\end{equation}
Here, we will denote the operators in the continuum with a hat to distinguish from the lattice operators and to adhere to standard literature.
If we choose a root $v_0$ of a spanning tree, any site $v$ is connected to $v_0$ by a unique path $\gamma({v_0, v})$ along the tree.  Gauge-fixing the orientation of the tree to $\ket{\phi(\bm{0})}_R$ corresponds to setting all of the Wilson lines to identity, $W_{\gamma({v_0, v})} = I$, because $W_{\gamma({v_0, v})}\big(\ket{\phi(\bm{0})}_R\otimes\ket{\psi}_S\big) = \ket{\phi(\bm{0})}_R\otimes\ket{\psi}_S$.  This is similar to the contour gauge in the continuum, where we fix a path $\gamma(\vec{x}_0, \vec{x})$ for any point $\vec{x}$, connecting it to a fixed point $\vec{x}_0$ and subsequently gauge-fix the Wilson line $\hat{W}_{\gamma{(\vec{x}_0, \vec{x})}}=I$.

The operators in the correctable set $\{U_l^m\}_{m\in\mathbb{Z},l\in\mathcal{L}}$ are exponentials of $\Theta_l$ which corresponds to the vector potential (see \cref{app:continuum_limit}). 
Since the $l$ goes to zero in the continuum limit, to avoid singular operator, we can introduce smeared functions. This  is commonly used to define a proper notion of local operators, as local field operators are formally understood as operator-valued distribution rather than point-wise defined operators \cite{Fewster:2019ixc}.   More precisely, we choose a smooth compactly supported test vector field with components $l^i_{\sigma}(\vec{x})$  approximating an oriented link $l$ of the lattice case for each small region $\sigma \in \Sigma$ on the spacial hypersurface $\Sigma$. The size of $\sigma$  is set by the experimental resolution, representing the finite spatial resolution of the observable.
The continuum limit of the correctable set  is then 
\begin{equation}\label{eq:fieldUerror}
\{U_l^m\}_{l\in\mathcal{L}} \rightarrow \big\{  e^{i m g \int_{\sigma} \D^3\vec{x}   \hat{A}_i(\vec{x}) l^i_{\sigma}(\vec{x}) }    \big\}_{\sigma \in \Sigma}.
\end{equation}
The construction extends straightforwardly to the correctable error sets generated by products of Wilson line-type operators.

To discuss the continuum limit of fermionic field QRF and the associated correctable error sets, we must first understand the continuum limit of the staggered fermion construction. This is not trivial precisely because of the staggered nature of the lattice fermions. In $3+1$ dimensions, the continuum limit yields two species of fermions, commonly referred to as {\it tastes}. 
This can be formulated in the taste basis, in which the 8 degrees of freedom located at the corners of a cube are grouped into a matrix field $\Psi_w$ for all $w=2v\in\mathcal{V}$  (see \cref{app:staggered_continuum}). In the continuum limit, this construction recovers the Hamiltonian of two independent Dirac fields \cite{CPS_2025}.
A single Dirac field in the continuum admits the expansion in two locally independent modes,
\begin{equation}
\hat{\psi}(\vec{x}) = \sum_{s=a,b} \int\frac{\D^3\vec{p}}{(2\pi)^3 \sqrt{2E_p}} \left(\hat{b}_s(\vec{p})u_s(\vec{p}) e^{-i \vec{p}\cdot \vec{x}} + \hat{d}^{\dagger}_s(\vec{p})v_s(\vec{p}) e^{i \vec{p}\cdot \vec{x}}  \right), 
\end{equation}
where $s=a,b$ labels the two spin polarization modes. Here $\hat{b}^{\dagger}_s, \hat{b}_s$ are the creation/annihilation operators for a particle with polarization s while $\hat{d}^{\dagger}_s, \hat{d}_s$ are  the creation/annihilation operators for an anti-particle;  $u_s, v_s$ are the associated Dirac spinors. In the continuum Dirac field, the analog of the lattice site occupation operator is the local charge density $\hat{j}^0(\vec{x}) = \hat{\psi}^{\dagger}(\vec{x})  \hat{\psi}(\vec{x})$. Using projectors $P_s$ onto the two independent spin modes,  we define the corresponding local mode density operator as 
\begin{equation}
\hat{n}_s(\vec{x}) =  \hat{\psi}^{\dagger}(\vec{x}) P_s \hat{\psi}(\vec{x}), \qquad \sum_s\hat{n}_s (\vec{x})= \hat{j}^0(\vec{x})
\end{equation}
Let $\{|n_s\rangle \}$  denote the field basis  on an equal-time hypersurface  \cite{Hatfield:1992rz}, which diagonalizes the local mode density operator $\hat{n}_s(\vec{x})$ as: 
\begin{equation}
\hat{n}_s(\vec{x}) |n_s\rangle = n_s(\vec{x})|n_s\rangle\, .
\end{equation}
Note that the local density $n_s(\vec{x})$ is not point-wise binary occupation variable as in the lattice case. 
The orientation state analogous  to the one for the lattice fermionic field QRF in \cref{eq:matter_qrf} can be written formally as
\begin{equation}\label{eq:fermionfieldQRF}
|\lambda_s\rangle = \int \mathcal{D} [n_s]\ e^{-i \int \D^3\vec{x} \lambda(\vec{x}) \hat{n}_s(\vec{x})} |n_s\rangle,\quad s= a,b,
\end{equation}
where $\lambda(\vec{x}) \in [0,2\pi)$ is the local $\operatorname{U}(1)$ phase. The local gauge transformation acts with the same phase on both modes, and it is easy to check that $\operatorname{U}(1)$ acts transitively and freely on these orientation states through the transformation $e^{-\frac{i}{g}\int \D \vec{x} \ \lambda(\vec{x})\hat{n}_s(\vec{x})}$. Accordingly, the two independent spin modes $s= a,b$ may be regarded as two QRFs which, together with the electromagnetic field and the Gauss constraint, form the perspective-neutral description. The properties of these non-ideal frames, the corresponding reduced observables in the continuum field-theoretic setting, as well as perspectival transformations between these two non-ideal QRFs are not the focus of the present paper, and we will leave them to future investigation. 

To identify the continuum analogue of the fermionic correctable error set, it is useful to use the taste-basis description reviewed in \cref{app:staggered_continuum}. The fermionic correctable errors found on the lattice correspond to local flips of the staggered occupation number accompanied by a phase. In \cref{app:staggered_continuum}, we show that for each coarse lattice site $w = 2v' \in\mathcal{V}$, this leads to the family
\begin{equation}\label{eq:errormatrix}
    e^{i\theta_w}{\Psi}_w + e^{-i\theta_w}{\Psi}_w^\dagger,\quad \theta_w \in [0, 2\pi)
\end{equation}
where ${\Psi}_w$ is Dirac matrix field whose columns encode two Dirac fermions. The operator in \cref{eq:errormatrix} is essentially a combination of the operators $A_v(\alpha_v)$ (from \cref{prop:correctable_matter_errors}) on the individual sites $v$ on the cube around $w$ with appropriate angles $\alpha_v$.    In the continuum limit, this yields  $e^{i\theta(\vec{x})}\Psi(\vec{x})+ e^{-i\theta(\vec{x})}\Psi(\vec{x})^\dagger$, where $\Psi = (\psi_1, \psi_2)$ contains the two independent Dirac fields. This is a local Hermitian linear combination of Dirac field annihilation and creation operators, analogous to (a phase-rotated) Majorana Fermion operator \cite{Weinberg:1995mt}.  

Since the field operators $\hat{\Psi} (\vec{x})$ are operator-valued distributions, the local fermionic error operators are more properly defined in the smeared form \cite{Fewster:2019ixc}. 
More precisely, we choose a smooth compactly supported test functions $f_i(\vec{x})$ for each small spatial region $\sigma_i$.
This is also the operationally relevant notion of locality, as any realistic disturbance or measurement has finite spatial resolution. The corresponding local error operator associated with  $\sigma_i$ is then
\begin{equation}\label{eq:fermionfielderror}
\hat{F}_{\sigma_i} := \int_{\sigma_i} \D^3\vec{x} f_i(\vec{x}) \left(e^{i\theta_i}\hat{\Psi}(\vec{x})+ e^{-i\theta_i}\hat{\Psi}^\dagger(\vec{x})\right).
\end{equation}
The smeared operator $\hat{F}_{\sigma_i}$ therefore describes a local fermionic error supported in the small region $\sigma_i$, while finite products of such operators over different small regions $\sigma_i$ provide the continuum counterpart of multi-site lattice errors:
\begin{equation}
    \left\{\prod\nolimits_{i\in I} \hat{F}_{\sigma_i} \ \middle| \ \forall i: \theta_i\in [0, 2\pi) \right\}.
\end{equation}
While the smearing of field operators addresses a first step towards defining local error operators  in the continuum, a full QECC treatment of continuum QED raises several important challenges. Such treatment would require a notion of subsystem structure that no longer relies on the tensor-product decomposition of the Kinematical Hilbert space available on the lattice \footnote{More precisely, in  quantum field theory one does not in general have a natural tensor-product factorisation into strictly local subsystems. Under the \emph{split property}, for two suitably nested spacetime regions one can insert an intermediate type I factor $\mathcal{N}$ between the corresponding  von Neumann algebras of observables  \cite{fewster2016splitpropertyquantumfield, Doplicher:1984zz}. This yields an approximate subsystem decomposition, with a separating small ``buffer zone'' between the regions.}. In the continuum field theory, locality  is instead encoded by operator algebras associated with spacetime subregions \cite{Haag:1992hx,fewster2016splitpropertyquantumfield}.  These algebras are typically of type III \cite{Fredenhagen:1984dc} and do not give rise to local Hilbert-space tensor factors in the same way as lattice systems. In addition, one would need a field-theoretic formulation of syndrome measurements through local measurement schemes in quantum field theory \cite{Fewster:2018qbm}, together with a corresponding notion of recovery map, possibly formulated directly in the operator-algebraic level.

\section{Discussion and Outlook}
\label{sec:conclusion}
In this work, we developed a quantum error-correction construction of lattice quantum electrodynamics using the perspective-neutral framework of quantum reference frames. Our aim was not only to explore the structural analogy between gauge symmetry and quantum error correction \cite{sem_proj, CCHM_2024} (namely, that the physical information in a gauge theory can be viewed as logical information, redundantly encoded in a larger space through gauge constraints) but also to make precise how gauge symmetry defines this encoding of physical information, and how QRFs can be used to identify correctable errors and construct recovery operations in lattice gauge theory. As a result, lattice QED emerges as a quantum error-correcting code beyond the stabilizer setting.

At a general level, \cref{thm:correctable_gauge-fixing} extends a result of \cite{CCHM_2024} which identifies orthogonal gauge-fixing operators as correctable errors to a broader class of gauge systems with compact gauge groups, including cases where such operators arise from non-ideal QRFs.
For Abelian gauge groups, this result can be further refined using group-theoretical methods. In particular, \cref{prop:A_q_from_gf} relates these gauge-fixing errors to explicit recovery procedures via charge-sector measurements for ideal QRFs. \Cref{prop:non_ideal_A_r} extends this to the non-ideal setting: When the QRF contains an orthonormal family of orientation states associated with a subgroup, it determines not only a correctable set, but also the corresponding recovery map via coarse-grained charge-sector measurements.

Applied to lattice QED, this yields two distinct  QECC structures:
the quantum rotor code for pure gauge sector, and  a hybrid rotor-qubit code when we include fermionic matter.
For both of these codes, we constructed QRFs and obtained the reduced description relative to the chosen frames, thereby making explicit how the physical information is represented in the corresponding encoding.
In the pure-gauge sector, the spanning tree QRFs provide ideal frames and lead to correctable families of electric-flux errors. In the presence of  fermions, we obtained a non-ideal QRF from the fermionic matter field which gave rise to correctable local occupation-flip errors with a relative phases. In both cases, these correctable errors arose as linear combinations of orthogonal gauge-fixing operators via \cref{prop:A_q_from_gf} and \cref{prop:non_ideal_A_r}.

The role of QRFs in this QECC construction can be viewed as a way of resolving the degeneracy of the error syndrome.
In general, an error in a gauge system may violate the gauge constraints, but this  violation alone is insufficient to determine the error uniquely; as in standard QEC, the error syndrome  is generically degenerate.
For Abelian gauge groups, an ideal QRF removes this ambiguity by restricting to errors supported on the chosen frame. Thereby it identifies a class of errors for which the syndrome, obtained from charge-sector measurements, becomes sufficient for recovery. In lattice QED,
this is especially transparent  for non-local errors such as open Wilson lines in \cref{eq:WilsonlineError}: Different lines with the same endpoints map to the same charge sector, while a spanning-tree QRF singles out a unique line connecting any two points.
For the non-ideal fermionic field QRF, the local occupation-flip errors in \cref{sec:fermionError} map to coarse-grained charge sectors. 
After a coarse-grained measurement identifies the affected sites, the remaining degeneracy is removed by fixing the relative phase between annihilation and creation operators,  specified by a choice of orthogonal orientation states of the non-ideal QRF. 
In this sense, QRFs make explicit how gauge redundancy can be turned into an error-correcting structure. 

Beyond the QRF-related correctable error sets, charge-sector measurements identify further correctable families in lattice QED (\cref{prop:correctable_U_errors} and \cref{thm:correctable_errors_full_QED}), and thereby connected our work to previous results for lattice QED with a truncated gauge group in \cite{Rajput2023, spagnoli2024}. More broadly, recovery protocols based on charge-sector measurements provide a concrete operational procedure of error correction from gauge symmetry, with direct relevance for fault-tolerant quantum simulations of lattice gauge theories \cite{Stryker_2019, GL_2023, Rajput2023, spagnoli2024, CLLL_2024, yao_2025, pato_2026}.

Several interesting directions and questions arise from our construction and results:

\begin{itemize}
    \item \textbf{Non-Abelian theories:} An important next step is to extend the present analysis beyond Abelian theories. In our construction, the Abelian structure of lattice QED was crucial for relating gauge-fixing errors to recovery via charge-sector measurements in \cref{prop:A_q_from_gf,prop:non_ideal_A_r}. Generalizing this connection to non-Abelian gauge groups would further tie QRFs to recovery protocols by charge-sector measurements for a broad range of gauge systems. Determining the error correction properties of non-Abelian gauge theories is also of practical interest \cite{yao_2025}.
    \item \textbf{Non-local QRFs for lattice QED:} The QRFs we found in \cref{sec:QRFs_in_LQED} are local with respect to the tensor product structure of the kinematical space. Consequently, the correctable error sets found through their gauge-fixing operators consist of operators supported on these local QRFs. Constructing non-local refactorizations into frame and system would lead to correctable sets of errors potentially spread over all the lattice. The 1-to-1 relation between non-local ideal QRFs and correctable error sets for stabilizer codes \cite{CCHM_2024} suggests that also the results in \cref{prop:correctable_U_errors} and \cref{thm:correctable_errors_full_QED} may be connected to such non-local QRFs.
    \item \textbf{Continuum gauge field theory:} \Cref{sec:continuum} has already suggested the corresponding continuum counterparts of QRFs and correctable errors. A full QECC treatment of continuum QED, however,  will require genuinely new tools. In particular, since the group of gauge transformation of continuum QED is no longer locally compact, the construction in \cite{delahamette_2021} is not directly applicable and therefore requires a new  field-theoretic formulation of quantum reference frames.  Moreover, one must formulate the subsystem structure without relying on the tensor-product decomposition available on the lattice, but instead work directly with local operator algebras; correspondingly, syndrome extraction and recovery would be formulated directly in the operator-algebraic setting.  A better understanding of these steps will also open the way to analogous constructions for linearized gravity, whose four first-class constraints generate linearized diffeomorphism gauge symmetry. 
    \item \textbf{Further structural correspondence between QECC and gauge theories:} Our work suggests that gauge theories may support a richer notion of information robustness than has been uncovered so far.  One particularly intriguing question is whether intrinsic code properties, such as the code distance,  admit direct physical counterparts in gauge theory. In QECCs, the code distance characterizes the smallest size of an operator that can act non-trivially on the code subspace. In an earlier project \cite{sem_proj}, we identified a type of QRF for stabilizer codes that is recoverable from erasure, suggesting that operational notions of recoverability may already reflect structural features of the underlying gauge description. One interesting application of this could be in the correspondence between QEC and AdS/CFT \cite{Almheiri_2015, HaPPY_2015, PP_2017, Harlow_2017, Kibe2022}, where the concept of erasure of subsystems is central.
    Generally, exploring gauge systems with the tools of QEC could sharpen the meaning of gauge symmetry as an information-theoretic structure.
\end{itemize}

\section*{Acknowledgments}
We would especially thank Joe Renes for detailed discussion and insightful feedback. We would like to acknowledge Renato Renner, Ognyan Oreshkov,  Aidan Chatwin-Davies, Philipp Höhn, Anne-Catherine de la Hamette, Sebastian Garmier, Bei Zeng and Muxin Han for discussions.  We thank  Quantum Information Theory (QIT) group at ETH Zürich for insightful questions during E.R. Master thesis presentation. 
E.R. and C.F. acknowledge the support of the ETH Zürich Quantum Center and NCCR SwissMAP.
L.Q.C is funded by Österreichischer Wissenschaftsfonds FWF ESPRIT fellowship with grant DOI: 10.55776/ESP390. L.Q.C thanks the Quantum Information Theory (QIT) group at ETH Zürich for  hospitality during her visit, with where part of this work was carried out. She acknowledges support from the NCCR SwissMAP for the ETH visits.
This work was made possible through the support of the WithOut SpaceTime (WOST) project, supported by Grant ID\#~63683 from the John Templeton Foundation (JTF). The opinions expressed in this work are those of the author(s) and do not necessarily reflect the views of the John Templeton Foundation. 
This work also contributes to the European Union COST Action CA23115: \emph{Relativistic Quantum Information}, funded  by COST (European Cooperation in Science and Technology).

\sloppy
\printbibliography
\appendix
\crefalias{section}{appendix}
\crefalias{subsection}{appendix}
\newpage 

\section{Details on Perspective-Neutral Quantum Reference Frames}
\label{app:QRFs}

\paragraph{Perspective-neutral QRFs.}

For completeness, we will restate the definition of perspective-neutral QRFs as presented in \cite{delahamette_2021}. Recall from the main text that the starting point of the formalism is a system $\HS_\kin = \HS_R\otimes\HS_S$ which carries a representation $U_R(g)\otimes U_S(g)$ of a gauge group $G$.

\begin{definition}[Quantum reference frames]
    The subsystem $R$ is a QRF for the unimodular gauge group $G$ if it admits a set of \textit{orientation states} $\{\ket{\phi(g)}_R  \, | \,  g\in G\}$ on which $G$ acts transitively, 
    \begin{equation}
        U_R(g')\ket{\phi(g)}_R = \ket{\phi(g'g)}_R,
    \end{equation} and which can resolve identity,
    \begin{equation}
        \int_G \D g \ \ket{\phi(g)}\bra{\phi(g)}_R = cI_R
    \end{equation}
    for $c>0$.
\end{definition}

\paragraph{$\HS_\phys$ as a perspective-neutral space.}

Why is $\HS_\phys$ the central playground of the perspective-neutral construction? Demanding $G$-invariance of physical states in $\HS_\phys\subset\HS_{\text{kin}}$ means that these states are associated with full $G$-orbits of kinematical states. This can be seen from the projector $\Pi_\phys$,
\begin{equation}
    \ket{\psi}_\phys = \Pi_\phys\ket{\psi}_\kin = \frac{1}{|G|}\int_G \D g \ U_{RS}(g)\ket{\psi}_\kin.
\end{equation}
The key aspect of the construction is to associate the states in this orbit with the same physical state with respect to a different frame orientation. Additionally, the coherent group average removes any external frame information and what survives in $\HS_\phys$ are exactly the relational degrees of freedom. Due to these reasons, $\HS_\phys$ contains all the information about the physical states of the system with respect to any internal frame in any orientation. In other words, $\HS_\phys$ is a {\it perspective-neutral space}, and the physical states are often referred to as {\it relational}.

We can extract a specific perspective from such a perspective-neutral state by jumping into a QRF in a certain orientation. On a technical level, this requires reduction maps which act as isometries from $\HS_\phys$ into $\HS_S$ and allow recovery of the perspective-neutral states from the fixed frame-orientation ones. 

There are multiple constructions of such reduction maps. The one relevant in this work, as introduced in the main text, amounts to fixing the state of the frame to a certain orientation as a generalization of the Page-Wootters mechanism \cite{PageWooters_1983}. 

\begin{definition}[Reduction map]
\label{def:PW_reduction}
    The Page-Wootters type reduction map $\mathcal{R}_R^g:\HS_{\text{kin}}\rightarrow\HS_S$ is defined as
    \begin{equation}
        \mathcal{R}_R^g = \sqrt{N}\left(\bra{\phi(g)}_R \otimes I_S\right) \Pi_\phys,
    \end{equation}
    with normalization constant $N = \frac{|G|}{c}$.
\end{definition}
As a map restricted to $\HS_\phys$, $\mathcal{R}_R^g$ can be written without the projector $\Pi_\code$. The adjoint on $\HS_\phys$ contains the projector and is given by
\begin{equation}
    (\mathcal{R}_R^g)^\dagger = \sqrt{N}\Pi_\phys\left(\ket{\phi(g)}_R \otimes I_S\right).
\end{equation}

The procedure essentially amounts to fixing the state of the frame to a certain orientation. Intuitively, given a kinematical state $\ket{\psi}_\kin$, the physical state
\begin{equation}
    \Pi_\phys \ket{\psi}_\kin
\end{equation}
is a coherent superposition of all the states in the $G$-orbit of $\ket{\psi}_\kin$ associated with different frame orientations. Fixing the orientation of the frame thus singles out the corresponding state of the system. For an illustration, see \cref{fig:gauge-fixing_map}.

\begin{lemma}{\cite[Lemma 8]{delahamette_2021}}
\label{lemma:isometry}
    The reduction map $\mathcal{R}_R^g$ of a QRF $R$ is an isometry on $\HS_\phys$, i.e.,
    \begin{equation}
        (\mathcal{R}_R^g)^\dagger \mathcal{R}_R^g = \Pi_\phys.
    \end{equation}
\end{lemma}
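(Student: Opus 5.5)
Substituting the definitions, the reduction map is $\mathcal{R}_R^g = \sqrt{N}(\bra{\phi(g)}_R\otimes I_S)\Pi_\phys$. The adjoint of this map as an operator on $\HS_\kin$ is $\sqrt{N}\,\Pi_\phys(\ket{\phi(g)}_R\otimes I_S)$, since $\Pi_\phys$ is an orthogonal projector. Hence
\begin{equation}
    (\mathcal{R}_R^g)^\dagger\mathcal{R}_R^g = N\,\Pi_\phys\big(\ket{\phi(g)}\bra{\phi(g)}_R\otimes I_S\big)\Pi_\phys,
\end{equation}
and the task reduces to showing that this sandwich equals $\Pi_\phys$.

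The key step is to exploit the invariance of $\Pi_\phys$. For every $h\in G$ we have $\Pi_\phys U_{RS}(h) = U_{RS}(h)\Pi_\phys = \Pi_\phys$, because $G$ is a group and the normalized Haar measure is invariant. I will insert $U_{RS}(h)^\dagger$ on the left and $U_{RS}(h)$ on the right. Covariance of the orientation states, $U_R(h)\ket{\phi(g)}_R=\ket{\phi(hg)}_R$, then gives
\begin{equation}
    \Pi_\phys\big(\ket{\phi(g)}\bra{\phi(g)}_R\otimes I_S\big)\Pi_\phys = \Pi_\phys\big(\ket{\phi(hg)}\bra{\phi(hg)}_R\otimes I_S\big)\Pi_\phys
\end{equation}
for every $h$, because the $U_S(h)$ factors cancel as $U_S(h)^\dagger U_S(h)=I_S$. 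The expression is therefore independent of the orientation.

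Next I average this identity over $h$ with $\frac{1}{|G|}\int_G \D h$. By right invariance of the Haar measure (unimodularity, automatic for compact $G$), the substitution $h\mapsto hg^{-1}$ turns the middle factor into $\frac{1}{|G|}\int_G\D h\,\ket{\phi(h)}\bra{\phi(h)}_R\otimes I_S$. The resolution of the identity, \cref{eq:resolution_of_I}, evaluates this to $\frac{c}{|G|}I_{RS}=N^{-1}I_{RS}$. Multiplying by $N$ and using $\Pi_\phys^2=\Pi_\phys$ yields $(\mathcal{R}_R^g)^\dagger\mathcal{R}_R^g=\Pi_\phys$.

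The main obstacle is justifying that the integral may be moved inside the sandwich $\Pi_\phys(\cdot)\Pi_\phys$. In continuous or infinite-dimensional settings the orientation states may be improper, as for the rotor position states. I would handle this by working with matrix elements between vectors of $\HS_\kin$, where the orientation integral converges weakly to $cI_R$ by assumption. For compact $G$ the normalized Haar measure is finite, so Fubini applies to the resulting scalar integrals.
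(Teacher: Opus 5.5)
Your proposal is correct and follows essentially the same route as the paper's proof: use $\Pi_\phys U_{RS}(h)=U_{RS}(h)\Pi_\phys=\Pi_\phys$ to conjugate the frame projector by a group element, average over $h$, and apply the resolution of the identity. The paper inserts $U_{RS}(hg^{-1})$ directly, so it needs no change of variables, and it does not discuss the weak-convergence issue you raise. One harmless slip: inserting $U_{RS}(h)^\dagger$ on the left and $U_{RS}(h)$ on the right actually produces $\ket{\phi(h^{-1}g)}\bra{\phi(h^{-1}g)}_R$ rather than $\ket{\phi(hg)}\bra{\phi(hg)}_R$, which does not matter because the identity holds for every $h$.
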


\begin{proof}
    We will be making use of the left- and right-invariance of the measure which implies
    \begin{align}
        U_{RS}(h)\Pi_\phys &= \frac{1}{|G|}\int_G \D g \ U_{RS}(hg) \\
        &= \frac{1}{|G|}\int_G \D h^{-1}g \ U_{RS}(g) \\
        &= \Pi_\phys
        \label{eq:left_right_invariance_projector}
    \end{align}
    and similarly $\Pi_\phys U_{RS}(h) = \Pi_\phys$.
    The statement can now be proved by an explicit calculation:
    \begin{align}
        (\mathcal{R}_R^g)^\dagger \mathcal{R}_R^g &= N \Pi_\phys(\ket{\phi(g)}\bra{\phi(g)}_R\otimes I_S )\Pi_\phys \\
        &= N \Pi_\phys U_{RS}(hg^{-1})(\ket{\phi(g)}\bra{\phi(g)}_R\otimes I_S )U_{RS}^\dagger(hg^{-1})\Pi_\phys \\
        &= N \frac{1}{|G|}\int_G \D h \left(\Pi_\phys U_{RS}(hg^{-1})(\ket{\phi(g)}\bra{\phi(g)}_R\otimes I_S )U_{RS}^\dagger(hg^{-1})\Pi_\phys \right) \\
        &= \frac{1}{c}\int_G \D h \left(\Pi_\phys \ket{\phi(h)}\bra{\phi(h)}_R\otimes I_S )\Pi_\phys \right) \\
        &= \Pi_\phys (I_R\otimes I_S )\Pi_\phys \\
        &= \Pi_\phys.
    \end{align}
\end{proof}

\begin{figure} [h]
    \centering
    \includegraphics[width=0.9\linewidth]{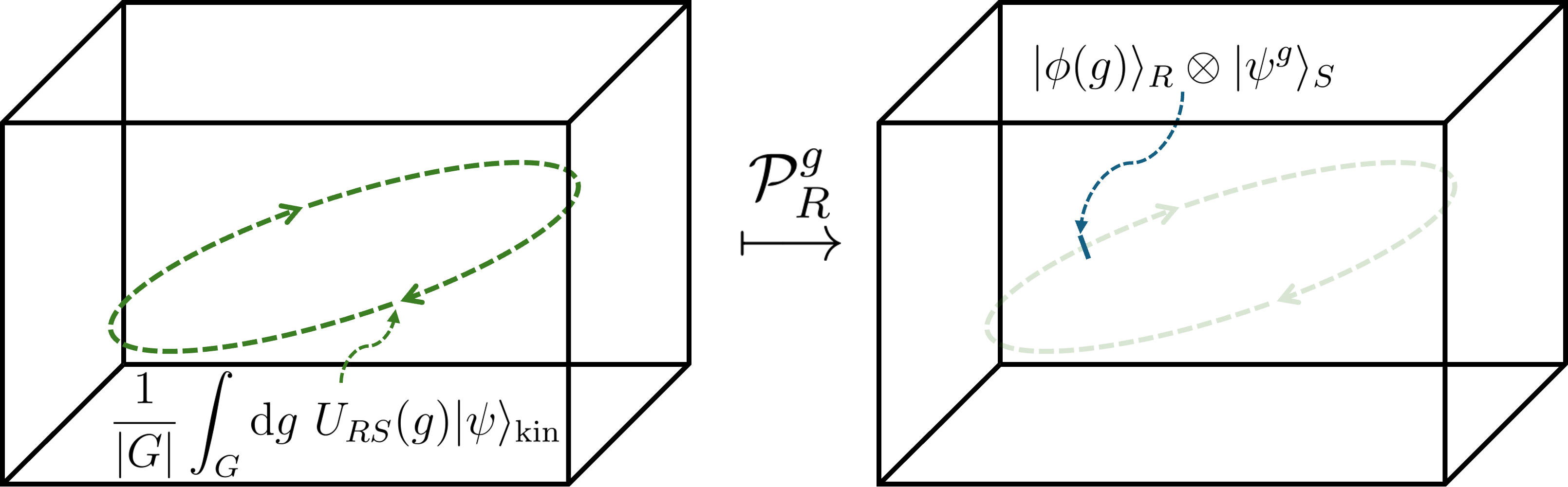}
    \caption{Physical states correspond to a coherent average over the $G$-orbit of kinematical states $|\psi\rangle_\kin$ (left). In the case of an ideal QRF, this orbit consists of states $|\phi(g)\rangle_R\otimes|\psi^g\rangle_S$ for all $g\in G$ and the gauge-fixing operator $\mathcal{P}^g_R = \sqrt{N}(|{\phi(g)}\rangle\langle{\phi(g)}|_R\otimes I_S)$ singles out the one corresponding to the orientation $g\in G$ (right). Accordingly, the reduction map extracts the state $|\psi^g\rangle_S$.}
    \label{fig:gauge-fixing_map}
\end{figure}

\section{Details on Quantum Error Correction}
\label{app:QEC}

\subsection{Quantum Error Correction Codes}

The goal of quantum error correction is to protect quantum information against possible errors, through redundantly encoding \textit{logical} quantum information in a subspace of a larger Hilbert space, called the \textit{code subspace}.
Such an encoding is constructed so that for a specified noise channel which takes code states outside the code subspace, there exists an operation to recover the logical information.
Generally, we define a quantum error correction code as follows:

\begin{definition}[Quantum error correction code]
    A quantum error correction code (QECC) is characterized by an \textit{encoding isometry} $T:\HS_{\log} \to\HS_{\text{physical}}$ from the logical space $\HS_{\log}$ to the physical space $\HS_{\text{physical}}$. Its image is the \textit{code subspace}  $\HS_\code \defeq T(\HS_{\log}) \subseteq \HS_{\phys}$, and the associated orthogonal projector is denoted as $\Pi_\code \defeq T T^{\dagger} : \HS_{\text{physical}} \mapsto \HS_\code$.
\end{definition}

Commonly, QECCs encode the logical information of a number $k$ of logical qubits into $n$ physical qubits which represent the physical device the data is stored on. These codes, for which $\HS_{\log}\cong (\mathbb{C}^2)^{\otimes k}$ and $\HS_{\physical}\cong (\mathbb{C}^2)^{\otimes n}$, are referred to as $[[n,k]]$-QECCs.

\paragraph{Knill-Laflamme conditions.}
Given a noise channel\footnote{One could also include measurements into the error model, in which case the noise operation does not necessarily preserve the trace. The equality condition can the be relaxed to proportionality.} $\mathcal{N}$, the code protects against the corresponding errors if there exists a recovery channel $\mathcal{R}: S(\HS_\physical)\to S(\HS_\physical)$ such that 
\begin{equation}
    \mathcal{R}\circ\mathcal{N}(\rho)=\rho \quad \forall\rho\in S(\HS_\code).
\end{equation}
In this case, the logical state can be perfectly recovered after the action of the noise.
Knill and Laflamme \cite{KnillLaflamme97} formulated the following necessary and sufficient conditions for error correction.
\begin{theorem}[Knill-Laflamme conditions]
\label{thm:KnillLaflamme}
    For a given a noise channel $\mathcal{N}$ with Kraus operators $\mathcal{E}=\{E_i\}_i$, i.e., $\mathcal{N}(\rho) = \sum_iE_i\rho E_i^\dagger$ for $\rho\in S(\HS_{\text{physical}})$,
    there exists a recovery channel $\mathcal{R}$ if and only if
    \begin{equation}
        \Pi_\code E_i^\dagger E_j \Pi_\code = c_{ij}\Pi_\code
    \end{equation}
    where $c_{ij}$ are entries in a hermitian matrix. We say that $\mathcal{E}$ is a correctable set of errors on $\HS_\code$.
\end{theorem}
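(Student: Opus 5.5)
The statement is the Knill--Laflamme theorem, a standard result. I would reproduce the usual two-direction argument, working in a finite- or countably-indexed error set first. The continuous case is handled formally, as the paper does, by invoking the extension in \cite{BKK_2007}.

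\emph{Necessity.} Suppose a channel $\mathcal{R}$ with Kraus operators $\{R_k\}_k$ satisfies $\mathcal{R}\circ\mathcal{N}(\rho)=\rho$ for all $\rho\in S(\HS_\code)$. Then the composite channel has Kraus operators $R_kE_i$, and it acts as the identity on code states. By the unitary freedom in Kraus representations, applied to the identity channel restricted to $\HS_\code$, each composite operator satisfies $R_kE_i\Pi_\code=\lambda_{ki}\Pi_\code$ for scalars $\lambda_{ki}$. To justify this, note that $\sum_{k,i} R_kE_i\ket{\psi}\bra{\psi}E_i^\dagger R_k^\dagger=\ket{\psi}\bra{\psi}$ forces each $R_kE_i\ket{\psi}\propto\ket{\psi}$, since a pure state is extremal. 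Linearity in $\ket{\psi}$ then makes the proportionality constant state-independent. Using $\sum_k R_k^\dagger R_k\le I$, trace preservation gives equality on the relevant range, and we compute
\[
\Pi_\code E_i^\dagger E_j\Pi_\code=\sum_k \Pi_\code E_i^\dagger R_k^\dagger R_kE_j\Pi_\code=\sum_k\overline{\lambda_{ki}}\lambda_{kj}\Pi_\code .
\]
This defines $c_{ij}=\sum_k\overline{\lambda_{ki}}\lambda_{kj}$, which is manifestly Hermitian.

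\emph{Sufficiency.} Assume $\Pi_\code E_i^\dagger E_j\Pi_\code=c_{ij}\Pi_\code$ with $c=c^\dagger$. The steps are as follows.
\begin{enumerate}
\item Diagonalize $c=u\,d\,u^\dagger$ and define new errors $F_k=\sum_i u_{ik}E_i$. These generate the same channel $\mathcal{N}$ by unitary freedom, and they satisfy $\Pi_\code F_k^\dagger F_l\Pi_\code=d_{kk}\delta_{kl}\Pi_\code$.
\item Apply the polar decomposition $F_k\Pi_\code=\sqrt{d_{kk}}\,V_k\Pi_\code$ with $V_k$ a partial isometry, for each $k$ with $d_{kk}>0$. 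The spaces $V_k\HS_\code$ are mutually orthogonal, and the projectors $P_k=V_k\Pi_\code V_k^\dagger$ onto them are orthogonal as well.
\item Define the recovery with Kraus operators $\Pi_\code V_k^\dagger P_k$, completed by a projector onto the orthocomplement of $\bigoplus_k V_k\HS_\code$, mapped back into the code space. This makes the recovery trace preserving.
\item Check directly that $\Pi_\code V_k^\dagger P_kF_l\ket{\psi}=\delta_{kl}\sqrt{d_{kk}}\ket{\psi}$. Summing over $k,l$ gives $\mathcal{R}\circ\mathcal{N}(\rho)=\sum_k d_{kk}\rho=\rho$, because $\sum_k d_{kk}=\operatorname{tr}c=1$ by trace preservation of $\mathcal{N}$ on the code.
\end{enumerate}

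\emph{Main obstacle.} The delicate step is the necessity direction's claim that $\lambda_{ki}$ is independent of $\ket{\psi}$. I would prove it by the standard trick of applying the pure-state argument to $\ket{\psi}+\ket{\phi}$ and comparing coefficients. In infinite dimensions or with continuous indices, the diagonalization in step 1 needs the spectral theorem for a possibly unbounded Hermitian kernel. There I would simply defer to \cite{BKK_2007} rather than redo the analysis.
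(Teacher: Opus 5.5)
Your proposal is correct. It is the standard Knill--Laflamme argument as in \cite{Nielsen_Chuang_2010}: necessity via $R_kE_i\Pi_\code=\lambda_{ki}\Pi_\code$, and sufficiency via diagonalizing $c$, polar decomposition and orthogonal syndrome projectors. The paper itself gives no proof; it simply cites \cite{KnillLaflamme97} and defers continuous error indices to \cite{BKK_2007}, exactly as you do.

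One small cleanup in step 3: do not try to map the orthocomplement of $\bigoplus_k V_k\HS_\code$ isometrically into $\HS_\code$, since that orthocomplement is typically much larger than the code space. Instead, add $I-\sum_k P_k$ itself as the completing Kraus operator (or rank-one operators onto a fixed code state). This is harmless because it annihilates $\mathcal{N}(\rho)$.
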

By linearity, if these conditions are met for a set of Kraus operators $\mathcal{E}=\{E_i\}_i$, then the recovery operation $\mathcal{R}$ corrects any noise channel whose associated Kraus operators are linear combinations of $E_i$. 

\paragraph{The Pauli group.}
The Pauli operators on qubits are given by the matrices
\begin{equation}
    X = \begin{bmatrix}
        0 & 1 \\
        1 & 0
    \end{bmatrix}, \quad Z = \begin{bmatrix}
        1 & 0 \\
        0 & -1
    \end{bmatrix}, \quad Y = iXZ = \begin{bmatrix}
        0 & -i \\
        i & 0
    \end{bmatrix}.
\end{equation}
These operators are particularly significant for QEC because $\{I, X, Y, Z\}$ spans the set of all $\mathbb{C}^{2\times 2}$-matrices as a basis over $\mathbb{C}$. This extends to $\{I, X, Y, Z\}^{\otimes n}$, which spans all $n$-qubit matrices. 

Under multiplication, the $n$-qubit Pauli operators yield the {\it $n$-qubit Pauli group}
\begin{equation}
    \mathcal{P}_n = \left\{\alpha P_1\otimes P_2\otimes\dots\otimes P_n \, | \,  \alpha \in\{\pm1, \pm i\}, P_i \in \{X, Y, Z\}\right\}.
\end{equation}
The {\it weight} of a Pauli operator $P\in\mathcal{P}_n$ is $w(P) = |\supp(P)|$, where $\supp(P)$ are the non-identity tensor factors in $P$. 

\paragraph{Code distance.}
Since all errors can be decomposed into their Pauli contributions, if a code can correct the set of all Pauli operators up to weight $t$
\begin{equation}
    \mathcal{E} = \{P\in\{I, X, Y, Z\}^{\otimes n} \, | \, w(P) \leq t\},
\end{equation}
it can correct {\it any} error acting on $t$ or fewer qubits.

An important parameter of a QECC is the \textit{code distance} $d$, defined as the minimum weight of any Pauli operator $P$ such that
\begin{equation}
\label{eq:distance}
    \Pi_\code P \Pi_\code \not\propto \Pi_\code.
\end{equation}
The operator $P$ irrevocably changes a code state, i.e., there exist orthogonal states $\ket{\psi}, \ket{\phi}\in\HS_\code$ such that $\bra{\phi}P\ket{\psi}\neq0$.

Usually, $d$ is also included into the code parameters: A $[[n, k, d]]$-code encodes $k$ logical into $n$ physical qubits with a distance of $d$. Such a code can correct $t$ errors if $t\leq \frac{d-1}{2}$ \cite[Thm. 1]{KnillLaflamme00}.

\subsection{The Stabilizer Formalism}
An important family of QECCs are stabilizer codes \cite{gottesman1997stabilizer}. These codes define the code subspace by a symmetry group, the so-called stabilizer group. Owing to its algebraic structure, logical operations as well as error detection and correction can be described in a convenient way.

\begin{definition}[Stabilizer code]
    \label{def:stabilizer_code}
    A stabilizer code is a non-trivial subspace $\HS_\code \subset \HS_\physical$ which admits a Pauli subgroup $\mathcal{S}\subset \mathcal{P}_n$, called the \textit{stabilizer group}, such that
    \begin{equation}
       \ket{\psi}\in\HS_\code \; \iff \; S\ket{\psi} = \ket{\psi} \quad \forall S \in \mathcal{S}.
    \end{equation}
\end{definition}

The stabilizer group is necessarily Abelian and does not contain $-I$. Conversely, these two facts are sufficient to define a code space $\HS_\code$ as in the definition above. 

For a $[[n,k,d]]$-stabilizer code, the stabilizer group has $n-k$ independent generators,
\begin{equation}
    \mathcal{S} = \langle S_1, S_2, \dots, S_{n-k} \rangle
\end{equation}

\paragraph{Logical operators of a stabilizer code.} The logical operators of a QECC map between the code states while leaving the code space invariant. In the case of stabilizer codes, these logical operators are precisely the ones which commute with any stabilizer $S \in \mathcal{S}$. They can be decomposed into their Pauli components, which motivates the following definition.

\begin{definition}[Logical Pauli operators]
\label{def:logical_paulis}
    The \textit{logical Pauli operators} of a stabilizer code are
    \begin{equation}
        \Log(\HS_\code) = N(\mathcal{S})/\mathcal{S}.
    \end{equation}
    $N(\mathcal{S})$ denotes the normalizer of $\mathcal{S}$ in $\mathcal{P}_n$, i.e., $N(\mathcal{S}) = \{ P \in \mathcal{P}_n  \, | \,  P\mathcal{S} =\mathcal{S} P \}$.
    A logical operator $[P]\in\Log(\HS_\code)$ is thus an equivalence class of operators whose restriction onto the code space coincide.
\end{definition}
Since any two Pauli operators commute up to a sign, the normalizer coincides with the centralizer of $\mathcal{S}$, i.e., the subgroup of Pauli operators which commute with all $S\in\mathcal{S}$. 

For stabilizer codes, the code distance has a particularly simple form: it is the minimum weight of any (non-trivial) representative of a logical Pauli operator. That is,
\begin{equation}
    d = \min_{P\in N(\mathcal{S})\setminus\mathcal{S}} w(P).
\end{equation}

\paragraph{Syndrome measurement.} 
For stabilizer codes, error detection and correction can be described in a concise way by error syndrome measurements. Given a set of generators $S_i$ of $\mathcal{S}$, a Pauli error operator $E$ has the syndrome $s = (s_1, s_2, \dots, s_{n-k})\subset\{\pm 1\}^{n-k}$ if for $\ket{\psi}\in\HS_\code$
\begin{equation}
    S_i E \ket{\psi} = s_i E\ket{\psi}.
\end{equation}
Since any Pauli operators commute up to a sign, $s_i = \pm 1$. Non-trivial $s_i$ indicate that $E$ does not commute with the generator $S_i$. There are $2^{n-k}$ different syndromes, each corresponding to a different error subspace. A {\it syndrome measurement} consists of a measurement of each generator. 

A Pauli error set $\mathcal{E}$ is correctable if for each syndrome $s$ there exists an operator $E(s)$ such that applying $E(s)^\dagger$ to the error state after measuring $s$ recovers the original state. The condition is satisfied by $\mathcal{E}$ (which includes identity) if any $E_i, E_j \in \mathcal{E}$ either have different syndromes or $E_i^\dagger E_j \in \mathcal{S}$, in which case one can choose $E(s)$ as any $E_i\in\mathcal{E}$ which has the error syndrome $s$.
This protocol also corrects linear combinations of Pauli errors: The syndrome measurement collapses an error state $\sum_i\alpha_iE_i\ket{\psi}$ onto one of the $E_i\ket{\psi}$.

\subsection{Quantum Rotor Codes}
\label{sssec:rotor_codes}

So far, the systems we considered for error correction were all finite-dimensional. However, physical realizations of qubits (and qudits) on quantum platforms often involve infinitely many degrees of freedom of both the continuous and discrete variety. It is thus of practical interest to study error correction with infinite-dimensional systems. On one hand, one can consider bosonic encodings \cite{Terhal_2020} of qubits in subspaces of such systems, for example in a quantum oscillator \cite{GKP_2001}, in a molecule \cite{ACP_2020} or in a planar rotor (a $\operatorname{U}(1)$-system) \cite{RKSE_2010}. On the other hand, one can explore infinite-dimensional codes that encode infinite-dimensional systems (as opposed to finite ones). This idea is relevant in the context of covariant codes which are invariant under continuous symmetries \cite{Faist_2020, HNPS_2021}.

In this regard, Vuillot, Ciani and Terhal introduced a formalism of homological quantum rotor codes \cite{Vuillot_2024}. These codes generalize the construction of Calderbank-Shor-Steane (CSS) codes \cite{Calderbank_1996, Steane_1996}, a type of stabilizer codes, to quantum rotor systems.

\paragraph{Quantum rotors.}
The Hilbert space of a quantum planar rotor is $\HS_\rot = L^2(\operatorname{U}(1))$, the space of square-integrable functions on $\operatorname{U}(1) = \{e^{i\theta} \, | \,  \theta\in[0, 2\pi)\}$. The angular position $\theta$ of the rotor parametrizes the formal basis $\{\ket{e^{i\theta}}  \, | \,  \theta \in [0, 2\pi) \}$, normalized to $\braket{e^{i\theta}}{e^{i\phi}} = \delta(\theta-\phi)$. One can also express functions in $L^2(\operatorname{U}(1))$ with their Fourier coefficients in the Fourier basis $\{\ket{k} \}_{k\in\mathbb{Z}}$, where 
\begin{equation}
    \ket{k} = \int_0^{2\pi} \frac{\D\theta}{\sqrt{2\pi}}e^{ik\theta}\ket{e^{i\theta}}.
\end{equation} 

On $\HS_\rot$, the angular operator $\Theta$ and its conjugate momentum $\epsilon$ are of the form
\begin{equation}
    \Theta = \int_0^{2\pi} \D\theta \ \theta \ket{e^{i\theta}}\bra{e^{i\theta}}, \quad \epsilon = \sum_{k\in\mathbb{Z}} k \ket{k}\bra{k}.
\end{equation}
To avoid domain issues due to $\Theta$ not preserving periodicity, we can work with its exponential version $U = e^{i\Theta}$. This operator shifts angular momentum,
\begin{equation}
    U \ket{k} = \int_0^{2\pi} \frac{\D\theta}{\sqrt{2\pi}}e^{i(k+1)\theta}\ket{e^{i\theta}}
    = \ket{k+1}.
\end{equation} 
The angular momentum operator $\epsilon$ generates rotations and its exponential $X(\lambda) = e^{-i\lambda\epsilon}$ thus acts as an angle shift operator,
\begin{equation}
    X(\lambda) \ket{e^{i\theta}} =  \frac{1}{\sqrt{2\pi}}\sum_{k\in\mathbb{Z}}e^{-ik(\theta+\lambda)}\ket{k}
    =\ket{e^{i(\theta + \lambda)}}.   
\end{equation}
The operators $\epsilon$ and $U$ satisfy the canonical commutation relation $[\epsilon, U] = U$.
In exponential form, this leads to the $\operatorname{U}(1)$ braiding relation
\begin{equation}
\label{eq:U(1)_braiding}
    UX(\lambda) = e^{i\lambda}X(\lambda)U.
\end{equation}

\paragraph{The quantum rotor as a qudit limit for $D\rightarrow \infty$.} 
\label{app:rotor_qudit_limit}
The planar rotor can be seen as a continuous limit of a $D$-dimensional qudit as described in \cite{Albert_2017}. The Hilbert space of a qudit is $\mathbb{C}^D$, and in terms of the computational basis $\{\ket{j}\}_{j\in\mathbb{Z}_D}$, we can generalize the $X$ and $Z$-operators on qubits to
\begin{equation}
    X_D = \sum_{j=0}^{D-1}\ket{j+1\mod D}\bra{j}, \quad Z_D = \sum_{j=0}^{D-1}\omega_D^j\ket{j}\bra{j}
\end{equation}
where $\omega_D = e^{\frac{2\pi i}{D}}$.
We can interpret the computational basis as labelling the cyclic position states on a discrete circle. On the circle, we define the angle operator
\begin{equation}
    \phi\ket{j} = \frac{2\pi j}{D}\ket{j}.
\end{equation}
In the Fourier transformed basis $\ket{k} = \frac{1}{\sqrt{2\pi}}\sum_je^{i\frac{2\pi j}{D}k}\ket{j}$, we define the momentum operator
\begin{equation}
    \mu\ket{k} = k{\ket{k}}.
\end{equation}
The qudit Pauli operators are exponentials of these two, $X_D = e^{-i\frac{2\pi}{D}\mu}$ and $Z_D = e^{i\phi}$ and implement translations and momentum shifts, respectively. As we let the states on the circle go from discrete to continuous, the computational basis becomes the $\operatorname{U}(1)$-basis $\{\ket{e^{i\theta}}\}_{\theta\in[0, 2\pi)}$ and the discrete angle operator transitions towards the continuous one, $\phi \rightarrow \Theta$. Thus, we also have $Z_D \rightarrow U$. The momentum basis becomes the Fourier basis $\{\ket{k}\}_{k\in\mathbb{Z}}$ and we see that $\mu\rightarrow\epsilon$. Lastly, $X_D$ is replaced by the continuous translation operator $X(\lambda) = e^{-i\lambda\epsilon}$. The qudit relation $Z_DX_D = e^{\frac{2\pi i}{D}}X_DZ_D$ translates to the $\operatorname{U}(1)$ braiding relation \labelcref{eq:U(1)_braiding}. To summarize, we find
\begin{center}
    \begin{tabular}{c c c}
        $\mathbb{C}^D$ & $\overset{D\rightarrow\infty}{\longrightarrow}$ & $L^2(\operatorname{U}(1))$\\
        $\{\ket{j}\}_{j\in\mathbb{Z}_D}$ & $\rightarrow$ & $\{\ket{e^{i\theta}}\}_{\theta\in[0, 2\pi)}$\\
        $\ket{k} = \frac{1}{\sqrt{2\pi}}\sum_je^{i\frac{2\pi j}{D}k}\ket{j}$ & $\rightarrow$ & $\ket{k} =\int_0^{2\pi} \frac{\D\theta}{\sqrt{2\pi}}e^{ik\theta}\ket{e^{i\theta}}$\\
        $\phi$&$ \rightarrow$ & $\Theta$\\
        $\mu$ & $\rightarrow$ & $\epsilon$ \\
        $Z_D$ & $\rightarrow$ & $U$ \\
        $X_D$ & $\rightarrow$ & $X(\lambda)$
    \end{tabular}.
\end{center}

Note that it is also possible to take a limit where the spectrum of $Z$ becomes continuous and $X$ stays discrete. The roles are swapped by a Fourier transformation and this limit is equivalent to the one outlined above.

\paragraph{Homological rotor codes.}

Qubit (and qudit) CSS codes are stabilizer codes with the special property that the stabilizers are either pure $X$-type or pure $Z$-type operators. The generators are usually indicated using matrices. This is also the case for the rotor codes\footnote{In \cite{Vuillot_2024}, the opposite rotor limit to the one outlined above is taken. To follow our convention, we simply swap the roles of $X$ and $U$.} in \cite{Vuillot_2024}.

The stabilizers of the code space $\HS_\code\subset\HS_\physical = \HS_\rot^{\otimes n}$ are determined by full-rank matrices $H_X \in \mathbb{Z}^{r_x\times n}, H_U \in \mathbb{Z}^{r_u\times n}$ which satisfy $H_U H_X^T = 0$. The rows of these matrices indicate the $X$-type and $U$-type stabilizer generators
\begin{equation}
    S_{X,i} = \bigotimes_{j = 1}^{n}X(H_{X, ij}), \quad S_{U, i} = \bigotimes_{j = 1}^nU^{H_{U, ij}}.
\end{equation}
These generate the $X$-type stabilizer group $\mathcal{S}_X$ and the $U$-type stabilizer group $\mathcal{S}_U$. The full stabilizer group can be written as
\begin{equation}
    \mathcal{S} = \left\{S_X(\bm{\lambda})S_U(\bm{m})  \, | \,  \bm{\lambda}\in [0, 2\pi)^{r_x}, \bm{m}\in\mathbb{Z}^{r_u}\right\} = \mathcal{S}_X\times\mathcal{S}_U,
\end{equation}
where $S_X(\bm{\lambda}) = \prod_{i=1}^{r_x} S_{X,i}^{\lambda_i}$ and similarly $S_U(\bm{m}) = \prod_{i = 1}^{r_u} S_{U,i}^{m_i}$, and the condition $H_UH_X^T = 0$ ensures that $\mathcal{S}$ is Abelian.

The logical $U$-operators are
\begin{equation}
    \Log_U(\HS_\code) = \left.\left\{ A = \bigotimes\nolimits_{j=1}^n U_j^{m_j} \ \middle| \ [A, S_X] = 0 \ \forall S_X\in\mathcal{S}_X\right\} \middle/ \mathcal{S}_U\right.
\end{equation}
i.e., the quotient by $\mathcal{S}_U$ of the $U$-type operators $\bigotimes_{j=1}^n U^{m_j}$ which commute with $\mathcal{S}_X$. In general, $\HS_\code$ consists of $k$ logical rotors and $k'$ qudits of respective dimension $d_i$, such that $\Log_U(\HS_\code) \cong \mathbb{Z}^k\oplus\mathbb{Z}_{d_1}\oplus \dots \oplus\mathbb{Z}_{d_{k'}}$. The logical $X$-operators $\Log_X(\HS_\code)$ also match this decomposition. In \cite{Vuillot_2024}, this structure of the encoded data is derived from the homological properties of the code due to the matrices $H_U$ and $ H_X$.

\paragraph{Code distance.}
The $U$-distance of the code is the minimum weight of any representative of a non-trivial logical $U$-operator
\begin{equation}
    d_{U,w} = \min \{w_U(P) \, | \,  \ [I]\neq[P]\in \Log_U(\HS_\code) \}.
    \label{eq:U_distance}
\end{equation}
The weight function $w_U$ can be chosen differently depending on the context. Counting non-identity tensor factors, we get
\begin{equation}
    w_U(U^{m_1}\otimes U^{m_2}\otimes \dots \otimes U^{m_n}) = |\{m_i \neq 0\}|,
    \label{eq:weight_function_1}
\end{equation}
which generalizes the usual definition for the weight of qudit Pauli operators to the $\operatorname{U}(1)$-case. In \cite{Vuillot_2024}, the weight function
\begin{equation}
    \tilde{w}_U(U^{m_1}\otimes U^{m_2}\otimes \dots \otimes U^{m_n}) = \sum_{i=1}^n |m_i|
    \label{eq:weight_function_2}
\end{equation}
was used to account for a noise model which leads to an error $U^m$ with probability $p^m$.

The definition of the $X$-distance requires a more careful treatment due to the continuous nature of the parameter $\lambda$. Since this is not important for the present work, we refer to \cite{Vuillot_2024} for details.

\paragraph{Syndrome measurement.}
For an error set $\mathcal{E}$ consisting of strings of $U$- and $X$-type operators, we can find error syndromes using \cref{eq:U(1)_braiding}. For a generator $S_{X}$ of $\mathcal{S}_X$ of the form $S_{X} = e^{-\sum_jl_j\epsilon_j}$, the error state $U_j^m\ket{\psi}_\code$ is an eigenstate with eigenvalue $e^{-il_jm}$,
\begin{equation}
    S_{X}U_j^m\ket{\psi}_\code = e^{-il_jm}U_j^m\ket{\psi}_\code.
\end{equation}
For a generator $S_{U}$ of $\mathcal{S}_U$ of the form $S_{U} = \bigotimes_{j}U_j^{k_j}$, similarly, $X_j(\lambda)\ket{\psi}_\code$ is an eigenstate with eigenvalue $e^{ik_j\lambda}$,
\begin{equation}
    S_{U}X_j(\lambda)\ket{\psi}_\code = e^{ik_j\lambda}X_j(\lambda)\ket{\psi}_\code.
\end{equation}

Since the stabilizers are not self-adjoint, the syndrome needs to be measured through modified operators. For the $X$-type stabilizers, one can measure $\sum_{j=1}^n l_j\epsilon_j$ instead of $S_X = e^{-\sum_jl_j\epsilon_j}$, which results in an eigenvalue of $m$ instead of $e^{-il_jm}$. For the $U$-type stabilizers, a suitable choice is given by
\begin{equation}
    (S_{U,i} + S_{U,i}^\dagger)/2 \quad \text{and} \quad (S_{U,i} - S_{U,i}^\dagger)/(2i),
\end{equation}
such that the eigenvalues $\cos(l\lambda)$ and $\sin(-l\lambda)$ can be used to determine the syndrome $e^{-ilm}$.

In the $\operatorname{U}(1)$ case, as for qubits, the error set $\mathcal{E}$ satisfies the Knill-Laflamme conditions if two error operators $E_i, E_j \in \mathcal{E}$ either have different syndromes or combine to a stabilizer $E_i^\dagger E_j\in\mathcal{S}$. Even with infinite-dimensional spaces and potentially continuous error indices, this is sufficient for $\mathcal{E}$ being correctable on all of $\HS_\code$ (for details, see \cite[Section V]{BKK_2009_QEC_inf_dim}).

\section{Details on Lattice Quantum Electrodynamics}

\subsection{The Continuum Limit of the Pure Gauge Hamiltonian}
\label{app:continuum_limit}

Recall that the Kogut-Susskind Hamiltonian of the pure gauge system \labelcref{eq:KS_ham_pg} is
\begin{equation}
    H = \frac{g^2}{2a}\sum_{l\in\mathcal{L}}\epsilon_l^2 - \frac{1}{ag^2}\sum_{p_{ij}(v)\in\mathcal{P}} \cos\Theta_{ij}(v),
    \label{eq:KS_ham_pg_1}
\end{equation}
This form is related to QED in the following way: The angular operators and momentum operators at the link $l=[v,v+e_i]$ take the role of the $i$-th component of respectively the electromagnetic potential $A_i$ and the electric field $E_i$ in the position $x_v$ associated with the vertex $v$, i.e.,
\begin{equation}
\label{eq:discreteoperatorstofields}
    \Theta_{[v, v+e_i]} = agA_i(x_v), \quad \epsilon_{[v, v+e_i]} = \frac{a^2}{g}E_i(x_v).
\end{equation}
The discrete curl of the angular operators is identified as the magnetic field perpendicular to it, i.e.,
\begin{equation}
    \Theta_{jk}(v) = a^2gB_i(x_v), \quad  \Theta_{ki}(v) = a^2gB_j(x_v), \quad  \Theta_{ij}(v) = a^2gB_k(x_v) .
\end{equation}

To see that both the Hamiltonian \labelcref{eq:KS_ham_pg_1} and these identifications actually describe QED, we consider the continuum limit \cite{KS1975}. We thus let $a\rightarrow 0$ and replace the sum by an integral $\sum \rightarrow \int\frac{\D^3 \vec{x}}{a^3}$. Then, we obtain
\begin{equation}
    \frac{g^2}{2a}\sum_{l\in\mathcal{L}}  \epsilon^2_l \rightarrow \frac{1}{2}\int \D^3 \vec{x} \ \vec{E}^2(x),
\end{equation}
and
\begin{equation}
    -\frac{1}{ag^2}\sum_{p_{ij}(v)\in\mathcal{P}}\cos\Theta_{ij}(v) \rightarrow \frac{1}{2}\int \D^3 \vec{x} \ \vec{B}^2(\vec{x}) + \textup{const}.
\end{equation}
Up to an additive constant, we recover the Hamiltonian of the electromagnetic field.

\subsection{Gauss's law and Gauge Transformations}
\label{app:gauss_law}

\paragraph{Gauss's law generates gauge transformations.}
On the lattice, the gauge transformation $\psi(\vec{x})\mapsto e^{-i\lambda(\vec{x})}\psi(\vec{x})$ translates straightforwardly to $\psi_v \mapsto e^{-i\lambda_v}\psi_v$.
By \cref{eq:discreteoperatorstofields}, discretizing $A_i(\vec{x}) \mapsto A_i(\vec{x}) + \frac{1}{g}\partial_i\lambda(\vec{x})$ yields
\begin{equation}
    \frac{1}{ag}\Theta_{[v, v+e_i]}\mapsto \frac{1}{ag}\Theta_{[v, v+e_i]} + \frac{1}{g}\frac{\Delta_{v, v+e_i}\lambda}{a},
\end{equation}
which corresponds to \cref{eq:discrete_gauge_transformation_analogue}.

These transformations are indeed generated by the constraint $\mathcal{C}_v$ in \cref{eq:gauss_law_constraint}.
Due to the anti-commutation relation $\{\psi_v, \psi_v^\dagger\} = 1$ and $\psi_v\psi_v = 0$, we find
\begin{equation}
    [\rho_v, \psi_v] = [\psi_v^\dagger\psi_v, \psi_v] = -\psi_v.
\end{equation}
This exponentiates to the relation
\begin{equation}
    e^{-i\lambda_v\mathcal{C}_v}\psi_ve^{i\lambda_v\mathcal{C}_v} = e^{i\lambda_v\rho_v}\psi_v e^{-i\lambda_v\rho_v} = e^{-i\lambda_v}\psi_v.
\end{equation}
The gauge transformation $\prod_{v'\in\mathcal{V}} e^{i\lambda_{v'}\mathcal{C}_{v'}}$ acts on a link state $\ket{e^{i\theta}}_{[v, v+e_i]}$ as 
\begin{equation}
    \prod_{v'\in\mathcal{V}} e^{i\lambda_{v'}\mathcal{C}_{v'}}\ket{e^{i\theta}}_{[v, v+e_i]} = \ket{e^{i(\theta + \Delta_{v, v+e_i}\lambda)}}_{[v, v+e_i]}.
\end{equation}
Since this is an eigenstate of $\Theta_{[v, v+e_i]}$, we find
\begin{equation}
    \prod_{v'\in\mathcal{V}} e^{-i\lambda_{v'}\mathcal{C}_{v'}} \circ \Theta_{[v, v+e_i]} \circ \prod_{v'\in\mathcal{V}} e^{i\lambda_{v'}\mathcal{C}_{v'}} = \Theta_{[v, v+e_i]} + \Delta_{v, v+e_i}\lambda.
\end{equation}

\paragraph{The geometrical picture.}
The gauge structure on the lattice can be interpreted geometrically \cite{Kogut1979}: The local phase of matter fields on each site of the lattice is well-defined only with respect to a reference frame setting the unit-phase. Since one cannot make sense of the phase on two sites without relating their local frames, the link operators $U_{[v, v']}$ indicate the difference of the frames on their adjacent sites $v$ and $v'$. In other words, the gauge field ``mediates'' the description of such degrees of freedom by containing the information about the relative orientation of the local frames. The local references are arbitrary, thus each vertex $v$'s frame can be independently rotated by an angle $\lambda_v$ without changing the physical content, leading to a local $\operatorname{U}(1)$-gauge transformation $\psi_v\mapsto e^{-i\lambda_v}\psi_v$. Accordingly, the relative angle $\Theta_{[v,v']}$  is transformed by the difference $\Delta_{v, v'} \lambda = \lambda_{v'}-\lambda_v$, i.e., $\Theta_{[v,v']} \mapsto \Theta_{[v,v']} + \Delta_{v, v'}\lambda$ (see \cref{fig:gauge-transformation_image}). 

To transport a charge between sites, the phase has to be adjusted by the relative orientation indicated by the links along the traveled path. In this way, the link operators $U_l$ function as a parallel transport between the adjacent sites, a discrete version of the usual notion in differential geometry. The parallel transport along a path $\gamma$ is given by the Wilson line $W_\gamma$ introduced in \cref{eq:wilson_line}

\begin{figure}[h]
    \centering
    \includegraphics[width=0.6\linewidth]{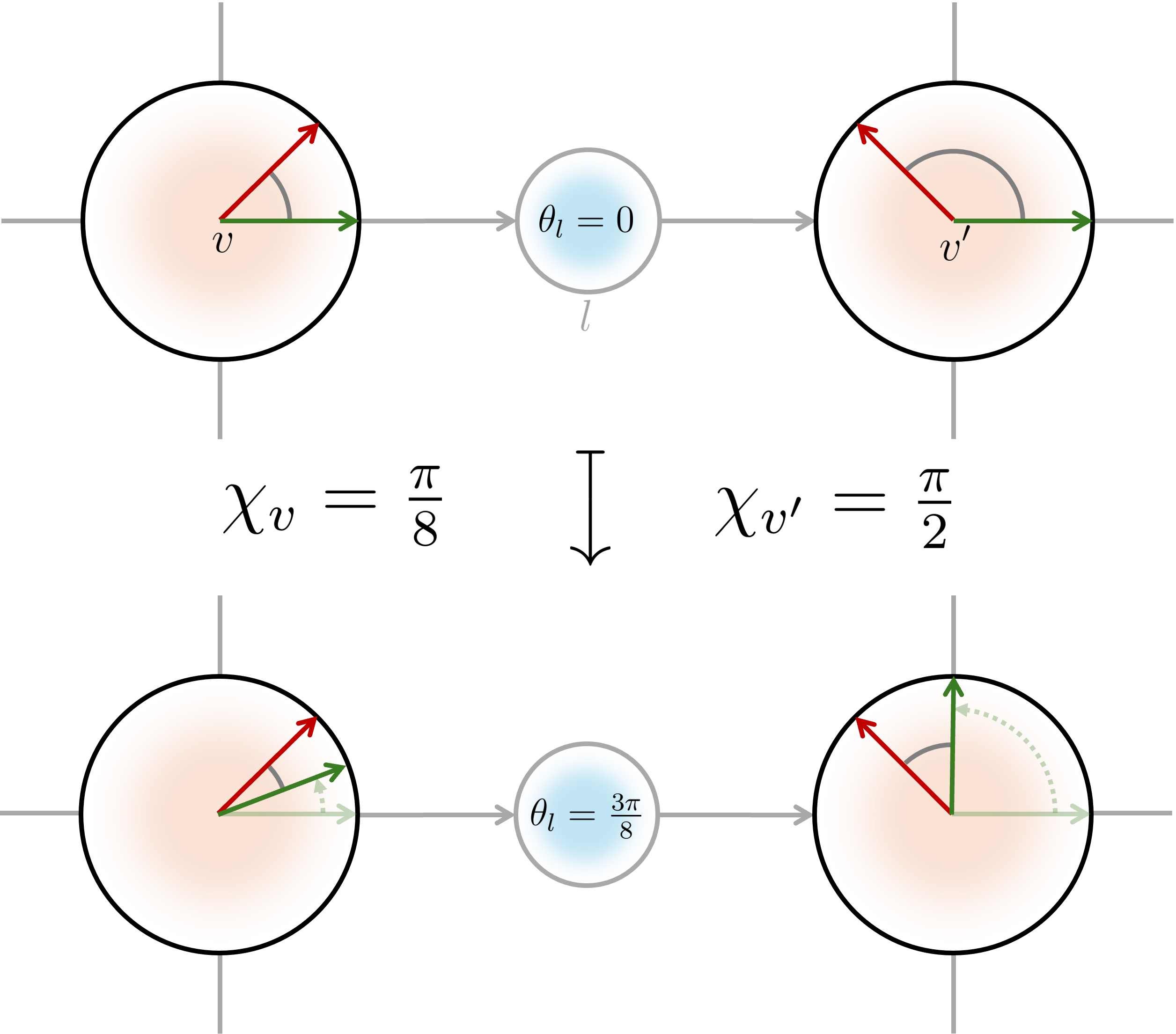}
    \caption{On the sites $v$, $v'$, the matter field operators have a phase degree of freedom (red arrow), which we picture to be indicated with respect to a reference frame (green arrow). The angle $\theta_l$ of the link $l$ measures the relative angle between the reference frames on $v$ and $v'$. Locally rotating the frames on $v, v'$ by $\lambda_v, \lambda_{v'}$ changes the phase of the field by $-\lambda_v, -\lambda_{v'}$ and the angle on $l$ by $\lambda_{v'} - \lambda_v$. }
    \label{fig:gauge-transformation_image}
\end{figure}

\subsection{Staggered Fermions in the Continuum.}
\label{app:staggered_continuum}
When naively discretizing a fermionic field on the lattice, the problem of fermion doubling arises \cite{Susskind_1997}: In the continuum limit, the model describes $2^d$ fermion {\em tastes} (where $d$ is the spatial lattice dimension). Staggered fermions reduce this issue to a degeneracy of $2^{\frac{d-1}{2}}$, and for $d=3$ produce only $2$ tastes. To see how exactly these arise, we will consider the taste basis, following \cite[App. B]{CPS_2025}.

The taste basis describes two Dirac fermions by the $8$ degrees of freedom that live on the corners of a lattice cube. These corners are parametrized by $\{2v + A \, | \,  v\in\mathcal{V}, \ A\in \{0, 1\}^3\}$. Now, for any $w = 2v \in \mathcal{V}$ we define a Dirac matrix field $\Psi_{w}$ by
\begin{equation}
    \Psi_w = \frac{1}{\sqrt{2}}\sum_{A\in\{0, 1\}^3}\alpha^{A}\psi_{w+A},
\end{equation}
where $\alpha^{A}=\alpha_1^{A_1}\alpha_2^{A_2}\alpha_3^{A_3}$. The matrices $\alpha_i= \gamma_0\gamma_i$ are products of the $4\times 4$ gamma matrices $\gamma_0 = \sigma_3\otimes I$, $\gamma_i = i\sigma_2\otimes\sigma_j$, where $\sigma_{1, 2, 3}=X,Y,Z$ denote the Pauli operators. One can show that rewriting the Hamiltonian \labelcref{eq:full_hamiltonian} in terms of $\Psi_v$ recovers the Hamiltonian of two Dirac fields in the continuum limit. 

The two Dirac fermions are contained in the columns of $\Psi_w$. Due to the form of the $\alpha_i$-matrices, $\Psi_w$ is of the block-form
\begin{equation}
    \Psi_w = I\otimes M_w + \sigma_1\otimes N_w,
\end{equation}
and the upper and lower components of the two spinors live in $M_w$ and $N_w$, respectively. Therefore, one can use the first two indices $a = 1, 2$ of $\Psi_{w,ba}$ for the two tastes, while $b$ is the spin-index. 

In order to relate the matrix field operators $\Psi_w, \Psi_w^\dagger$ to the correctable errors in \cref{prop:correctable_matter_errors}, note first that $(\alpha^A)^\dagger = (-1)^{A_1A_2 + A_1A_3 + A_2A_3}\alpha^A$. Therefore, we find that
\begin{equation}
    e^{i\theta_w}\Psi_w + e^{-i\theta_w}\Psi_w^\dagger = \frac{1}{\sqrt{2}}\sum_{A\in\{0, 1\}^3}\alpha^{A}(e^{i\theta_w}\psi_{w+A} +(-1)^{A_1A_2 + A_1A_3 + A_2A_3}e^{-i\theta_w} \psi_{w+A}^\dagger).
\end{equation}
Each entry in the matrix is thus an operator of the form $e^{i\theta_w}\psi_{w+A}+e^{-i\theta_w}\psi_{w+A}$ if $A$ has at least two zero-entries, and $-i\left(e^{i(\theta_w + \pi/2)}\psi_{w+A}+e^{-i(\theta_w + \pi/2)}\psi_{w+A}\right) = e^{i\theta_w}\psi_{w+A}-e^{-i\theta_w}\psi_{w+A}$ else. Therefore, we can find a correctable error set on the staggered fermions as in \cref{prop:correctable_matter_errors} that gives rise to the correctable error set in terms of the taste basis,
\begin{equation}
    \left\{{\bigotimes\nolimits_{w = 2v}}(e^{i\theta_w}\Psi_w + e^{-i\theta_w}\Psi_w^\dagger) \ \middle| \ \forall w: \theta_w \in [0, 2\pi) \right\}.
\end{equation}

\section{Reduced Subspace and Gauge-Invariant Operators of Staggered Fermion Lattice QED}
\label{app:staggered_fermion_lqed_operators}
Consider the fermionic field QRF $\tilde{R}$ from \cref{thm:matter_qrf}. To understand the physical degrees of freedom encoded in $\HS_\phys$, we can consider the image of the reduction maps $\mathcal{R}_{\tilde{R}}^{\bm{\lambda}}$. Since these are isometries, their image is unitarily equivalent to $\HS_\phys$. To prove \cref{prop:electric_basis}, we calculate
\begin{align}
    \Pi_{\tilde{S}|\tilde{R}}^{\bm{\lambda}} &= \mathcal{R}_{\tilde{R}}^{\bm{\lambda}}\circ (\mathcal{R}_{\tilde{R}}^{\bm{\lambda}})^\dagger \\
    &=2^{|\mathcal{V}|}\left(\bra{\bm{\lambda}}_{\tilde{R}}\otimes I_{\tilde{S}}\right)\Pi_\phys \left(\ket{\bm{\lambda}}_{\tilde{R}}\otimes I_{\tilde{S}}\right) \\
    &= 2^{|\mathcal{V}|}\left(\bra{\bm{\lambda}}_{\tilde{R}}\otimes I_{\tilde{S}}\right)\prod_{v\in\mathcal{V}}\left(\int_0^{2\pi}\frac{\D\eta_v}{2\pi}e^{i\eta_v(\mathcal{C}_v^{\mathcal{L}}-\psi_v^\dagger\psi_v +j_v)}\right) \left(\ket{\bm{\lambda}}_{\tilde{R}}\otimes I_{\tilde{S}}\right) \\
    &= 2^{|\mathcal{V}|}\prod_{v\in\mathcal{V}}\bra{\lambda_v}_v\left(\int_0^{2\pi}\frac{\D\eta_v}{2\pi}e^{i\eta_v(\mathcal{C}_v^{\mathcal{L}}-\psi_v^\dagger\psi_v +j_v)}\right)\ket{\lambda_v}_v  \\
    &=  2^{|\mathcal{V}|}\prod_{v\in\mathcal{V}}\left(\int_0^{2\pi}\frac{\D \eta_v}{2\pi}e^{i\eta_v(\mathcal{C}_v^{\mathcal{L}} +j_v)}\braket{\lambda_v}{\lambda_v + \eta_v} \right) \\
    &= 2^{|\mathcal{V}|}\prod_{v\in\mathcal{V}}\left(\int_0^{2\pi}\frac{\D \eta_v}{2\pi}e^{i\eta_v(\mathcal{C}_v^{\mathcal{L}} +j_v)}\frac{1+e^{-i\eta_v}}{2} \right) \\
    &= \prod_{v\in\mathcal{V}}\left(\int_0^{2\pi}\frac{\D \eta_v}{2\pi}\left(e^{i\eta_v(\mathcal{C}_v^{\mathcal{L}} +j_v)}+e^{i\eta_v(\mathcal{C}_v^{\mathcal{L}} - 1+j_v)}\right) \right)\\
    &= 
    \prod_{v\in\mathcal{V}}\left(\Pi_{(-j_v)} + \Pi_{(1-j_v)}\right).
\end{align}
Since this is independent of $\bm{\lambda}$, we denote the projector by $\Pi_{\tilde{S}|\tilde{R}}$.
The calculation behind \cref{eq:reduced_hopping} proceeds similarly:
\begin{align}
    \mathcal{R}_{\tilde{R}}^{\bm{\lambda}}\circ\psi_v^\dagger U_{[v, v']}\psi_{v'} \circ (\mathcal{R}_{\tilde{R}}^{\bm{\lambda}})^\dagger &= 2^{|\mathcal{V}|}\left(\bra{\bm{\lambda}}_{\tilde{R}}\otimes I_{\tilde{S}}\right)\psi_v^\dagger U_{[v, v']}\psi_{v'}\Pi_\phys \left(\ket{\bm{\lambda}}_{\tilde{R}}\otimes I_{\tilde{S}}\right)\\
    &=2^{|\mathcal{V}|}U_{[v, v']}\left(\bra{\bm{\lambda}}_{\tilde{R}}\otimes I_{\tilde{S}}\right)\psi_v^\dagger\psi_{v'}\Pi_\phys \left(\ket{\bm{\lambda}}_{\tilde{R}}\otimes I_{\tilde{S}}\right)\\
    &=2^{|\mathcal{V}|}U_{[v, v']}\prod_{w\in\mathcal{V}}\bra{\lambda_w}_w\psi_v^\dagger\psi_{v'}\left(\int_0^{2\pi}\frac{\D\eta_w}{2\pi}e^{i\eta_w(\mathcal{C}_w^{\mathcal{L}}-\psi_w^\dagger\psi_w +j_w)}\right)\ket{\lambda_w}_w\\
    &= 2^{|\mathcal{V}|}U_{[v, v']}\int_0^{2\pi}\frac{\D\eta_v}{2\pi}e^{i\eta_v(\mathcal{C}_v^{\mathcal{L}}+j_v)}\braket{0}{\lambda_v+\eta_v}\\
    & \qquad \times\int_0^{2\pi}\frac{\D\eta_{v'}}{2\pi}e^{i\eta_{v'}(\mathcal{C}_{v'}^{\mathcal{L}} +j_{v'})}e^{i\lambda_{v'}}\braket{1}{\lambda_{v'}+\eta_{v'}}\\
    &\qquad\times\prod_{w\neq v, v'}\left(\int_0^{2\pi}\frac{\D\eta_w}{2\pi}e^{i\eta_w(\mathcal{C}_w^{\mathcal{L}} +j_w)}\braket{\lambda_w}{\lambda_w+\eta_w}\right) \\
    &= U_{[v, v']}\Pi_{(-j_v)}\Pi_{(1-j_{v'})}\prod_{w\neq v, v'}\left(\Pi_{(-j_w)} + \Pi_{(1-j_w)}\right) \\
    &= U_{[v, v']}\Pi_{(-j_v)}\Pi_{(1-j_{v'})}\Pi_{\tilde{S}|\tilde{R}}.
\end{align}
In the very last step, we used the fact that $\Pi_{(-j_v)}^2 = \Pi_{(-j_v)}$ and $\Pi_{(-j_v)}\Pi_{(1-j_v)} = 0$ (and the same for $v'$) to extend the product from $w\neq v, v'$ to all $w\in\mathcal{V}$. To find \cref{eq:reduced_number}, we can go over all the same steps in a similar way.

\section{Proofs}
\label{app:proofs}

\corrAq*
\begin{proof}
    To see that this operator maps $\HS_\phys \to \HS_{\bm{q}}$, note that
    \begin{equation}
        A_{\bm{q}}\Pi_\phys = \frac{1}{\sqrt{|G|}}\int \D g \overline{\chi}_{\bm{q}}(g)U_{RS}(g)\mathcal{P}_R^eU_{RS}(g)^\dagger\Pi_\phys = \sqrt{|G|}\Pi_{\bm{q}}\mathcal{P}_R^e\Pi_\phys.
        \label{eq:A_q_def}
    \end{equation}
    The operator $A_{\bm{q}}$ is indeed unitary, which we show by direct calculation:
    \begin{align}
        A_{\bm{q}}^\dagger A_{\bm{q}} &= \frac{1}{|{G}|}\int \D g \D h\ \chi_{\bm{q}}(g)\overline{\chi}_{\bm{q}}(h)  \mathcal{P}_R^g\mathcal{P}_R^h \\
        &= \int \D g \D h \ \chi_{\bm{q}}(g)\overline{\chi}_{\bm{q}}(h)\delta(g, h)\left(\ket{\phi(g)}\bra{\phi(g)}_R\otimes I_S\right)\\
        &= \int \D g \ |\chi_{\bm{q}}(g)|^2 \left(\ket{\phi(g)}\bra{\phi(g)}_R\otimes I_S\right)= I_{RS} \,.
    \end{align}
    In the last equality, we used the fact that the orientation states form an orthonormal basis of $\HS_R$.
    For the inverse, we find by an almost identical calculation
    \begin{align}
        A_{\bm{q}}A_{\bm{q}}^\dagger &= \frac{1}{|{G}|} \int\D g\D h \ \overline{\chi}_{\bm{q}}(g) \chi_{\bm{q}}(h) \mathcal{P}_R^g\mathcal{P}_R^h \\
        &= \int \D g \D h \ \overline{\chi}_{\bm{q}}(g){\chi}_{\bm{q}}(h)\delta(g, h)\left(\ket{\phi(g)}\bra{\phi(g)}_R\otimes I_S\right)\\
        &= I_{RS} \,.
    \end{align}
    The operators $\{A_{\bm{q}}\}_{\bm{q}}$ are correctable since they are linear combinations of the correctable gauge-fixing errors.  
\end{proof}

\spanningQRF*
\begin{proof}
    The first property is clear by definition. 
    To construct the set of constraints in property 2, we split $T$ into two subtrees $T_+^l$ and $T_-^l$ by removing the link $l$. We label the subtrees such that $l$ is directed outwards from $T_+^l$ and into $T_-^l$. Then we pick $V_l \defeq \{v\in T_-^l\}$ such that
    \begin{equation}
        \mathcal{C}_{V_l, R} = \sum_{v\in V_l}\left(\sum_{l_{\text{out}} = [v, *] \in R}\epsilon_{l_{\text{out}}}-\sum_{l_{\text{in}} = [*, v] \in R}\epsilon_{l_{\text{in}}}\right) = - \epsilon_l.
    \end{equation}
    The last equality follows from the fact that every link inside $T_-^l$ appears once as $l_{\text{out}}$ and once as $l_{\text{in}}$ in the sum and the contributions thus cancel, leaving only the summand on the link $l\in R$ attached to a vertex in $V_l$ but not in $T_-^l$. 
    
    If we choose the lattice $\Gamma$ such that $|\mathcal{V}|$ is finite, then we can show that these constraints generate all of $\mathcal{G}$ by counting them. Indeed, there are $|\mathcal{V}|-1$ such constraints, as many as there are independent generators of $\mathcal{G}$. 

    If $\Gamma$ is infinite, then we can still show that the constraints $\mathcal{C}_{V_l}$ generate all of $\mathcal{G}$ by constructing an arbitrary local constraint $\mathcal{C}_v$ form them. This can be done by considering all the links attached to the vertex $v\in\mathcal{V}$ that are contained in $T$. We then claim
    \begin{equation}
        \mathcal{C}_{v} = \sum_{l_{\text{out}} = [v, *] \in T} -\mathcal{C}_{V_{l_{\text{out}}}} + \sum_{l_{\text{in}} = [*, v] \in T} \mathcal{C}_{V_{l_{\text{in}}}}.
        \label{eq:claim_equal_generator}
    \end{equation}
    To show this, we will see that the operator on the right hand side can be written as 
    \begin{equation}
        \sum_{v'\in\mathcal{V}\setminus\{v\}}-\mathcal{C}_{v'},
    \end{equation}
    which proves the claim due to \cref{eq:sum_to_zero}.

    Note that if we cut all the links in $T$ attached to $v$, we end up with a certain number of disconnected subtrees, one per link, and the isolated vertex $v$ (see Fig.~\cref{fig:subtrees}). Now we pick a subtree belonging to a link $l_{\text{out}}$ going out from $v$. In this case, the sum of all Gauss' law operators on the vertices of this subtree is, by definition, $\mathcal{C}_{V_{l_{\text{out}}}}$. For an incoming link $l_{\text{in}}$, we could have also constructed $\mathcal{C}_{V_{l_{\text{in}}}}$ from the subtree that goes into $v$ through $l_{\text{in}}$ by writing it as a sum over the Gauss' law operators on vertices in this subtree with a minus sign. That is, if $\tilde{V}_{l_{\text{in}}}$ is this set of vertices, we have
    \begin{equation}
         \sum_{\tilde{v}\in\tilde{V}_{l_\text{in}}}-\mathcal{C}_{\tilde{v}} = \mathcal{C}_{V_{l_\text{in}}}.
    \end{equation}

    Now, the right hand side of \cref{eq:claim_equal_generator} is a sum over the constraints associated with each of the disconnected subtrees, and each of these constraints is a sum of $-\mathcal{C}_{v'}$ for $v'$ on this subtree. By employing \cref{eq:sum_to_zero}, we have proved the claim.

    Property 3 follows form the previous one since $\tilde{G}_R(\bm{\lambda}) = \bigotimes_{l\in R}e^{-i\lambda_l\epsilon_l}$ and thus 
    \begin{equation}
        \ket{\phi(\bm{\lambda})}_{R} = \bigotimes_{l\in R}\ket{e^{i\lambda_l}}_l.
    \end{equation}

    \begin{figure}
        \centering
        \includegraphics[width=0.4\linewidth]{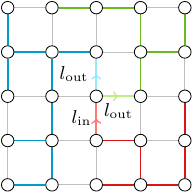}
        \caption{In an infinite lattice, we can split the tree into subtrees by cutting all links attached to a vertex $v$. The operator $-\mathcal{C}_{V_{l_{\text{out}}}}$ is a sum over minus the gauge constraints on all the vertices on the subtree belonging to $l_{\text{out}}$, and $\mathcal{C}_{V_{l_{\text{in}}}}$ is a sum over minus the gauge constraints on the subtree belonging to $l_{\text{in}}$. Adding them all up results in $\mathcal{C}_v$, proving that every gauge transformation can be written in terms of the generators $\mathcal{C}_{V_l}$.}
        \label{fig:subtrees}
    \end{figure}
    
\end{proof}

\holbasis*

\begin{proof}
    Since $(\mathcal{R}_R^{\bm{0}})^\dagger$ is unitary from $\HS_S^{\phys, \bm{0}}$ to $\HS_{\phys}$ and maps the orthonormal basis (in the continuous sense) $\bigotimes_{l\in S}\ket{e^{i\theta_l}}_l$ of $\HS_S$ to the holonomy states $\ket{\{h_l\}}$, these are an orthonormal basis of $\HS_{\phys}$.

    To find their eigenvalues with respect to the fundamental holonomies, note first that $[H_l, \Pi_\phys] = 0$ since they are gauge invariant and that $H_{l, R} = U_l$ while $H_{l, S}$ is a string of $U$-type operators. We can then calculate
    \begin{align}
        H_{l'} \ket{\{h_l\}} &= H_{l'} \cdot (2\pi)^\frac{|\mathcal{V}|-1}{2}\Pi_\mathcal{G} \left(\bigotimes_{l\in R}\ket{e^{i0}}_l\bigotimes_{l\in S}\ket{e^{i\theta_l}}_l \right) \\
        &= (2\pi)^\frac{|\mathcal{V}|-1}{2}\Pi_\mathcal{G} H_{l'}\left(\bigotimes_{l\in R}\ket{e^{i0}}_l\bigotimes_{l\in S}\ket{e^{i\theta_l}}_l \right) \\
        &= (2\pi)^\frac{|\mathcal{V}|-1}{2}\Pi_\mathcal{G} \cdot e^{i\theta_{l'}}\left(\bigotimes_{l\in R}\ket{e^{i0}}_l\bigotimes_{l\in S}\ket{e^{i\theta_l}}_l \right).
    \end{align}
\end{proof}

\matterfieldqrfs*
\begin{proof}
    We will start by considering a single vertex $v\in\mathcal{V}$ and the action of $e^{-i\eta\rho_v}$ on the state $\ket{\lambda}_v = \frac{1}{\sqrt{2}}(\ket{0}_v + e^{-i\lambda}\ket{1}_v)$. We are working with staggered fermions and $\rho_v = \psi^\dagger_v\psi_v - \frac{1}{2}(1-(-1)^{|v|})$. Let us call the parity indicator function $j_v = \frac{1}{2}(1-(-1)^{|v|})$. We then find
    \begin{equation}
        e^{-i\eta\rho_v}\ket{\lambda}_v = e^{i\eta j_v}\frac{1}{\sqrt{2}}(\ket{0}_v + e^{-i(\lambda+\eta)}\ket{1}_v) = e^{i\eta j_v} \ket{\lambda + \eta}_v.
    \end{equation}
    
    Now, moving up to the full orientation states $\ket{\bm{\lambda}}_{\tilde{R}}$ and $G_{\tilde{R}}(\bm{\eta})$, we find that these transform as
    \begin{equation}
        G_{\tilde{R}}(\bm{\eta})\ket{\bm{\lambda}}_{\tilde{R}} = e^{i\sum_{v\in\mathcal{V}}\eta_v j_v}\ket{\bm{\lambda} + \bm{\eta}}_{\tilde{R}}.
    \end{equation}
    They can thus be obtained from the seed state $\ket{\bm{0}}_{\tilde{R}}$, where $\bm{0} = \{0\}_{l\in \tilde{R}}$. This system of orientation states yields a valid QRF because it also resolves identity. On a single site, we have
    \begin{equation}
        \int_0^{2\pi} \D \lambda \ \ket{\lambda}\bra{\lambda}_v  = \int_0^{2\pi} \D \lambda \ \frac{1}{2}(\ket{0}\bra{0}_v+e^{-i\lambda}\ket{1}\bra{0}_v + e^{i\lambda}\ket{0}\bra{1}_v + \ket{1}\bra{1}_v) = \pi I_v.
    \end{equation}
    In total, this results in
    \begin{equation}
        \int_0^{2\pi}\left(\prod_{v\in\mathcal{V}} \D \lambda_v\right) \ket{\bm{\lambda}}\bra{\bm{\lambda}}_{\tilde{R}} = \pi^{|\mathcal{V}|}I_{\tilde{R}}.
        \label{eq:res_of_id_field}
    \end{equation}
\end{proof}

\end{document}